\newcommand{\eps}{\varepsilon}
\newcommand{\Oeps}{O_\eps}
\definecolor{forestgreen}{rgb}{0.13, 0.55, 0.13}
\newcommandx{\atodo}[2][1=]{\todo[linecolor=red,backgroundcolor=forestgreen!25,bordercolor=red,#1]{AF:#2}}
\newcommandx{\rev}[2][1=]{\todo[linecolor=red,backgroundcolor=red!25,bordercolor=red,#1]{Rev: #2}}
\definecolor{Darkblue}{rgb}{0,0,0.4}
\newtheorem{lemma}{Lemma}
\newtheorem{theorem}{Theorem}
\newtheorem{claim}{Claim}
\newtheorem{corollary}{Corollary}
\theoremstyle{definition}
\newcommand{\vpi}{\varphi}
\newcommand{\diam}{{\rm diam}}
\newcommand{\poly}{{\rm poly}}
\title{Constructing Light Spanners Deterministically in Near-Linear Time}
\author[1]{Stephen Alstrup\thanks{Research partly supported by Innovationsfonden DK, DABAI (5153-00004A) and by VILLUM Foundation grant 16582, Basic Algorithms Research Copenhagen (BARC).
}}
\author[1]{Søren Dahlgaard\thanks{Research partly supported by Advanced Grant DFF-0602-02499B from the Danish Council for Independent Research.}}
\author[2]{Arnold Filtser\thanks{Supported in part by the Simons Foundation, ISF grant No. (1817/17), and by BSF grant No. 2015813. The research was done while the author was affiliated with Ben-Gurion University of the Negev, and Columbia University.}}
\author[1]{Morten Stöckel\thanks{Research partly supported by Villum Fonden.}}
\author[1]{Christian Wulff-Nilsen\thanks{Research partly supported by the Starting Grant 7027-00050B from the Independent Research Fund Denmark under the Sapere Aude research career programme.}}
\affil[1]{University of Copenhagen\\\texttt{\{s.alstrup,soerend,most,koolooz\}@di.ku.dk}}
\affil[2]{Bar Ilan University\\\texttt{arnold273@gmail.com}}
\date{}
\begin{document}
	
	\maketitle
\begin{abstract}
Graph spanners are well-studied and widely used both in theory and practice. In a recent breakthrough, Chechik and Wulff-Nilsen \cite{ChechikW16} improved the state-of-the-art for light spanners by constructing a $(2k-1)(1+\eps)$-spanner with $O(n^{1+\nicefrac{1}{k}})$ edges and $\Oeps(n^{\nicefrac{1}{k}})$ lightness. Soon after, Filtser and Solomon \cite{FiltserS16} showed that the classic greedy spanner construction achieves the same bounds. The major drawback of the greedy spanner is its running time of $O(mn^{1+\nicefrac{1}{k}})$ (which is faster than \cite{ChechikW16}). This makes the construction impractical even for graphs of moderate size. 
Much faster spanner constructions do exist but they only achieve lightness $\Omega_\eps(kn^{\nicefrac{1}{k}})$, even when randomization is used. 

The contribution of this paper is deterministic spanner constructions that are fast, and achieve similar bounds as the state-of-the-art slower constructions.   	
Our first result is an $\Oeps(n^{2+\nicefrac{1}{k}+\eps'})$ time spanner construction which achieves the state-of-the-art bounds.
Our second result is an $\Oeps(m + n\log n)$ time construction of a spanner with $(2k-1)(1+\eps)$ stretch, $O(\log k\cdot n^{1+\nicefrac{1}{k}})$ edges and $\Oeps(\log k\cdot n^{\nicefrac{1}{k}})$ lightness. This is an exponential improvement in the dependence on $k$ compared to the previous result with such running time.
Finally, for the important special case where $k=\log n$, for every constant $\eps>0$, we provide an $O(m+n^{1+\eps})$ time construction that produces an $O(\log n)$-spanner with $O(n)$ edges and $O(1)$ lightness which is asymptotically optimal. This is the first known sub-quadratic construction of such a spanner for any $k = \omega(1)$.

To achieve our constructions, we show a novel deterministic incremental
approximate distance oracle. Our new oracle is crucial in our construction,
as known randomized dynamic oracles require the assumption of a
non-adaptive adversary. This is a strong assumption, which has seen recent
attention in prolific venues. Our new oracle allows the order of the edge
insertions to not be fixed in advance, which is critical as our spanner
algorithm chooses which edges to insert based on the answers to distance
queries.
We believe our new oracle is of independent interest.
\end{abstract}

\tableofcontents
\newpage
\section{Introduction}
A fundamental problem in graph data structures is compressing graphs such that
certain metrics are preserved as well as possible. A popular way to achieve
this is through \emph{graph spanners}. Graph spanners are sparse subgraphs that
approximately preserve pairwise shortest path distances for all vertex pairs.
Formally, we say that a subgraph $H =(V,E',w)$ of an edge-weighted undirected
graph $G = (V,E,w)$ is a \emph{$t$-spanner} of $G$ if for all $u,v \in V$ we
have $d_H(u,v) \leq t \cdot d_G(u,v)$, where $d_X$ is the
shortest path distance function for graph $X$ and $w$ is the edge weight
function. Under such a guarantee, we say that our graph spanner $H$ has
\emph{stretch} $t$. In the following, we assume that the underlying graph $G$
is connected; if it is not, we can consider each connected component separately
when computing a spanner.

Graph spanners originate from the 80's \cite{PelegS89,PelegU87} and have seen
applications in e.g. synchronizers~\cite{PelegU87}, compact routing
schemes~\cite{ThorupZ01,PelegU88,Chechik13a},
broadcasting~\cite{FarleyPZW04}, and distance oracles~\cite{CWN12}.

The two main measures of the sparseness of a spanner $H$ is the size (number of
edges) and the \emph{lightness}, which is defined as the ratio
$w(H)/w(MST(G))$, where $w(H)$ resp.~$w(MST(G))$ is the total weight of edges
in $H$ resp.~a minimum spanning tree (MST) of $G$. It has been
established that for any positive integer $k$, a $(2k-1)$-spanner of $O(n^{1+1/k})$ edges exists for any $n$-vertex graph~\cite{Awerbuch:1985:CNS:4221.4227}.
This stretch-size tradeoff is widely believed to be optimal due to a matching lower bound implied by Erd\H{o}s'
girth conjecture~\cite{Erdos64}, and there are several papers concerned with
constructing spanners efficiently that get as close as possible to this lower bound~\cite{ThorupZ05,BaswanaS07,RodittyZ11}.

Obtaining spanners with small lightness (and thus total weight) is
motivated by applications where edge weights denote e.g.~establishing cost.
The best possible total weight that can be achieved in order to ensure finite stretch is the weight of an MST, thus
making the definition of lightness very natural.
The size lower bound of the unweighted case provides a lower
bound of $\Omega(n^{1/k})$ lightness under the girth conjecture, since $H$ must
have size and weight $\Omega(n^{1+1/k})$ while the MST has size and weight
$n-1$. Obtaining this lightness has been the subject of an active line of
work~\cite{Althofer1993,ChandraDNS92,ElkinNS14,ChechikW16,FiltserS16}.
Throughout this paper we say that a spanner is \emph{optimal} when its bounds
coincide asymptotically with those of the girth conjecture.
Obtaining an efficient spanner construction with optimal stretch-lightness
trade-off remains one of the main open questions in the field of graph
spanners.

\subparagraph*{Light spanners.}
Historically, the main approach of obtaining a spanner of bounded lightness has
been through different analyses of the classic \emph{greedy spanner}.
Given $t\geq 1$, the greedy $t$-spanner is constructed as
follows: iterate through the edges in non-decreasing order of weight and add an
edge $e$ to the partially constructed spanner $H$ if the shortest path distance
in $H$ between the endpoints of $e$ is greater than $t$ times the weight of
$e$. The study of this spanner algorithm dates back to the early 90's with its
first analysis by Althöfer et al.~\cite{Althofer1993}.
They showed that this simple procedure with stretch $2k-1$ obtains the optimal
$O(n^{1+1/k})$ size, and has lightness $O(n/k)$. The algorithm was subsequently
analyzed in~\cite{ChandraDNS92,ElkinNS14,FiltserS16} with stretch
$(1+\eps)(2k-1)$ for any $0 < \eps < 1$.
Recently, a break-through result of Chechik and Wulff-Nilsen~\cite{ChechikW16}
showed that a significantly more complicated spanner construction obtains
nearly optimal stretch, size and lightness giving the following theorem.
\begin{theorem}[\cite{ChechikW16}]\label{lem:light_spanner}
    Let $G=(V,E,w)$ be an edge-weighted undirected $n$-vertex graph and let $k$
    be a positive integer. Then for any $0<\eps<1$ there exists a
    $(1+\eps)(2k-1)$-spanner of size $O(n^{1+1/k})$ and lightness
    $\Oeps(n^{1/k})$.\footnote{$\Oeps$ notation hides polynomial factors in
    $1/\eps$.}
\end{theorem}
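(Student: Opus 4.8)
\emph{Proof strategy.} I would establish the theorem through the classical greedy construction, modified so that it always contains a minimum spanning tree. Set $t := (1+\eps)(2k-1)$, compute an MST $T$ of $G$, initialize $H := T$, and then scan the remaining edges in non-decreasing order of weight --- ties broken by a fixed total order --- adding an edge $(u,v)$ of weight $w$ to $H$ exactly when $\delta_H(u,v) > t\cdot w$. Since $w(H) = w(T) + \sum w(e)$ over the added non-tree edges, it suffices to prove (i) $|E(H)| = O(n^{1+1/k})$ and (ii) $\sum_{e \in E(H)\setminus T} w(e) = \Oeps(n^{1/k})\cdot w(T)$; the tree $T$ itself contributes $n-1$ edges and lightness exactly $1$.

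For (i), the first step is to observe that greedy produces a graph of large unweighted girth. If $C$ were a cycle of $H$ with at most $2k$ edges, let $e=(u,v)$ be the last of its edges to be scanned; every other edge of $C$ was scanned earlier (it is strictly lighter, or tied but earlier in the order) and, being in the final $H$, is present when $e$ is scanned, so the rest of $C$ is a $u$--$v$ path in $H$ of length at most $(2k-1)w \le t\,w$, contradicting the rule that added $e$. Hence $H$ has girth at least $2k+1$, and the Moore bound gives $|E(H)| = O(n^{1+1/k})$.

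For (ii), partition the non-tree edges into weight scales $E_i := \{e \in E(H)\setminus T : w(e) \in (2^i, 2^{i+1}]\}$. Fix a scale $i$, and by a greedy traversal of $T$ partition $V$ into $N_i$ clusters, each spanning a subtree of $T$ of diameter (measured along $T$) at most $\eps 2^i/(2k)$. Contract each scale-$i$ cluster and keep only the edges of $E_i$; I claim this multigraph has girth at least $2k+1$. Indeed a shorter cycle would consist of at most $2k$ edges of $E_i$ alternating with intra-cluster $T$-paths, each of length at most $\eps 2^i/(2k)$; letting $e$ be the heaviest such $E_i$-edge (it is scanned after all the others, and the $T$-paths lie in $H$ throughout since $T\subseteq H$), the rest of the cycle is a detour for $e$ of length at most $(2k-1)w(e) + 2k\cdot\eps 2^i/(2k) = (2k-1)w(e) + \eps 2^i \le (2k-1+\eps)\,w(e) \le t\,w(e)$, a contradiction. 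So by the Moore bound the contracted graph has $O(N_i^{1+1/k})$ edges, and $w(E_i) \le 2^{i+1}|E_i| = O(2^i N_i^{1+1/k})$.

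Using the trivial bound $N_i \le n$ this is $O(n^{1/k}\cdot 2^i N_i)$, so the theorem reduces to the telescoping estimate $\sum_i 2^i N_i = \Oeps(w(T))$, and I expect this to be the main obstacle. Re-clustering $V$ from scratch at every scale is too wasteful --- it costs a factor polylogarithmic in the aspect ratio --- so the cluster partitions must be built \emph{hierarchically}, with the scale-$(i{+}1)$ clusters refining to unions of scale-$i$ clusters (consistent because the diameter budgets grow geometrically), and $2^i N_i$ must be charged by an amortized argument in which each unit of MST weight pays only $\Oeps(1)$ in total; this accounting is the technical heart of \cite{ChechikW16} (and of the greedy analysis of \cite{FiltserS16}). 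Granting the estimate, $w(H) = w(T) + \sum_i w(E_i) = \Oeps(n^{1/k})\cdot w(T)$, so $H$ has lightness $\Oeps(n^{1/k})$, which together with the size bound proves the theorem. A variant that may make the weight bookkeeping more transparent is to abandon greedy and build $H$ directly scale by scale, adding at level $i$ a sparse spanner of the current cluster graph; the construction then dictates the same telescoping inequality to verify.
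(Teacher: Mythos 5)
The paper never proves this statement: it is imported as a black box from \cite{ChechikW16}, with the follow-up remark that \cite{FiltserS16} transfers the same bounds to the greedy spanner. So your attempt has to stand on its own, and it does not close. The size bound is essentially fine (one caveat: in your girth argument a short cycle of $H$ may contain MST edges, which are never ``scanned,'' so to bound the detour by $(2k-1)w(e)$ you must argue the last-scanned non-tree edge $e$ is at least as heavy as \emph{every} edge of the cycle; this needs the MST cycle property, or the observation that plain greedy already adds exactly Kruskal's MST edges). The real problem is the lightness. Your argument reduces it to the ``telescoping estimate'' $\sum_i 2^i N_i = \Oeps(w(T))$, which you explicitly do not prove (``Granting the estimate\dots''), deferring to ``the technical heart of \cite{ChechikW16}.'' That is precisely the content of the theorem, so the proposal proves nothing beyond what was already classical.

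Worse, the estimate as you state it cannot hold. With clusters of $T$-diameter $\Theta(\eps 2^i/k)$ one has, in the worst case, $N_i=\Theta\bigl(\min(n,\,k\,w(T)/(\eps 2^i))\bigr)$, so every nontrivial scale already contributes $\Theta\bigl((k/\eps)\,w(T)\bigr)$ to $\sum_i 2^i N_i$, and there are up to $\Theta(\log(\text{aspect ratio}))$ relevant scales. Even done carefully (using $N_i^{1+1/k}$ per scale rather than degrading to $n^{1/k}N_i$), this style of accounting yields lightness $\Oeps(k\,n^{1/k})$ --- i.e.\ the bounds of \cite{ChandraDNS92,ElkinNS14} --- and shaving the factor $k$ is exactly what \cite{ChechikW16} achieves via its hierarchical heavy/light clustering with a potential-function amortization across levels (alternatively, \cite{FiltserS16} obtains it for greedy by existential optimality, i.e.\ by invoking \cite{ChechikW16} rather than by a direct charging). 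Your closing suggestion to build the spanner scale by scale ``dictates the same telescoping inequality'' and hence faces the same obstruction. In short: the missing lemma is not a routine bookkeeping step but the theorem itself, and in the form you posit it is false.
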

Following the result of \cite{ChechikW16} it was shown by Filtser and
Solomon~\cite{FiltserS16} that this bound is matched by the greedy spanner. 
In fact, they show that the greedy spanner is \emph{existentially
optimal}, meaning that if there is a $t$-spanner construction achieving an
upper bound $m(n,t)$ resp.~$l(n,t)$ on the size resp.~lightness of any
$n$-vertex graph then this bound also holds for the greedy $t$-spanner. In
particular, the bounds in \Cref{lem:light_spanner} also hold for the greedy
spanner.

\subparagraph*{Efficient spanners.}
A major drawback of the greedy spanner is its $O(m\cdot(n^{1+1/k}+n\log n))$
construction time \cite{Althofer1993}. Similarly, Chechik and
Wulff-Nilsen~\cite{ChechikW16} only state their construction time to be
polynomial, but since they use the greedy spanner as a subroutine, it has the
same drawback. Adressing this problem, Elkin and Solomon~\cite{ElkinS16}
considered efficient construction of light spanners. They showed how to
construct a spanner with stretch $(1+\eps)(2k-1)$, size $\Oeps(kn^{1+1/k})$ and
lightness $\Oeps(kn^{1/k})$ in time $O(km + \min(n\log n,m\alpha (n)))$.
Improving on this, a recent paper of Elkin and Neiman~\cite{ElkinN17c} uses
similar ideas to obtain stretch $(1+\eps)(2k-1)$, size $O(\log k\cdot
n^{1+1/k})$ and lightness $O(kn^{1/k})$ in expected time $O(m + \min(n\log
n,m\alpha(n)))$.

Several papers also consider efficient constructions of sparse spanners,
which are not necessarily light. Baswana and Sen~\cite{BaswanaS07} gave a
$(2k-1)$-spanner with $O(kn^{1+1/k})$ edges in $O(km)$ expected time. This was
later derandomized by Roditty et al.~\cite{RodittyTZ05} (while keeping the same
sparsity and running time). Recently, Miller et al.~\cite{MillerPVX15}
presented a randomized algorithm with $O(m+n\log k)$ running time and $O(\log
k\cdot n^{1+1/k})$ size at the cost of a constant factor in the stretch $O(k)$.

It is worth noting that for super-constant $k$, none of the above spanner
constructions obtain the optimal $O(n^{1+1/k})$ size or $O(n^{1/k})$ lightness
even if we allow $O(k)$ stretch. If we are satisfied with nearly-quadratic
running time, Elkin and Solomon~\cite{ElkinS16} gave a spanner with
$(1+\eps)(2k-1)$ stretch, $\Oeps(n^{1+1/k})$ size and $\Oeps(kn^{1/k})$
lightness in $O(kn^{2+1/k})$ time by extending a result of Roditty and
Zwick~\cite{RodittyZ11} who got a similar result but with unbounded lightness.
However, this construction still has an additional factor $k$ in the
lightness. Thus, the fastest known spanner construction obtaining optimal size
and lightness is the classic greedy spanner -- even if we allow $O(k)$ stretch
or $o(kn^{1/k})$ lightness.

We would like to emphasize that the case $k=\log n$ is of special interest. This is the point on the tradeoff curve allowing spanners of linear size and constant lightness.
Prior to this paper, the state of the art for efficient spanner constructions with constant lightness suffered from distortion at least $O(\log^2 n)$. See the discussion after \Cref{cor:opt_log} for further details.


A summary of spanner algorithms can be seen in \Cref{tab:spanners}.
\begin{table}[h]
	\small
	\centering
	\makebox[0pt][c]{
	\begin{tabular}{c|c|c|c|l}
		\toprule
		\bf Stretch & \bf Size & \bf Lightness & \bf Construction & \bf Ref\\
		\midrule
		$(2k-1)$&   $O \left(n^{1+1/k} \right)$ & $O\left(n/k\right)$ & $O \left(mn^{1+1/k} \right)$ & \cite{Althofer1993}$^*$\\
		$(2k-1)(1+\eps)$&   $O \left(n^{1+1/k} \right)$ & $O \left( k
		n^{1/k}\right)$ & $O \left(mn^{1+1/k} \right)$ & \cite{ChandraDNS92}$^*$\\
		$(2k-1)$ & $O(n^{1+1/k})$ & $\Omega(W)\text{ }^{**}$ & $O \left( kn^{2+1/k} \right)$ & \cite{RodittyZ11}\\	
		$(2k-1)$&   $O \left(kn^{1+1/k} \right)$ & $\Omega \left( n^{1+1/k} \right)\text{ }^{**}$ & $O \left(kmn^{1/k}\right)$ & \cite{ThorupZ05}{\tiny $^{\#}$}\\			
		$(2k-1)(1+\eps)$&   $O \left(n^{1+1/k} \right)$ & $O \left(kn^{1/k}\right)$ & $O \left( kn^{2+1/k} \right)$ & \cite{ElkinS16}
		\\								
		$(2k-1)(1+\eps)$&   $O \left(n^{1+1/k} \right)$ & $O \left(n^{1/k}
		\cdot k/\log k \right)$ & $O \left(mn^{1+1/k} \right)$ & \cite{ElkinNS14}$^*$\\
		$(2k-1)(1+\eps)$&   $O \left(n^{1+1/k} \right)$ & $O\left(n^{1/k}\right)$  & $n^{\Theta(1)}$ & \cite{ChechikW16}\\
		$(2k-1)(1+\eps)$&   $O \left(n^{1+1/k} \right)$ & $O \left(n^{1/k}\right)$ & $O \left(mn^{1+1/k} \right)$ & \cite{FiltserS16}$^*$\\
		\midrule
		$(2k-1)$&   $O \left(kn^{1+1/k} \right)$ & $\Omega(W)\text{ }^{**}$ & $O \left(km\right)$ & \cite{BaswanaS07,RodittyTZ05}\\     
		$(2k-1)(1+\eps)$&   $O \left(kn^{1+1/k} \right)$ & $O \left(kn^{1/k}\right)$ & $O \left(km + n \log n \right)$ & \cite{ElkinS16}\\	
		$O(k)$ &  $O(\log k\cdot n^{1+1/k})$ & $\Omega(W)$ & $O(m+n \cdot \log k)$ & \cite{MillerPVX15}{\tiny $^{\#}$}\\
		$(2k-1)(1+\eps)$ &  $O(\log k \cdot n^{1+1/k})$ &  $O \left(k \cdot n^{1/k} \right)$ & $O(m+n \cdot \log n)$ & \cite{ElkinN17c}{\tiny $^{\#}$}\\				  			 
		\midrule
		$(2k-1)(1+\eps)$ & $O \left(\log k\cdot n^{1+1/k} \right)$ & $\Omega(W)$ & $O(m+n\log k\log^{(s)}k)$ & \Cref{thm:SparseNoLight}\\			
		$(2k-1)(1+\eps)$ & $O \left(\log k \cdot n^{1+1/k} \right)$ & $O \left( \log k \cdot n^{1/k}  \right)$ & $O(m + n \cdot\log n)$ & \Cref{thm:fastCW}\\
		$(2k-1)(1+\eps)$ & $O\left(n^{1+1/k}\right)$ &
		$O\left(n^{1/k}\right)$ &
		$O(n^{2+1/k+\eps'})$ & \Cref{thm:slow_good}\\
		$O(k)$&   $O \left(n^{1+1/k} \right)$ & $O \left( n^{1/k} \right)$ & $O \left(m + n^{1+\eps'+1/k}\right)$ & \Cref{thm:fast_light}\\
		$O(\log n)/\delta$&   $O \left(n \right)$ & $1+\delta$ & $O \left(m + n^{1+\eps'}\right)$ & \Cref{cor:opt_log}\\
		\bottomrule
	\end{tabular}
	}
	\caption{Table of related spanner constructions. In the top of the table we list non-efficient spanner constructions. In the middle we list known efficient spanner constructions. In the bottom we list our contributions. 
	Results marked $^*$ are different analyses of the greedy spanner. 
	Results marked {\tiny $^{\#}$} are randomized.
	Lightness complexities marked
	$^{**}$ are from the analysis in \Cref{sec:bad_lightness} and $W$ denotes
	the maximum edge weight of the input graph. The bounds hold for any constant $\eps,\eps' > 0$.}
\label[table]{tab:spanners}
\end{table}

\subsection{Our results}
We present the first spanner obtaining the same near-optimal guarantees as the
greedy spanner in significantly faster time by obtaining a $(1+\eps)(2k-1)$
spanner with optimal size and lightness in $\Oeps(n^{2+1/k+\eps'})$ time. We also
present a variant of this spanner, improving the running time to $O(m + n\log
n)$ by paying a $\log k$ factor in the size and lightness. Finally, we present
an optimal $\Oeps(\log n)$-spanner which
can be constructed in $O(m + n^{1+\eps})$ time. This special case is of
particular interest in the literature (see e.g. \cite{BartalFN16,KoutisX16}). Furthermore,
all of our constructions are deterministic, giving the first subquadratic
deterministic construction without the additional dependence on $k$ in the size
of the spanner. As an important tool, we introduce a new deterministic
approximate incremental distance oracle which works in near-linear time for
maintaining small distances approximately. We believe this result is of
independent interest.

More precisely, we show the following theorems.
\begin{theorem}\label{thm:slow_good}
	Given a weighted undirected graph $G = (V,E,w)$ with $m$ edges and $n$
    vertices, any positive integer $k$, and  $\eps,\eps'>0$ where $\eps$ arbitrarily close to $0$ and $\eps'$ is a constant, one can
    deterministically construct an $(1+\eps)(2k-1)$-spanner of $G$ with
    $O_{\eps}(n^{1+1/k})$ edges and lightness $O_{\eps}(n^{1/k})$ in $O(n^{2+1/k+\eps'})$
    time.
\end{theorem}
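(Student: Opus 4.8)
The plan is to simulate the classic greedy $(2k-1)(1+\eps)$-spanner, but to replace every exact shortest-path test by an \emph{approximate, bounded-distance} query served by a fast \emph{deterministic} incremental distance oracle maintained on the partially built spanner $H$. Concretely: sort $E$ by non-decreasing weight (breaking ties so that an MST of $G$ ends up in $H$) and process the edges in this order; when the edge $e=(u,v)$ of weight $w(e)$ is reached, query the oracle for a value $\hat\delta$ that $(1+\eps)$-approximates $\delta_H(u,v)$ whenever $\delta_H(u,v)\le\Oeps(k)\cdot w(e)$ and that is reported as $+\infty$ otherwise, and add $e$ to both $H$ and the oracle exactly when $\hat\delta>(1+\eps)^2(2k-1)\,w(e)$.

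Correctness rests on two invariants that follow at once from the two-sided guarantee of the oracle: (i) an edge $e$ is added only when $\delta_H(u,v)>(1+\eps)(2k-1)\,w(e)$, and at that moment $H$ is exactly the set of already-added edges of weight $\le w(e)$; (ii) an edge $e\notin H$ had $\delta_H(u,v)\le(1+\eps)^2(2k-1)\,w(e)$ when processed, hence also in the final (only larger) $H$. Invariant (ii) propagates to all vertex pairs by concatenating shortest paths, so after rescaling $\eps$ the output is a $(2k-1)(1+\eps)$-spanner. Invariant (i) is precisely the structural property of the $(1+\eps)$-greedy spanner used by Alth\"ofer et al.~\cite{Althofer1993} for the size bound (the cycle that $e$ closes has $\ge 2k+1$ edges, so $O(n^{1+1/k})$ edges) and by Chechik--Wulff-Nilsen~\cite{ChechikW16} and Filtser--Solomon~\cite{FiltserS16} for lightness; so the remaining correctness work is only to check that their weight-charging argument uses nothing beyond (i), after which lightness $\Oeps(n^{1/k})$ follows for free.

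The running time is where I expect the real difficulty. Since the partial spanner can have large diameter one cannot maintain $\delta_H$ exactly, so the plan is a standard geometric-scale decomposition: partition the edge weights into scales $[2^j,2^{j+1})$, and to process scale $j$ first contract a low-weak-diameter clustering of $V$ built from the already-fixed spanner edges of weight below $\eps 2^j/\poly(k)$, after which only distances up to $\Oeps(k)\cdot 2^j$ and only spanner edges from the last $O(\log(k/\eps))$ scales are relevant. Rescaling, each of the $O(\log(\text{aspect ratio}))$ scales becomes a bounded-distance problem with distance bound $D=\poly(k/\eps)$ on a graph with $\tilde O(n)$ vertices and $O(n^{1+1/k})$ edges. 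Feeding this to the oracle, it must absorb the $O(n^{1+1/k})$ insertions with total update time $\tilde O_\eps(n^{2+1/k})$ (think: maintaining bounded-depth exploration from all $n$ sources) and answer each of the $m$ queries in $n^{\eps'}$ time; summing over the scales and using $m\le n^2$ yields $\Oeps(n^{2+1/k+\eps'})$.

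The hard part, and the reason this needs a genuinely new tool, is that the oracle must be \emph{deterministic}: the greedy simulation decides which edges to insert from the oracle's own earlier answers, so the update sequence is adaptive, and the usual randomized incremental approximate-distance structures (Thorup--Zwick-type hierarchies, Even--Shiloach/hopset-based oracles) are only correct against an oblivious adversary and therefore cannot be used here. Designing a deterministic incremental bounded-distance oracle that meets the $\tilde O_\eps(n^{2+1/k})$ total-update / $n^{\eps'}$ query trade-off is thus the heart of the proof; the scale decomposition and the appeal to the existing size and lightness analyses are comparatively routine.
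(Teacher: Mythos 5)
Your high-level plan (simulate the $(1+\eps)$-relaxed greedy with a deterministic incremental bounded-distance structure, reduce to bounded aspect ratio by weight scales and cluster contractions, and recover size and lightness from the structural invariant via the Filtser--Solomon/Chechik--Wulff-Nilsen argument) matches the paper's route for this theorem. The two invariants you state are exactly what the paper uses, and your observation that only the two-sided test is needed is correct.

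The genuine gap is that the component you declare to be ``the heart of the proof'' --- the deterministic incremental bounded-distance oracle --- is never constructed; you leave it as an open design problem and even assert it requires ``a genuinely new tool.'' For this particular theorem it does not: after the framework's reduction (MST edges of weight $1$, all weights at most $W=g^k$, weights below $1/\eps$ diverted to the deterministic Roditty--Zwick spanner, remaining weights rounded up to integers at a $(1+\eps)$ cost), the paper simply runs the relaxed greedy on top of the classical Even--Shiloach incremental APSP structure, which is deterministic, \emph{exact} up to a distance threshold $d$, answers queries in $O(1)$ time, and has total update time $O(mnd)$. With $d=(1+\eps)(2k-1)W$, $|E(H)|=O_\eps(n^{1+1/k})$ and $g$ chosen so that $kg^k\le n^{\eps'}$, this is $O(n^{2+1/k+\eps'})$, and because it is exact (not merely $(1+\eps)$-approximate) the stretch $(1+\eps)(2k-1)$ comes out directly. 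Your parenthetical ``bounded-depth exploration from all $n$ sources'' is precisely this structure, so the difficulty you flag --- adaptivity of the insertion sequence defeating randomized oracles --- is real but is the motivation for the paper's \emph{other} result (the near-linear-time construction with $O(k)$ stretch), not an obstacle here; a proof that ends by positing an unconstructed oracle is incomplete at its critical step. Two smaller points you gloss over: edges of very small weight cannot be handled by rounding or by per-scale clustering (the number of scales below the average MST edge weight is unbounded), which is why the paper's framework routes them through a separate sparse-but-not-light spanner ($A_2$); and the per-scale running-time sum only telescopes because the contracted vertex counts decay geometrically with ratio $g^k$, a property your base-$2$ scale decomposition would need to argue explicitly.
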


\begin{theorem}\label{thm:fastCW}
	Given a weighted undirected graph $G = (V,E,w)$ with $m$ edges and $n$
	vertices, a positive integer $k\ge 640$, and
	$\epsilon > 0$, one can deterministically construct a    $(2k-1)(1+\eps)$-spanner of $G$ with $\Oeps(\log k\cdot n^{1+1/k})$ edges and
	lightness $\Oeps\left(\log k\cdot n^{1/k}\right)$ in time $O\left(m+n\log
	n\right)$.
\end{theorem}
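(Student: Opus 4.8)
The plan is to take the hierarchical, scale‑by‑scale light‑spanner construction of Chechik--Wulff-Nilsen that underlies \Cref{lem:light_spanner} (equivalently, the greedy spanner of \cite{FiltserS16}) and make every component run in near‑linear time, replacing the expensive shortest‑path computations that drive the greedy selection by queries to a \emph{deterministic incremental} approximate distance oracle. First I would compute a minimum spanning tree $T$ in $O(m+n\log n)$ time and, as a one‑time preprocessing step, apply \Cref{thm:SparseNoLight} and add the edges of $T$ to its output, obtaining a $(2k-1)(1+\eps)$‑spanner $G'\supseteq T$ of $G$ with only $O(\log k\cdot n^{1+1/k})$ edges in $O(m)$ time. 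From then on one works with $G'$, so the number of edges is $O\big(\min\{m,\ \log k\cdot n^{1+1/k}\}\big)$; this step serves only to cap the running time of everything that follows, and it is harmless for the output since $T\subseteq G'$ (so $w(\mathrm{MST}(G'))=w(\mathrm{MST}(G))$) and every edge we eventually keep is an edge of $G$.

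The main loop partitions the edges of $G'$ into geometric weight classes $E_i=\{e: w(e)\in[\mu^i,\mu^{i+1})\}$ for a constant $\mu=\mu(\eps)$, processes the non‑empty classes in increasing order of $i$, and maintains a partition of (a subset of) $V$ into \emph{clusters}, each of diameter $O(\eps\mu^i/k)$ when class $i$ is processed. For class $i$ one forms the contracted multigraph $G_i$ whose nodes are the current clusters and whose edges are the class‑$i$ edges of $G'$ between distinct clusters, runs a fast deterministic sparse‑spanner subroutine on $G_i$, and for each surviving candidate edge $e=(u,v)$ keeps it only if the current spanner $H$ does not already contain a $u$--$v$ path of length $\le (2k-1)(1+O(\eps))\,w(e)$ — exactly the greedy/CW rule. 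This is the only place distances are needed, and since $e$ has weight $\ge\mu^i$ while a useful such path has length $O(k\mu^i)$ and passes through $O(k/\eps)$ clusters of small diameter, every query concerns \emph{bounded} distances; it is answered by the new deterministic incremental oracle, maintained over $H$ restricted to the relevant weight range and supporting the edge insertions the loop performs. Finally one merges clusters that became close, using $T$‑edges and the kept class‑$i$ edges, so that the diameter invariant holds at scale $i+1$.

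For correctness, the stretch, size, and lightness analyses follow the Chechik--Wulff-Nilsen / Filtser--Solomon arguments, now phrased over the cluster hierarchy: traversing one cluster costs a $(1+O(\eps/k))$ factor, so a path through $O(k)$ clusters loses only a $(1+O(\eps))$ factor overall and the greedy rule yields stretch $(2k-1)(1+\eps)$ after rescaling $\eps$ and using $k\ge 640$ to absorb the constants from the sparse‑spanner subroutine and from the cluster bound; the size bound $O(\log k\cdot n^{1+1/k})$ comes from summing the per‑scale subroutine sizes through the CW charging (the $\log k$ inherited from \Cref{thm:SparseNoLight} and from the number of cluster levels); and the lightness $\Oeps(\log k\cdot n^{1/k})$ comes from charging, geometrically in $i$, the weight of edges kept in class $i$ against the portion of $T$ that spans the clusters merged at that scale. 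The running time is $O(m+n\log n)$: the MST and the bucketing are $O(m+n\log n)$, \Cref{thm:SparseNoLight} is $O(m)$, each edge of $G'$ lies in a single class and each cluster merge is charged once, and the incremental oracle is amortized near‑linear over the bounded‑distance insertions, so the loop costs $\Oeps(|E(G')|+n\log n)$.

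The hard part is the distance oracle and its interaction with the loop. Every known fast approximate dynamic or incremental oracle is randomized and correct only against a non‑adaptive (oblivious) adversary, whereas here the sequence of insertions into $H$ is chosen \emph{on the basis of the oracle's own answers}, so no oblivious‑adversary oracle can be used as a black box — this is precisely what forces a \emph{deterministic} incremental oracle. To keep that oracle near‑linear one must guarantee, via the cluster‑diameter invariant together with the weight‑class structure, that every query it ever receives concerns a distance that is small in the number of hops, so that it need only maintain short distances approximately. Designing this oracle, making the sparse‑spanner subroutine and the clustering deterministic as well, and verifying that the accumulated $(1+O(\eps))$ and constant factors compose to exactly $(2k-1)(1+\eps)$ when $k\ge 640$ are the delicate steps; the remainder is bookkeeping layered on top of the existing light‑spanner analysis.
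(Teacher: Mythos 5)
Your skeleton (MST, the framework of \Cref{lem:fw} with \Cref{thm:SparseNoLight} as the sparse component, geometric weight classes, small-diameter clusters, and CW-style charging against the tree) matches the paper's route. But the heart of your per-scale processing has a genuine gap: you propose to decide which class-$i$ edges to keep by the greedy/CW distance rule, answered by the deterministic incremental oracle of \Cref{thm:dist_oracle}. That oracle cannot support this theorem. Its stretch guarantee is only $\hat{d}(u,v)=O(1)\cdot d(u,v)$, where the $O(1)$ hides a factor exponential in $1/\eps'$ (concretely $2(3+\eps)^{1/\eps'-1}-1$ in \Cref{Thm:DistOracle}); plugging such an oracle into the greedy rule yields stretch $c_1(2k-1)(1+\eps)=O(k)$, not $(2k-1)(1+\eps)$ --- this constant multiplies the stretch and cannot be ``absorbed'' by rescaling $\eps$ or by $k\ge 640$. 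This is precisely why the paper uses that oracle only for \Cref{thm:fast_light} (stretch $O(k)$). Moreover its total update time is $O(m^{1+\eps'}\cdot d)$ with $d$ the distance threshold, which for the greedy rule at scale $i$ is $\Theta(k)$ times the (rescaled) maximum weight, so even after the framework reduction you land at $O(m+n^{1+\eps'+1/k})$-type time, not $O(m+n\log n)$; the only $(1+\eps)$-accurate incremental structure in the paper is Even--Shiloach (\Cref{thm:es-tree}), which costs $O(mnd)$ and leads to the near-quadratic \Cref{thm:slow_good}.

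The paper's actual proof of \Cref{thm:fastCW} avoids distance queries entirely. Within each scale it contracts clusters grown purely by vertex count (no diameter computations), distinguishes heavy and light clusters, adds light-cluster edges directly, and selects edges among heavy clusters by running the Halperin--Zwick unweighted spanner on bounded-aspect-ratio cluster graphs (\Cref{lem:fast_spanner_aspect}, built on \Cref{thm:HZ96}); the stretch $(2k-1)(1+\eps)$ comes from the cluster diameters being an $\eps$-fraction of the edge weights at that scale (as in \Cref{eq:cluster-Stretch}), and the lightness $\Oeps(\log k\cdot n^{1/k})$ comes from the potential-function argument of \Cref{lem:M_r bound,lem:total_weight_spanners}, not from correctness of a greedy acceptance rule. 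To repair your proposal you would have to replace the oracle-driven greedy test by such a combinatorial selection step (or accept stretch $O(k)$ and superlinear time, which is \Cref{thm:fast_light}, a different theorem).
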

Note that in \Cref{thm:fastCW} we require $k$ to be larger than $640$. This
is not a significant limitation, as for $k=O(1)$ \cite{ElkinS16} is already
optimal.

Our $O(\log n)$-spanner is obtained as a corollary of the following more
general result.
\begin{theorem}\label{thm:fast_light}
	Given a weighted undirected graph $G = (V,E,w)$ with $m$ edges and $n$
	vertices, any positive integer $k$ and constant $\eps' > 0$, one can
    deterministically construct an $O(k)$-spanner of $G$ with $O(n^{1+1/k})$
    edges and lightness $O(n^{1/k})$ in $O(m+n^{1+\eps'+1/k})$ time.
\end{theorem}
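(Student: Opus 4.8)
The plan is to reduce the light-spanner problem to building \emph{sparse} spanners on a sequence of contracted ``cluster graphs'', one per weight scale, process the scales in increasing order while maintaining a running spanner, and glue everything together with a minimum spanning tree; the $n^{\eps'}$ slack in the running time is spent on derandomizing the clustering and on re-sparsifying each cluster graph's spanner down to \emph{optimal} size (the $\Theta(\log k)$ edge overhead of the fastest deterministic sparse-spanner routines, such as \Cref{thm:SparseNoLight}, would otherwise leak into the final bound).

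Concretely, compute a minimum spanning tree $T$ in near-linear deterministic time and dispose of two extremes: edges of weight above $w(T)$ lie on no shortest path and are deleted, while the very light edges (weight below $w(T)/n$) are handled in one batch, since a sparse $O(k)$-spanner of the subgraph they induce has $O(n^{1+1/k})$ edges of total weight $O(n^{1+1/k}\cdot w(T)/n)=O(n^{1/k}\,w(T))$, within budget for both size and lightness. For each of the remaining $O(\log n)$ scales $E_i$ (weights in $[2^i,2^{i+1})$, processed in increasing $i$), partition $T$ into subtrees (``clusters'') of weight $\Theta(2^i)$ — using the $T$-edges of weight at most $2^i$ and a greedy subdivision that keeps each cluster of weight $O(2^i)$ — contract the clusters, discard the scale-$i$ edges whose endpoints are already $(2k-1)$-spanned by the current spanner, and build a deterministic $O(k)$-spanner of the resulting cluster graph $G_i$ by first running \Cref{thm:SparseNoLight} to cut $G_i$ to $O(\log k\cdot n_i^{1+1/k})$ edges ($n_i:=|V(G_i)|$) and then a slower size-optimal deterministic construction (derandomized Baswana--Sen in the style of \cite{RodittyTZ05}, or the size-optimal ingredient behind \Cref{thm:slow_good}) on that already-sparse instance, at cost $O(n_i^{1+1/k+\eps'})$. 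The output is $T$ plus the pre-images of all added cluster-graph edges and of the very-light batch; for $k=O(1/\eps')$ one instead invokes \cite{ElkinS16}, already optimal there.

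The guarantees follow by charging. Stretch: an edge $e=(u,v)$ of weight $\approx 2^i$ with endpoints in a common scale-$i$ cluster is $O(1)$-approximated inside $T$ (the cluster has weight $O(2^i)=O(w_e)$); otherwise the $O(k)$-spanner of $G_i$ routes between the two clusters along $O(k)$ edges, each of weight $\ge 2^i$, and each traversed cluster adds an $O(2^i)$ detour in $T$, giving a path of weight $O(k\,w_e)$ — summing along shortest paths yields an $O(k)$-spanner. Size: balancedness of the clusters gives $\sum_i n_i=O(n)$, so the per-scale spanners contribute $\sum_i O(n_i^{1+1/k})\le(\max_i n_i)^{1/k}\sum_i n_i=O(n^{1+1/k})$ edges, and the same identity bounds $\sum_i n_i^{1+1/k+\eps'}=O(n^{1+1/k+\eps'})$, so the total running time is $O(m+n^{1+\eps'+1/k})$. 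Lightness: each scale contributes $O(2^i n_i^{1+1/k})\le O(n^{1/k})\cdot 2^i n_i$ weight, and since (modulo the subdivision) $n_i-1$ counts the $T$-edges of weight exceeding $2^i$, switching the order of summation turns $\sum_i 2^i n_i$ into $\sum_{e\in T}\sum_{i:\,2^i<w_e}2^i=O(w(T))$, so $w(H)=O(n^{1/k}\,w(T))$.

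The crux is obtaining optimal size \emph{and} optimal lightness simultaneously — no $\log n$, no $\log k$, no $k$ — deterministically and near-linearly. The subdivision that forces small cluster diameter (and hence the $O(k)$ stretch) is exactly what threatens to reintroduce a $\log n$ factor into the lightness charge $\sum_i 2^i n_i$: the count of genuine MST boundary edges telescopes geometrically, but the extra subdivision pieces must be shown to telescope as well, which needs the greedy skip (so that scales already covered by lighter edges add nothing) and a laminar subdivision hierarchy. Equally delicate is performing the ``already $(2k-1)$-spanned?'' filtering quickly and deterministically; relaxing the stretch target from $(1+\eps)(2k-1)$ to $O(k)$ is precisely the slack that makes a near-linear deterministic implementation of both possible, and the size-optimal re-sparsification is what spends the $n^{\eps'}$.
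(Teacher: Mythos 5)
Your plan bypasses the paper's central tool and, in doing so, assumes exactly the ingredient that is missing from the known toolbox. The step ``first sparsify with \Cref{thm:SparseNoLight}, then run a slower size-optimal deterministic construction (derandomized Baswana--Sen in the style of \cite{RodittyTZ05}, or the size-optimal ingredient behind \Cref{thm:slow_good}) at cost $O(n_i^{1+1/k+\eps'})$'' does not go through with any cited routine. Derandomized Baswana--Sen produces $O(k\,n^{1+1/k})$ edges, not $O(n^{1+1/k})$, so the factor $k$ (up to $\log n$) leaks into both size and lightness and the theorem's bounds fail. The size-optimal ingredients behind \Cref{thm:slow_good} are Roditty--Zwick ($O(kn^{2+1/k})$ time) and the Even--Shiloach-based approximate greedy of \Cref{lem:quad_greedy} (total time $\Omega(|E(H)|\cdot n\cdot d)$), both of which remain near-quadratic in $n_i$ even on an instance pre-sparsified to $O(\log k\cdot n_i^{1+1/k})$ edges; summing over scales, the scale with $n_i=\Theta(n)$ alone already exceeds the $O(m+n^{1+1/k+\eps'})$ budget. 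Achieving optimal size (and, via the Filtser--Solomon-style argument, optimal lightness) deterministically in time $O(n^{1+1/k+\eps'})$ is precisely what the paper obtains from the approximate-greedy algorithm driven by the new deterministic incremental distance oracle of \Cref{thm:dist_oracle}, whose whole point is to support adaptive queries (the edges inserted depend on the answers), so it cannot be replaced by the randomized oracle of Roditty--Zwick either. Your proposal never constructs such an oracle, so the ``crux'' you identify is genuinely unproved, not merely delicate.

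A secondary, also unresolved, issue is the per-scale accounting. With weight scales $[2^i,2^{i+1})$ and clusters of MST-weight $\Theta(2^i)$, every $T$-edge of weight above $2^i$ forces a cluster boundary, so $\sum_i n_i$ can be $\Omega(n\log W)$ rather than $O(n)$, and the claimed telescoping of $\sum_i 2^i n_i$ and of $\sum_i n_i^{1+1/k+\eps'}$ does not follow from ``balancedness'' alone; you acknowledge this but defer it to an unspecified ``greedy skip'' and ``laminar subdivision hierarchy.'' The paper sidesteps this by its preparation phase (subdividing the MST so all MST edges have unit weight) and by using weight groups of width $g^{k}$ with $(i,\eps)$-clusters of at least $\eps g^{ik}$ vertices, which makes both the number of clusters and their contributions geometrically decreasing; some analogue of that normalization would be needed before your charging arguments are sound.
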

We note that the stretch $O(k)$ of \Cref{thm:fast_light} (and 
\Cref{cor:opt_log} below) hides an exponential factor in $1/\eps'$, thus we only
note the result for constant $\eps'$.
Bartal et. al. \cite{BartalFN16} showed that given a spanner construction that for every  $n$-vertex weighted graph produces a $t(n)$-stretch spanner with $m(n,t)$ edge and  $l(n,t)$ lightness in $T(n,m)$ time, then for every parameter $0<\delta<1$ and every graph $G$, one
can construct a $t/\delta$-spanner with $m(n,t)$ edges and $1+\delta\cdot l(n,t)$ lightness in $T(n,m)+O(m)$ time.
Plugging $k=\log n$ and using this reduction w.r.t. $\delta$ in \Cref{thm:fast_light}, and $\delta'=\frac{\delta}{\log\log n}$ in \Cref{thm:fastCW}, we get
\begin{corollary}\label{cor:opt_log}
	Let $G = (V,E,w)$ be a weighted undirected $n$-vertex graph, let $\eps'>0$
    be a constant and $\delta > 0$ be a parameter arbitrarily close to $0$.
     Then one can construct a spanner of $G$ with:
    \begin{enumerate}
    	\item  $O(\log n)/\delta$ stretch, $O(n)$ edges and
    	$1+\delta$ lightness in time $O(m + n^{1+\eps'})$.
    	\item  $O(\log n\log\log n)/\delta$ stretch, $O(n\log\log n)$ edges and $1+\delta$ lightness
    	 in time $O(m + n\log n)$.
    \end{enumerate}
   
\end{corollary}
\Cref{cor:opt_log} above should be compared to previous attempts to efficiently construct a spanner with constant lightness.
Although not stated explicitly, the state-of-the-art algorithms of
\cite{ElkinS16,ElkinN17c}, combined with the lemma from \cite{BartalFN16}, provide an efficient spanner construction with $1+\delta$ lightness, $O(n\log\log n)$ edges and only $O(\log^2 n/\delta)$ stretch.

We emphasize, that
\Cref{cor:opt_log} is the first sub-quadratic construction of spanner with
optimal size and lightness for any non-constant $k$.

In order to obtain \Cref{thm:fast_light} we construct the following
deterministic incremental approximate distance oracle with near-linear
total update time for maintaining small distances. We believe this result is of
independent interest, and discuss it in more detail in the related work section
below and in \Cref{sec:overview}.
\begin{theorem}\label{thm:dist_oracle}
    Let $G$ be a graph that undergoes a sequence of $m$ edge insertions. For
    any constant $\eps' > 0$ and parameter $d\ge 1$ there exists a data
    structure which processes the $m$ insertions in total time
    $O(m^{1+\eps'}\cdot d)$ and can answer queries at any point in the sequence
    of the following form. Given a pair of nodes $u,v$, the oracle gives, in
    $O(1)$ time, an estimate $\hat{d}(u,v)$ such that $\hat{d}(u,v)\ge d(u,v)$
    and if $d(u,v)\le d$ then $\hat{d}(u,v) = O(1)\cdot d(u,v)$.
\end{theorem}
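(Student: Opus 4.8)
The plan is to maintain an incrementally-updated, derandomised variant of the Thorup--Zwick distance oracle whose hierarchy has only $r=\lceil 1/\eps'\rceil$ levels --- so that its $2r-1$ stretch is $O(1)$ --- and in which the ``spread'' between consecutive levels is $\Delta=m^{c\eps'}$ for a small constant $c\in(0,1)$, chosen so that polylogarithmic and $n=O(m)$ overheads stay absorbed. It suffices to treat the input as unweighted (equivalently, integer weights bounded by $O(d)$, handled identically); arbitrary weights reduce to this at the cost of a constant stretch factor and an $O(\log)$ overhead in the update time by the standard weight-rounding, which I omit. Every distance the oracle reports will be the length of an actual path, so the one-sided guarantee $\hat d(u,v)\ge d(u,v)$ is automatic, and returning $\hat d(u,v)=\infty$ is legal whenever $d(u,v)>d$.

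The only dynamic primitive required is \emph{incremental bounded-depth SSSP}: from a (super-)source $s$, maintain $\min(d,\mathrm{dist}(s,v))$ for all $v$ under edge insertions by relaxation. Every label is non-increasing and changes at most $d$ times, so an untruncated source costs $O(md)$ total, whereas a source whose exploration is truncated once it has reached $\Delta$ vertices costs $O(\Delta^2 d)$ total. I use this in two ways: (i) for each level $j$, an untruncated incremental SSSP from the super-source $A_j$, which gives every vertex $v$ its nearest level-$j$ centre $p_j(v)$ within radius $d$ (total $O(rmd)=O(md)$); and (ii) for each vertex $v$, a truncated incremental SSSP maintaining $v$'s Thorup--Zwick \emph{bunch} $B(v)$ --- the $O(\Delta)$ vertices closer to $v$ than $v$'s own next-level centre --- with total cost $O(n\Delta^2 d)=O(m^{1+\eps'}d)$, the dominant term.

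The centre sets are selected deterministically. With $A_0=V$ and $A_r=\emptyset$ as in Thorup--Zwick, I take $A_j$ to be a hitting set for the family $\{\,N_{j-1}(v):v\in V\,\}$, where $N_{j-1}(v)$ is the set of the $\Delta$ nearest vertices of $A_{j-1}$ to $v$ (or all of $A_{j-1}$, if fewer). Since every nontrivial set of the family has size $\ge\Delta$, the greedy set-cover algorithm returns $|A_j|=O(|A_{j-1}|/\Delta)\cdot\poly(\log n)$, so $|A_{r-1}|=O(\Delta)$ up to polylogarithms; this hitting-set property is precisely what caps $|B(v)\cap A_j|$ at $O(\Delta)$ per level. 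A query $(u,v)$ is answered by the verbatim Thorup--Zwick search --- alternately advance to the next centre of one endpoint until it lands in the other endpoint's bunch, returning the resulting $d(u,w)+d(w,v)$ --- in $O(r)=O(1)$ time, and the usual inductive argument (the distance from $u$ to its $j$-th centre is at most $j\cdot d(u,v)$, and for some $j<r$ that centre lies in $B(v)$) gives $\hat d(u,v)\le(2r-1)\,d(u,v)=O(1)\cdot d(u,v)$ whenever $d(u,v)\le d$.

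The main obstacle is keeping this \emph{deterministic} hierarchy valid incrementally. Unlike a random sample, a greedy hitting set need not remain a hitting set once the sets $N_{j-1}(v)$ shift --- distances only decrease, but the identity of the $\Delta$ nearest centres can change arbitrarily --- so the centre sets, the pivots $p_j(\cdot)$, the bunches, and hence all the per-vertex SSSP structures must be recomputed periodically. The heart of the analysis is to fix a recomputation schedule (for instance, rebuild the whole hierarchy once the accumulated insertions have invalidated a $1/\poly(\Delta)$ fraction of it, or simply rebuild in $\poly(\Delta)$ equal phases) and to show that the total rebuilding cost still telescopes into $O(m^{1+\eps'}d)$; the room for this is exactly the gap between the crude $O(\Delta^2 d)$-per-vertex accounting above and a tighter implementation. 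Finally, a secondary but crucial point is that, because every component is deterministic and every reported value is realised by a genuine path, the oracle is correct against an \emph{adaptive} adversary whose next insertion may depend on earlier answers --- exactly the setting of \Cref{thm:fast_light}, where the spanner construction chooses which edge to insert based on the oracle's replies.
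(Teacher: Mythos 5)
Your high-level plan (a constant-depth hierarchy, balls/bunches maintained by bounded-depth incremental SSSP so that the total cost is $O(m^{1+\eps'}d)$, and determinism as the route to adaptive-adversary robustness) is in the right spirit, but the proposal has a genuine gap exactly at its central step: maintaining the deterministic Thorup--Zwick-style hierarchy under insertions. You acknowledge that the greedy hitting sets $A_j$ need not remain hitting sets as the neighbourhoods $N_{j-1}(v)$ shift, and you defer the fix to an unspecified ``recomputation schedule'' whose cost is claimed to telescope. This is not a routine detail. Between rebuilds two things break. First, the bound $|B(v)\cap A_j|=O(\Delta)$ can fail, so the per-vertex structures truncated at $\Delta$ vertices no longer represent the \emph{true} bunches. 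This is not merely a time-accounting issue: the Thorup--Zwick stretch argument, when the query advances from level $j$ to $j+1$, uses the implication ``$w_j\notin B(v)$ $\Rightarrow$ $d(v,A_{j+1})\le d(v,w_j)$'', which is valid only with respect to the true bunch; if $w_j$ is in the true bunch but was dropped by the truncated structure, the inequality (and hence the $2r-1$ stretch) has no justification. Second, the trigger ``rebuild once a $1/\poly(\Delta)$ fraction is invalidated'' presupposes that the structure can detect invalidation, which is itself the quantity you would need the (currently broken) bunch structures to certify; and against an adaptive adversary the insertions can be concentrated so as to invalidate a single vertex's neighbourhood immediately after every rebuild, so a frequency-based schedule does not obviously suffice. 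Without a concrete maintenance invariant and a proof that both the stretch argument and the $O(m^{1+\eps'}d)$ bound survive between rebuilds, the theorem is not established.

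For comparison, the paper sidesteps rebuilding entirely by making the hierarchy \emph{monotone}: for each $u\in A_i$ it maintains a ball whose radius $d_i(u)$ is the largest power of $(1+\eps)$ (capped at roughly $(3+2\eps)^i d$) such that the total degree inside the ball is at most $m_i=2m^{(i+1)/k}$. Under insertions this radius only shrinks, which happens $O(\log_{1+\eps} d_i)$ times per vertex, and each shrink event ``freezes'' a ball of total degree exceeding $m_i$; $A_{i+1}$ is grown greedily online as a maximal set of centres of pairwise disjoint frozen balls, with pointers from rejected centres into the intersecting accepted ball. Disjointness plus the degree lower bound gives $|A_{i+1}|=O((m/m_i)\log_{1+\eps}d_i)$ deterministically, with no hitting sets and nothing ever recomputed, and the query walks $u_0,u_1,\ldots$ up the hierarchy losing a factor $3+2\eps$ per level, giving stretch $2(3+\eps)^{k-1}-1$ (and $\infty$ only when $d(u,v)>d$); setting $k=1/\eps'$ yields the stated bounds. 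If you want to salvage your route, the missing piece is precisely an analogue of this monotonicity, or else a rebuild scheme together with a proof that approximate (rather than exact) bunches still support the stretch argument.
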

\Cref{thm:dist_oracle} assumes that $\eps'$ is constant; the $O$-notation hides a factor exponential in $1/\eps'$ for both total update time and stretch whereas the query time bound only hides a factor of $1/\eps'$.

We also obtain the following sparse, but not necessarily light, spanner in
linear time as a subroutine in proving \Cref{thm:fastCW}.

\begin{restatable}[]{theorem}{SparseNoLight}
	\label{thm:SparseNoLight}
	Given a weighted undirected graph $G = (V,E,w)$ with $m$ edges and $n$
	vertices, any positive integer $k$, any $\epsilon > 0$, and any positive integer $s = O(1)$, one can
	deterministically construct a $(2k-1)(1+\eps)$-spanner of $G$ with
	$O_\eps(n^{1+1/k}\cdot \log k)$ edges; the running time is $O(m + n\log^{(s-1)}k\log k)$ and if $k\leq \log n/\log^{(s+1)}n$, the running time is $O(m)$.
\end{restatable}
Here, the function 
$\log^{(s)}$ is $\log$ concatenated with itself $s$ times. Specifically, $\log^{(0)}n=n$, $\log^{(1)}n=\log n$, and in general for $s\ge 1$, $\log^{(s)}n=\log(\log^{(s-1)}n)$. $\log^*n$ is the minimum index $s$ such that $\log^{(s)}n\le2$.

Note that since we may assume that $k = O(\log n)$, the time bound of~\Cref{thm:SparseNoLight} is linear for almost all choices of $k$ and very close to linear for any choice of $k$.
\subparagraph{Organization} In \Cref{sec:fw} we state our framework that used in \Cref{thm:fast_light,thm:fastCW,thm:slow_good}.
\Cref{thm:fast_light} is proved in \Cref{sec:fast_greedy}, and  \Cref{thm:slow_good} in \Cref{sec:quad_greedy}. \Cref{thm:fastCW} is proved in \Cref{sec:fastCW}.
\Cref{thm:dist_oracle} is proved in \Cref{sec:DistOracle}. The proof of \Cref{thm:SparseNoLight} appears in \Cref{sec:proofOfSparseOnly}.

\subsection{Related work}
Closely related to graph spanners are \emph{approximate distance oracles
(ADOs)}. An ADO is a data structure which, after preprocessing a graph $G$, is
able to answer distance queries approximately. Distance oracles are studied
extensively in the literature (see e.g.
\cite{ThorupZ05,DBLP:conf/soda/Wulff-Nilsen13,Chechik14,Chechik15}) and often
use spanners as a building block. The state of the art static distance oracle
is due to Chechik~\cite{Chechik15}, where a construction of space
$O(n^{1+1/k})$,
stretch $2k-1$, and query time $O(1)$ is given. Our distance oracle of
\Cref{thm:dist_oracle} should be compared to the result of Henzinger, et
al.~\cite{HenzingerKN16}, who gave a deterministic construction for incremental
(or decremental) graphs with a total update time of $\Oeps(mn \log n)$, a
query time of $O(\log\log n)$ and stretch $1+\eps$. For our particular
application, we require
near-linear total update time and only good stretch for short distances, which are commonly the most troublesome
when constructing spanners. It should be added that Henzinger et al. give a general deterministic data structure for choosing centers, i.e., vertices which are roots of shortest path trees maintained by the data structure. While this data structure may be fast when the total number of centers is small, we need roughly $n$ centers and it is not clear how this number can be reduced. Having this many centers requires at least order $mn$ time with their data structure.

To achieve our fast update time bound, we are interested in trading worse
stretch for distances above parameter $d$ for construction time. Roditty and
Zwick~\cite{RodittyZ12} gave a randomized distance oracle for this case,
however their construction does not work against an adaptive adversary as is
required for our application, where the edges to be inserted are determined by
the output to the queries of the oracle (see \Cref{sec:overview} for more
discussion on this).
Removing the assumption of a non-adaptive adversary in dynamic graph algorithms
has seen recent attention at prestigious venues, e.g.
\cite{Wul17,DBLP:conf/stoc/BhattacharyaHN16}. 
Our new incremental approximate distance oracle for short distances given in
\Cref{thm:dist_oracle} is deterministic and thus is robust against such
an adversary, and we believe it may be of independent interest as a building
block in deterministic dynamic graph algorithms.

For unweighted graphs, there is a folklore spanner construction by Halperin
and Zwick~\cite{HZ96} which is optimal on all parameters. The construction time is
$O(m)$, it has $O(n^{1+1/k})$ edges and $2k-1$ stretch. In \Cref{sec:fastCW}
we will use this spanner as a building block in proving \Cref{thm:fastCW}.

\section{Preliminaries}
Consider a weighted graph $G=(V,E,w)$, we will abuse notation and refer to as $E$ both a set of edges and the graph itself.
$d_G$ will denote the shortest path metric (that is $d_G(v,u)$ is the weight of the lightest path between $v,u$ in $G$.
Given a subset $V'$ of $V$,  $G[V']$ is the induced graph by $V'$. That is it has $V'$ as it vertices, $E\cap {V' \choose2}$ as its edges and $w$ as weight function.
The \emph{diameter} of a vertex set $V'$ in a graph $G'$ $\diam_{G'}(V')=\max_{u,v\in V'}d_{G'}(u,v)$ is the maximal distance between two vertices in $V'$ under the shortest path metric induced by $G'$.
For a set of edges $A$ with weight function $w$, the \emph{aspect ratio} of $A$ is $\max_{e\in A}w(e)/\min_{e\in A}w(e)$. 
The \emph{sparsity} of $A$ is simply $|A|$ its size. 

We will assume that $k=O(\log n)$ as the guarantee for lightness and sparsity will not be improved by picking larger $k$. 
Instead of proving $(1+\eps)(2k-1)$ bound on stretch, we will prove only $(1+O(\eps))(2k-1)$ bound. This is good enough, as Post factum we can scale $\eps$ accordingly.
By $\Oeps$ we denote asymptotic notation which hides polynomial factors of $1/\eps$, that is $\Oeps(f)=O(f)\cdot \poly(\frac{1}{\eps})$.

\section{Paper overview}\label{sec:overview}

\subparagraph{General framework}
\Cref{thm:slow_good,thm:fast_light,thm:fastCW} are generated via a general framework.
The framework is fed two algorithms for spanner constructions:
$A_1$, an algorithm suitable for graphs with small aspect ratio, and $A_2$, an
algorithm that returns a sparse spanner, but with potentially unbounded
lightness. We consider a partition of the edges into groups according to their weights. 
For treating most of the groups we use exponentially growing clusters,
partitioning the edges according to weight. Each such group has bounded aspect
ratio, and thus we can use $A_1$. Due to the exponential growth rate, we show
that the contribution of all the different groups is converging. Thus only the
first group is significant.
However, with this approach we need a special treatment for edges of small
weight. 
This is, as using the previous approach, the number of clusters needed to treat light edges is unbounded.
Nevertheless, these edges have small impact on the lightness
and we may thus use algorithm $A_2$, which ignores this property.

The main work in proving \Cref{thm:slow_good,thm:fast_light,thm:fastCW} is
in designing the algorithms $A_1$ and $A_2$ described briefly below.

\subparagraph{Approximate greedy spanner}
The major time consuming ingredient of the greedy spanner algorithm is its
shortest path computations. By instead considering approximate shortest
path computations we significantly speed this process up. 
We are the first to apply this idea on general graphs, while it has previously been applied by \cite{DasN97,FiltserS16} on particular graph families. 
Specifically, we consider the following algorithm: given some parameters
$t<t'$, initialize $H\leftarrow \emptyset$ and consider the edges $(u,v)\in E$
according to increasing order of weight. If $d_H(u,v) > t'\cdot w(u,v)$
the algorithm is obliged to add $(u,v)$ to $H$. If $d_H(u,v)<t\cdot w(u,v)$,
the algorithm is forbidden to add $(u,v)$ to $H$. Otherwise,
the algorithm is free to include the edge or not.
As a result, we will get spanner with stretch $t'$, which has the same
lightness and sparsity guarantees of the greedy $t$-spanner. Note however, that the resulting spanner is not necessarily a subgraph of any greedy spanner.

We obtain both \Cref{thm:slow_good} and \Cref{thm:fast_light} using this
approach via an incremental approximate distance oracle. It is important to note that the
edges inserted into $H$ using this approach depend on the answers to the
distance queries. It is therefore not possible to use approaches that do not
work against an \emph{adaptive adversary} such as the result of Roditty and
Zwick~\cite{RodittyZ12}, which is based on random sampling. Furthermore, this
is the case even if we allow the spanner construction itself to be randomized. In order to
obtain \Cref{thm:slow_good}, we use our previously described framework coupled
with the ``approximately greedy spanner'' using an incremental
$(1+\eps)$-approximate distance oracle of Henzinger et
al.~\cite{HenzingerKN16}. For \Cref{thm:fast_light}, we present a novel
incremental approximate distance oracle, which is described below. This is the
main technical part of the paper and we believe that it may be of independent
interest.

\subparagraph{Deterministic distance oracle}
The main technical contribution of the paper and key ingredient in proving
\Cref{thm:fast_light} is our new deterministic incremental approximate
distance oracle of \Cref{thm:dist_oracle}. The oracle supports approximate
distance queries of pairs within some distance threshold, $d$. In particular,
we may set $d$ to be some function of the stretch of the spanner in
\Cref{thm:fast_light}. Similar to previous work on distance oracles, we have
some parameter, $k$, and maintain $k$ sets of nodes $\emptyset = A_{k-1}
\subseteq \ldots \subseteq A_0 = V$, and for each $u\in A_i$ we maintain a
ball of radius $r\le d_i$. Here, $d_i$ is a distance
threshold depending on the parameter $d$ and which set $A_i$ we are
considering, and $r$ is chosen such that the total degree of nodes in the
ball of radius $r$ from $u$ is relatively small. The implementation of each ball can be thought of as an incremental Even-Shiloach tree. The set $A_{i+1}$ is then
chosen as a maximal set of nodes with disjoint balls (see \Cref{fig:oracle} in
\Cref{sec:oracle}). Here we use the fact that the vertices in $A_{i+1}$ are centers of disjoint balls in $A_i$ to argue that $A_{i+1}$ is much smaller than $A_i$. The decrease in size of $A_{i+1}$ pays for an increase in the maximum ball radius $d_i$ at each level. The ball of a node $u$ may grow in size during
edge insertions. In this case, we freeze the ball associated with $u$, shrink
the radius $r$ associated with $u$, and create a new ball with the new radius.
Thus, for each $A_i$ we end up with $O(\log d)$ different radii for which we
pick a maximal set of nodes with disjoint balls. For each node $u_i\in A_i$ we
may then associate a node $u_{i+1}\in A_{i+1}$ whose ball intersects with $u_i$'s. We
use these associated nodes in the query to ensure that the path distance we find is not
``too far away'' from the actual shortest path distance. Consider a query pair $(u,v)$. Then the query algorithm iteratively finds a sequence of vertices $u = u_0 \in A_0, u_1 \in A_1,..., u_i \in A_i$; $d_i$ is picked such that if $v$ is not in the ball centered at $u_i$ with radius $d_i$ then the shortest path distance between $u$ and $v$ is at least $d$ and the algorithm outputs $\infty$. Otherwise, the algorithm uses the shortest path distances stored in the balls that it encounters to output the weight of a $uv$-path $(u = u_0)\leadsto u_1\leadsto\ldots\leadsto u_i\leadsto v$ as an approximation of the shortest path distance between $u$ and $v$.

\subparagraph{Almost linear spanner} Chechik and Wulff-Nilsen \cite{ChechikW16} implicitly used our
general framework, but used the (time consuming) greedy spanner both as their
$A_2$ component and as a sub-routine in $A_1$.
We show an efficient alternative to the algorithm of \cite{ChechikW16}. For
the $A_2$ component we provide a novel sparse spanner construction
(\Cref{thm:SparseNoLight}, see paragraph below).
For $A_1$, we perform a hierarchical clustering, while
avoiding the costly exact diameter computations used in \cite{ChechikW16}.
Finally, we replace the greedy spanner used as a sub-routine of
\cite{ChechikW16} 
by an efficient spanner 
 that exploits bounded aspect ratio (see \Cref{lem:fast_spanner_aspect}). This spanner can be seen as a careful adaptation of Elkin and Solomon~\cite{ElkinS16} analyzed in the case of bounded aspect ratio. The idea here is (again) a hierarchical
partitioning of the vertices into clusters of exponentially increasing size.
However, here the growth rate is only 
$(1+\eps)$. Upon each clustering we construct a super graph with clusters as
vertices and graph edges from the corresponding weight scale as inter-cluster
edges.
To decide which edges in each scale add to our spanner, we execute the
extremely efficient spanner of Halperin and Zwick~\cite{HZ96} for unweighted graphs.

\subparagraph{Linear time sparse spanner}
As mentioned above we provide a novel sparse spanner construction as a building
block in proving \Cref{thm:fastCW}. Our construction is based on partitioning
edges into $O_\eps(\log k)$ ``well separated'' sets $E_1,E_2,\ldots$, such that
the ratio between $w(e)$ and $w(e')$ for edges $e,e'\in E_i$ is either a
constant or at least $k$. This idea was previously employed by Elkin and
Neiman~\cite{ElkinN17c} based on \cite{MillerPVX15}. For these well-separated
graphs, Elkin and Neiman used an involved
clustering scheme based on growing clusters according to exponential
distribution, and showed that the expected number of inter-cluster edges, in
all levels combined, is small enough.
We provide
a linear time \emph{deterministic} algorithm with an arguably simpler
clustering scheme. Our clustering is based upon the clusters defined implicitly
by the spanner for unweighted graphs of Halperin and Zwick \cite{HZ96}.
In particular, we introduce a charging scheme, such that each edge added to our
spanner is either paid for by a large cluster with many coins, or significantly
contributing to reduce the number of clusters in the following level.

\section{A framework for creating light spanners efficiently}\label{sec:fw}
In this section we describe a general framework for creating spanners, which we
will use to prove our main results.
The framework is inspired by a standard clustering approach (see
e.g.~\cite{ElkinS16} and \cite{ChechikW16}).
The spanner framework takes as input two spanner algorithms for restricted
graph classes, $A_1$ and $A_2$, and produces a spanner algorithm for general
graphs. The algorithm $A_1$ works for graphs with unit weight MST edges and
small aspect ratio, and $A_2$ creates a small spanner with no guarantee for
the lightness. The main work in showing
Theorems~\ref{thm:slow_good},~\ref{thm:fastCW}, and~\ref{thm:fast_light} is to
construct the
algorithms, $A_1$ and $A_2$, that go into \Cref{lem:fw} below. We do this
in Sections~\ref{sec:fast_greedy} and~\ref{sec:fastCW}. The framework is
described in the following lemma.

\begin{lemma}\label{lem:fw}
    Let $G=(V,E)$ be a weighted graph with $n$ nodes and $m$ edges and let
    $k>0$ be an integer, $g>1$ a fixed parameters and $\eps>0$.
    Assume that we are given two spanner construction algorithms
    $A_1$ and $A_2$ with the following properties:
    \begin{itemize}
        \item $A_1$ computes a spanner of stretch $f_1(k)$, size $\Oeps(s_1(k)\cdot
            n^{1+1/k})$ and lightness $\Oeps(l_1(k)\cdot n^{1/k})$ in time
            $T_1(n,m,k)$ when given a graph with maximum weight $g^k$, where
            all MST edges have weight $1$. Moreover, $T_1$ has the property
            that
            $\sum_{i=0}^{\infty}T_{1}\!\left(\frac{n}{g^{ik}},m_{i},k\right)=O\left(T_{1}\!\left(n,m,k\right)\right)$,
            where $\sum_im_i=m+O(n)$.
        \item $A_2$ computes a spanner of stretch $f_2(k)$ and size
            $\Oeps(s_2(k)\cdot n^{1+1/k})$ in time $T_2(n,m,k)$.
    \end{itemize}
    Then one can compute a spanner of stretch
    $\max((1+\eps)f_1(k),f_2(k))$, size $\Oeps((s_1(k)+s_2(k))n^{1+1/k})$, and
    lightness $\Oeps((l_1(k) + s_2(k))\cdot n^{1/k})$ in time
    $O(T_1(n,m,k)+T_2(n,m,k)+m + n\log n)$.
\end{lemma}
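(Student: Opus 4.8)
The plan is to partition the edges of $G$ by weight scale and treat the two regimes---``small weight'' edges and ``large weight'' edges---separately, invoking $A_2$ on the former and $A_1$ on a sequence of bounded-aspect-ratio subproblems for the latter. First I would fix an MST $T$ of $G$ (computable in $O(m+n\log n)$ time) and rescale so that $w(e)\ge 1$ for every edge; the lightest MST edge has weight $1$ and $w(T)=\Theta(\text{MST weight})$ after scaling. Partition the weight range into blocks $B_j = [g^{jk}, g^{(j+1)k})$ for $j=0,1,2,\dots$. The edges with weight in $B_0$ (the ``small'' edges, weight $<g^k$) are handled wholesale: run $A_2$ on the subgraph of these edges together with $T$, obtaining a spanner $H_0$ of size $\Oeps(s_2(k)n^{1+1/k})$. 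Since every edge here has weight $<g^k$ and $H_0$ has $\Oeps(s_2(k)n^{1+1/k})$ edges, $w(H_0) = \Oeps(s_2(k)\cdot n^{1+1/k}\cdot g^k)$; but this needs to be bounded against $w(T)$, so instead I would contract/normalize per block. The cleaner route: for the $A_2$ part one applies it only to edges of weight $O(1)$ relative to the MST-induced scale so the weight contribution is $\Oeps(s_2(k)n^{1/k})\cdot w(T)$; this is where the $s_2(k)n^{1/k}$ term in the lightness comes from.

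For each block $B_j$ with $j\ge 0$ handled by $A_1$, I would build a cluster graph: contract the connected components of the partial spanner / MST-forest restricted to edges of weight $< g^{jk}$ into supernodes, keep only edges of weight in $[g^{jk}, g^{(j+1)k})$ between distinct supernodes, and assign each supernode an MST-path representative so that intra-cluster distances are $O(g^{jk})$ (up to a $\poly(1/\eps)$, using that clusters have bounded diameter by construction). Rescaling weights in this cluster graph by $g^{-jk}$ makes all edge weights lie in $[1, g^k)$ with unit-weight spanning-tree edges, exactly the input shape $A_1$ expects. Run $A_1$ to get a spanner $H_j$ of the cluster graph with stretch $f_1(k)$, size $\Oeps(s_1(k) n_j^{1+1/k})$ and lightness $\Oeps(l_1(k) n_j^{1/k})$, where $n_j$ is the number of supernodes at level $j$. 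The final spanner is $H = T \cup H_0 \cup \bigcup_{j\ge 0} H_j$ (lifting each $H_j$ back to original edges).

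The stretch analysis is the standard telescoping argument: a shortest path in $G$ is broken into maximal subpaths within a single weight block; within block $B_j$ one reroutes through the cluster graph spanner at a $(1+\eps)f_1(k)$ multiplicative cost (the extra $(1+\eps)$ absorbing the intra-cluster detours of total weight $O(\eps)$ times the block edge weight, since cluster diameters are a $\poly(\eps)$ fraction of $g^{jk}$), and for $B_0$ one pays $f_2(k)$; overall stretch $\max((1+\eps)f_1(k), f_2(k))$. The size bound follows since $\sum_j n_j^{1+1/k} = O(n^{1+1/k})$: because an edge of weight $\ge g^{jk}$ that is an MST edge forces its endpoints into distinct level-$j$ clusters, the number of level-$j$ supernodes satisfies $n_j = O(n/g^{jk})$ (each cluster at level $j$ absorbs a connected subtree of $T$ of weight $\ge g^{(j-1)k}$, so contains $\Omega(g^{(j-1)k})$ MST-weight, hence there are $O(n/g^{jk})$ of them after accounting—here one uses $g>1$ fixed and that all weights are $\ge 1$), and $\sum_j (n/g^{jk})^{1+1/k}$ is a convergent geometric series dominated by its $j=0$ term $O(n^{1+1/k})$. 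The running-time bound uses exactly the stated hypothesis $\sum_{i\ge 0} T_1(n/g^{ik}, m_i, k) = O(T_1(n,m,k))$ with $\sum_i m_i = m + O(n)$ (the $O(n)$ accounting for MST edges added into each level), plus $T_2(n,m,k)$ for the single $A_2$ call and $O(m+n\log n)$ for the MST and the bookkeeping (bucketing edges by weight scale, building cluster graphs, union-find for contractions).

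The lightness analysis is the main obstacle, and it is where the $l_1(k) + s_2(k)$ combination arises. For level $j\ge 1$: $w(H_j) \le \Oeps(l_1(k) n_j^{1/k}) \cdot g^{jk}$ in rescaled-and-lifted units, but I must charge this against $w(T)$. The key point is that the $n_j$ level-$j$ clusters are spanned by disjoint subtrees of $T$ each of MST-weight $\ge g^{(j-1)k}/\poly(1/\eps)$ (roughly), so $w(T) \ge n_j \cdot g^{(j-1)k} / \poly(1/\eps)$, giving $g^{jk} \le \poly(1/\eps)\cdot g^k \cdot w(T)/n_j$, hence $w(H_j) \le \Oeps(l_1(k)) \cdot n_j^{1/k - 1}\cdot w(T) \le \Oeps(l_1(k)) \cdot (n/g^{jk})^{1/k-1} \cdot w(T)$, and summing the convergent geometric series over $j\ge 1$ yields $\Oeps(l_1(k) n^{1/k}) \cdot w(T)$, i.e. lightness $\Oeps(l_1(k) n^{1/k})$. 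For $j=0$ (the $A_1$ level with the original small-scale clusters) and for $H_0$ from $A_2$, the weight is $\Oeps((l_1(k)+s_2(k)) n^{1+1/k})$ times the unit scale, which is $\Oeps((l_1(k)+s_2(k)) n^{1/k})\cdot w(T)$ since $w(T)\ge n-1$ (weights $\ge 1$). Finally $w(T)$ itself contributes lightness $1$. The delicate part is making the ``disjoint heavy subtrees'' claim precise while ensuring the same clustering simultaneously guarantees bounded cluster diameter (needed for stretch) and a valid unit-weight spanning tree inside each cluster graph (needed to invoke $A_1$'s hypotheses)---this is the standard but fiddly hierarchical-clustering bookkeeping, and I would follow the template of \cite{ElkinS16,ChechikW16} for it.
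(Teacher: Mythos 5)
Your overall template is the right one and matches the paper's: send the MST-scale-small edges to $A_2$, bucket the remaining edges into weight blocks of width $g^k$, contract MST-based clusters, rescale each cluster graph so that $A_1$'s hypotheses hold, and sum a geometric series over levels. However, two of the steps you treat as routine are exactly where the argument lives, and as written they fail. First, the normalization: you rescale so the \emph{minimum} edge weight is $1$ and then assert $n_j=O(n/g^{jk})$ because each level-$j$ cluster carries $\Omega(g^{(j-1)k})$ of MST weight. That only yields $n_j=O(w(T)/g^{(j-1)k})$, and with your normalization $w(T)$ can be arbitrarily large compared to $n$ (take all MST edges of weight $W\gg 1$): then $n_j$ can be $\Theta(n)$ for $\Theta(\log_{g^k}W)$ consecutive blocks, so neither $\sum_j n_j^{1+1/k}=O(n^{1+1/k})$ nor your lightness sum converges, and the claimed size and lightness bounds do not follow. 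The missing idea is the paper's preparation phase: set $w'=w(MST)/(n-1)$, ship all edges of weight at most $w'/\eps$ to $A_2$ (so that $w(H^2)=\Oeps(s_2(k)n^{1+1/k})\cdot w'/\eps=\Oeps(s_2(k)n^{1/k})\cdot w(MST)$, making your vague ``$O(1)$ relative to the MST scale'' precise), round the remaining weights up to multiples of $w'$ (only a $(1+\eps)$ loss since they exceed $w'/\eps$), subdivide every MST edge into weight-$w'$ pieces --- this adds only $O(n)$ vertices --- and rescale by $w'$. After this, MST weight and vertex count agree up to a constant, so a lower bound of $\eps g^{ik}$ \emph{vertices} per cluster simultaneously gives $n_i\le n/(\eps g^{ik})$ and ties the level-$i$ weight to $w(MST(G_1))=n-1$. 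Note also that your step $n_j^{1/k-1}\le(n/g^{jk})^{1/k-1}$ has the inequality backwards (the exponent is negative, so it would need a \emph{lower} bound on $n_j$), and your per-level weight bound drops the factor $w(MST)$ of the cluster graph ($\approx n_j$); both issues disappear once the accounting is done in the normalized graph.

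Second, the clustering you actually describe --- contract the connected components of MST edges of weight $<g^{jk}$ --- has neither of the two properties you later use. Such a component can consist of many unit-weight MST edges and have diameter $\Theta(n)\gg \eps g^{jk}$, so the rerouting step in your stretch argument (intra-cluster detours costing only an $\eps$-fraction of the block edge weight) breaks; and singleton components have MST weight $0$, so the ``disjoint heavy subtrees'' lower bound also fails. The clusters must be grown with an explicit size cap, as in the paper's $(i,\eps)$-clustering on the subdivided unit-weight MST: grow until a cluster contains $\eps g^{ik}$ vertices (which caps its diameter at $4\eps g^{ik}$) and merge undersized leftovers into the core of an adjacent cluster. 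With that clustering and the normalization above, the rest of your outline (per-edge stretch via the cluster path, geometric sums for size and lightness, the stated hypothesis on $T_1$, and the $O(m+n\log n)$ bookkeeping) goes through essentially as in the paper.
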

	As an example, let us assume that we have both an optimal spanner algorithm for
	graphs with small aspect ratio, and an optimal spanner algorithm for sparse spanners in weighted graph. Specifically, we have algorithm $A_1$ that given a graph as above creates a $(1+\eps)(2k-1)$-spanner with $O_\eps(n^{1+1/k})$ edges and
	lightness $O_\eps(n^{1/k})$ in $O_\eps(m+n\log n)$ time. 
	In addition we have algorithm $A_2$ that returns an $(1+\eps)(2k-1)$-spanner with $O_\eps(n^{1+1/k})$ edges in $O_\eps(m)$ time.
	Then, given a general graph, \Cref{lem:fw} provide us with a $(1+\eps)(2k-1)$-spanner of $O_\eps(n^{1+1/k})$ size and $O_\eps(n^{1/k})$ lightness, in time $O_\eps(m+n\log n)$.

Before proving \Cref{lem:fw} we need to describe the clustering
approach. The main tool needed is what we call an
\emph{$(i,\eps)$-clustering}. This clustering procedure is performed on graphs
where all the MST edges have unit weight. Let $G, g, \eps, k$ be as in
\Cref{lem:fw}, then we say that an $(i,\eps)$-clustering is a
partitioning of $V$ into clusters $C_1,\ldots,C_{n_i}$, such that each $C_j$
contains at least $\eps g^{ik}$ nodes and has diameter at most $4\eps g^{ik}$
(even when restricted to MST edges of $G$). Let $G_i$ denote the graph
obtained by contracting the clusters of such an $(i,\eps)$-clustering of $G$,
and keeping the MST edges only. Then $G_i$
has $n_i$ nodes, and we can construct $G_i$ from $G_{i-1}$ as follows.
Start at some vertex $v$ in $G_{i-1}$ (corresponding to an
$(i-1,\eps)$-cluster) and iteratively grow an $(i,\eps)$-cluster
$\varphi_v$ by joining arbitrary un-clustered neighbors to $\varphi_v$ in
$G_{i-1}$ one at a time. If the number of original vertices in $\varphi_v$
reaches $\eps g^{ik}$, make $\varphi_v$ into an $(i,\eps)$-cluster, where
the current vertices in $\varphi_v$ are called its \emph{core}.
We argue that the diameter of the core is bounded by $\eps g^{ik}+4\eps
g^{(i-1)k}$. If the vertex $v$ (from $G_i$) already contains $\eps g^{ik}$ vertices, then $\varphi_v=v$ and by the induction hypothesis the diameter of $\varphi_v$ is at most $4\eps g^{(i-1)k}$.
Otherwise ($|v|<\eps g^{ik}$), consider the last vertex $u\in G_{i-1}$ to join $\varphi_v$. As $u$ joins $\varphi_v$, necessarily  $|\varphi_v\setminus u|<\eps g^{ik}$. In particular, the diameter of $\varphi_v\setminus u$ (restricted to MST edges) is at most $g^{ik}-1$. The diameter of $u$ is at most $4\eps g^{(i-1)k}$ and therefore the diameter of $\varphi_v$ is indeed bounded by $\eps g^{ik}+4\eps
g^{(i-1)k}$.

We perform this procedure starting at an un-clustered vertex until all vertices
of $G_{i-1}$ belong to some $(i,\eps)$-cluster. In the case where
$\varphi_v$ has no un-clustered neighbors, but does not contain $\eps g^{ik}$
vertices, we simply merge it with an existing $(i,\eps)$-cluster
$\varphi_u$ via an MST-edge to the core of $\varphi_u$. Note that the size
of $\varphi_v$, and therefore its diameter, before the merging is at most
$\eps g^{ik}-1$ (as each cluster is connected when restricting to MST
edges). To show that this gives a valid $(i,\eps)$-clustering, consider
an $(i,\eps)$-cluster $\varphi_{v}$ with core $\tilde{\varphi}_v$. Suppose that
the ``sub-clusters'' $\tilde{\varphi}_{u_1},\dots,\tilde{\varphi}_{u_s}$ were
merged into $\tilde{\varphi}_{v}$ during this process. The diameter of
$\varphi_{v}$ is then bounded by
\[
\diam(\tilde{\varphi}_v)+2+\max_{j,j'}(\diam(\tilde{\varphi}_{u_{j}})+\diam(\tilde{\varphi}_{u_{j'}}))\le (\eps g^{ik}+4\eps g^{(i-1)k})+2\eps g^{ik}\le4\eps g^{ik}~.
\]
Moreover, the size of $\varphi_v$ is at least the size of its core,
$|\tilde{\varphi}_v|\ge \eps g^{ik}$.
See figure \Cref{fig:i-eps-clustering} for illustration.
\begin{figure}
	\begin{center}
		\includegraphics[width=0.6\textwidth]{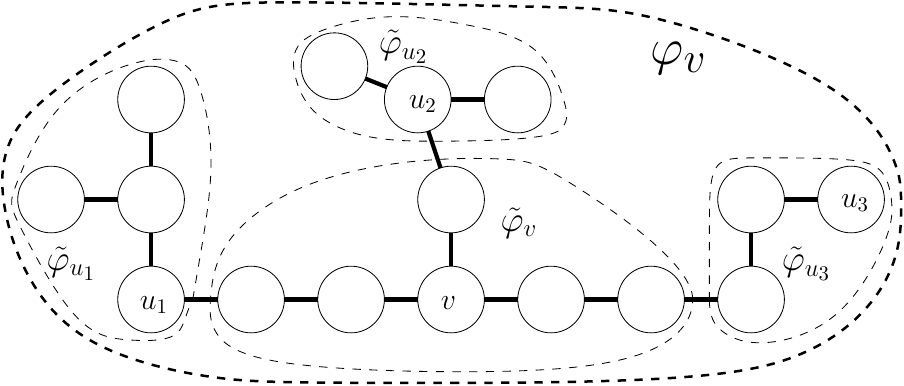}
		\caption{\small 
			The small cycles represent $(i-1,\eps)$-clusters, which are the vertices of $G_{i-1}$. The $(i-1,\eps)$-cluster $v$ iteratively grows a cluster
            around itself until it contains $\eps g^{ik}$ original vertices. This current
            cluster is called $\tilde{\varphi}_v$. $\tilde{\varphi}_v$ also
            called the core of $\varphi_v$, an $(i,\eps)$-cluster that we will
            have at the end of the process.
			Afterwards, the $(i-1,\eps)$-cluster $u_1$ iteratively grow a
            cluster around itself. When the temporary cluster is
            $\tilde{\varphi}_{u_1}$ there are no outgoing edges to unclustered
            vertices. However, since $G_{i-1}$ is connected, there is some
            outgoing edge. Necessarily the second endpoint of this edge belongs to the core of an existing cluster (here,
            $\tilde{\varphi}_v$), that is, as $\tilde{\varphi}_v$ stop growing while still having unclustered neighbors.
            All the vertices of $\tilde{\varphi}_{u_1}$ joins the cluster of $v$.
			In a similar manner, $\tilde{\varphi}_{u_2}$ and
            $\tilde{\varphi}_{u_3}$ are also joined into the cluster of $v$. In
            the end of the algorithm, the $(i,\eps)$-cluster of $v$ is
            $\varphi_{v}=\tilde{\varphi}_{v}\cup\tilde{\varphi}_{u_1}\cup\tilde{\varphi}_{u_2}\cup\tilde{\varphi}_{u_3}$.
			}
		\label{fig:i-eps-clustering}
	\end{center}
	\hrule
\end{figure}

Note that we have $n_i\le \frac{n}{\eps g^{ik}}$. Using
the above procedure, we can construct the $(i+1,\eps)$ clustering from the
$(i,\eps)$ clustering in $O(n_{i})$ time. Therefore we can construct the
clusters for all the levels in
$O\left(\sum_{i\ge0}n_{i}\right)=O\left(n\sum_{i\ge0}\frac{1}{g^{ik}}\right)=O(n)$ 
time (if we are given the MST).

With this tool in hand, we may now prove \Cref{lem:fw}.
\begin{proof}[Proof of \Cref{lem:fw}]
    The proof constructs an algorithm consisting of two phases. The
    \emph{preparation phase} where $A_2$ is used to reduce the problem to a
    graph where all MST edges have weight $1$,
    and the \emph{bootstrapping phase} where we perform an iterative
    clustering of the graph to obtain several graphs with small aspect-ratio,
    where we can apply $A_1$.

    \textbf{Preparation phase:}
    Let $T$ be an MST of $G$ and let $w' = \sum_{(u,v)\in T}
    \frac{w(u,v)}{n-1}$. Define $G_2$ to be $G$ with all edges of weight
    greater than $w'/\eps$ removed and let $H^2$ be the spanner resulting from
    running $A_2$ on $G_2$. Next, we construct $G_1$ from $G$ as follows. First,
    round up the weight of each edge in $G$ to the nearest multiple of $w'$.
    For each edge $e\in T$ subdivide
    it such that each edge of the resulting
    MST has weight $w'$ 
    \footnote{Formally, for an edge $e=\{v,v'\}\in T$ of weight $cw'$, we add $c-1$ new vertices $u_1,\dots,u_{c-1}$ and replace the edge $\{v,v'\}$ with the edges $\{v,u_1\},\{u_1,u_2\},\dots,\{u_{c-2},u_{c-1}\},\{u_{c-1},v'\}$, all with weight $w'$.}. 
    As the weight of each edge increase by at most an additive factor of
    $w'$, the weight of the MST increase by at most $(n-1)w'\le w(T)$. The new number of vertices is bounded by $\sum_{(u,v)\in T}\left\lceil \frac{w(u,v)}{w'}\right\rceil \le(n-1)+\frac{1}{w'}\cdot\sum_{(u,v)\in T}w(u,v)<2n$. 
    Finally, divide the weight of each edge by $w'$. This finishes the construction of $G_1$.

    \textbf{Bootstrapping phase:}
    We will now use $A_1$ to make a spanner $H^1$ for the graph $G_1$ created
    above. We start by partitioning the edges into sets $E_i$, where
    $E_i$ contains all edges of $G_1$ with weights in
    $[g^{ik},g^{(i+1)k})$. Note that since each MST edge of $G_1$ has
    weight $1$ we only need to consider edges with weight up to $O(n)$.
    Next, we let $T$ be an MST of $G_1$ and for all $i=0,1,\ldots,O(\log n)$ we
    create $T_i$ by contracting all clusters of an $(i,\eps)$-clustering of
    $T$, where the $(i,\eps)$-clustering is computed as described above. Note
    that $T_i$ is also a tree since each cluster is a connected subtree of $T$.
    We now construct graphs $G_i$ by taking $T_i$ and adding any minimum
    weight edge of $E_i$ going between each pair of clusters (i.e.
    nodes corresponding to clusters). Finally, we divide the weight of each
    non-MST edge of $G_i$ by $g^{ik}$. This gives us a graph with maximum
    weight $g^k$, where MST edges have weight $1$. We call this new weight
    function $w_i$. Let $H_i$ be the spanner
    obtained by running algorithm $A_1$ on $G_i$. Finally, let $H^1$ be the
    union of all $H_i$s, where each edge of $H_i$ is replaced by the
    corresponding edge(s) from $G$.

    \textbf{Analysis:}
    We set the final spanner $H = MST(G) \cup H^1\cup H^2$. To bound the
    stretch of $H$ first note that any edge of $G_2$ has stretch at most
    $f_2(k)$ from $H^2$. What remains is to bound the stretch of non-MST edges
    $(u,v)$ with $w(u,v) \ge w'/\eps$. First, observe that the rounding
    procedure used to create $G_1$ can at most increase the weight of $(u,v)$
    in $G_1$ by a factor of $(1+\eps)$ compared to $G$.

    Now assume that $(u,v)\in E_i$ for some $i$. Let $\vpi_{u}$ and
	$\vpi_{v}$ denote the clusters containing $u$, respectively, $v$ in $G_i$.
	If $\vpi_{u} = \vpi_{v}$ we know that the distance between $u$ and $v$
	using the MST is at most $4\eps g^{ik}$ and we are done. Thus, assume that
    $\vpi_{u} \ne \vpi_{v}$. By definition of $G_i$, there
	must be some edge $(\vpi_{u},\vpi_{v})$ in $G_i$ with
    $w_i(\vpi_{u},\vpi_{v})\le (1+\eps)\cdot w(u,v)/g^{ik}$. We know that there
    is a path $\{\vpi_{u}=\vpi_{z_0}\vpi_{z_1}\dots \vpi_{z_s}=\vpi_{v}\}$
    from $\vpi_{u}$ to $\vpi_{v}$ in $H_i$ of length at most $f_1(k)\cdot
    w_i(\vpi_{u},\vpi_{v})$.
    Recall that the minimum weight in $H_i$ is $1$, thus we have $s \le
    f_1(k)\cdot w_i(\vpi_u,\vpi_v)$. Furthermore, the diameter of each cluster,
    $\vpi_{z_q}$, is at most $4\eps g^{ik}$. We now conclude (see
    \Cref{fig:stretchCluster} for illustration)
	\begin{align}
        d_{H}(u,v) &
        \le\sum_{q=0}^{s}\diam_{T}(\vpi_{z_{q}})+g^{ik}\sum_{q=0}^{s-1}w_i(\vpi_{z_q},
        \vpi_{z_{q+1}}) \nonumber\\
        &\le (s+1)\cdot 4\eps g^{ik}+g^{ik}f_1(k)\cdot
        w_i(\vpi_u,\vpi_v)\nonumber\\
        &\le (1+8\eps)\cdot g^{ik}\cdot f_1(k)\cdot w_i(\vpi_u,\vpi_v)\nonumber\\
        &\le (1+O(\eps))\cdot f_1(k)\cdot w(u,v) \label{eq:cluster-Stretch}
	\end{align}
    
	\begin{figure}
		\begin{center}
			\includegraphics[width=0.45\textwidth]{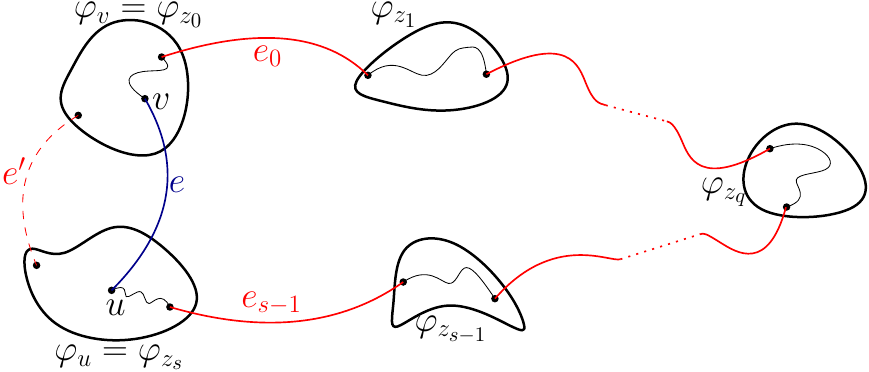}
			\caption{\small 
				$e=(u,v)$ is an edge (colored in blue) in $E_i$, such that $u$ and $v$ belong to the $i$-clusters  $\vpi_u,\vpi_v$, respectively.
				The closed bold black curves represent $i$-clusters. The red edges represent edges in $H_i$. The thin black curves represent MST paths.
				There is an edge $e'$ between $\vpi_{v}$ to $\vpi_{u}$ in $G_i$. Therefore $H_i$ contains a short path between $\vpi_{v}$ to $\vpi_{u}$.}
			\label{fig:stretchCluster}
		\end{center}
		\hrule
	\end{figure}

    Next we consider the size and lightness of $H$. First we see that, since
    $G_2$ is a subgraph of $G$, the spanner $H^2$ has size at
    most $O(s_2(k)\cdot n^{1+1/k})$. Furthermore since every edge in $G_2$
    has weight at most $w'/\eps$ the total weight of $H^2$ is
    \[
        O\!\left(s_2(k)\cdot n^{1+1/k} \cdot \frac{w(MST(G))}{(n-1)\eps}\right)
        = \Oeps\!\left(s_2(k)\cdot n^{1/k}\right)\ .
    \]
    Recall that $n_i$ is the number of $(i,\eps)$ clusters, and therefore also the number of nodes in $T_i$. We can bound the total weight of
    $H^1$ by
    \begin{align*}
        O\!\left(\sum_{i=0}^\infty g^{ik} \cdot w(T_i)\cdot l_1(k)\cdot
            n_i^{1/k}\right)
        &= O\!\left(\sum_{i=0}^\infty g^{ik}\cdot n_i\cdot l_1(k)\cdot
            n_i^{1/k}\right) \\
        &= \Oeps\!\left(n\cdot \sum_{i=0}^\infty l_1(k)\cdot
            \left(\frac{n}{g^{ik}}\right)^{1/k}\right) \\
        &= \Oeps\!\left(l_1(k)\cdot n^{1+1/k}\cdot \sum_{i=0}^\infty
            \frac{1}{g^i}\right) \\
        &= \Oeps\!\left(l_1(k)\cdot n^{1+1/k}\right)\ .
    \end{align*}
    Since the MST of $G_1$ has weight $n-1$ it follows that
    $H^1$ has lightness $O(l_1(k)\cdot n^{1/k})$ w.r.t. $G_1$ and thus also
    $G$. The size can be bounded in a similar fashion.

	The total running time of the algorithm is $O(m + n\log n)$ to find the
    MST of $G$ and divide edges to $G_1$ and $G_2$, $T_2(n,m,k)$ for creating $H^2$, 
    $O(n)$ for creating the different $(i,\eps)$-clusters and additional $O(m+nk)=O(m+n\log n)$ to create the graphs $G_i$,
    as described above. What is left is to bound the time needed to create
    the spanners $H_i$. Let $m_i = |E_i|$, then this time can be bounded by
	\[
	O\!\left(\sum_{i=0}^{\infty}m_{i}+n_{i}+T_{1}(n_{i},m_{i},k)\right)=O\!\left(m+\sum_{i=0}^{\infty}T_{1}\!\left(\frac{n}{g^{ik}},m_{i},k\right)\right)=O\!\left(m+T_{1}\!\left(n,m,k\right)\right)\ .\qedhere
	\]

\end{proof}

\section{Efficient approximate greedy
spanner}\label{sec:fast_greedy}
In this section we will show how to efficiently implement algorithms $A_1$ and
$A_2$ of \Cref{lem:fw} in order to obtain
\Cref{thm:slow_good,thm:fast_light}. We do this by implementing an
``approximate-greedy'' spanner, which uses an incremental approximate distance
oracle to determine whether an edge should be added to the spanner or not.

We first prove \Cref{thm:fast_light} and then show in \Cref{sec:quad_greedy}
how to modify the algorithm to give \Cref{thm:slow_good}. We will use
\Cref{thm:dist_oracle} as a main building block, but defer the proof of this
theorem to \Cref{sec:DistOracle}. Our $A_1$
is obtained by the following lemma giving stretch $O(k)$ and optimal size
$O(n^{1+1/k})$ and lightness $O(n^{1/k})$ for small weights. 
\begin{lemma}\label{lem:fast_greedy}
    Let $G = (V,E,w)$ be an undirected graph with $m = |E|$ and $n =
    |V|$ and integer edge weights bounded from above by $W$. Let $k$ be
    a positive integer and let $\eps'>0$ be a constant. Then one can deterministically construct an $O(k)$-spanner of $G$ with size
    $O(n^{1+1/k})$ and lightness $O(n^{1/k})$ in time $O\!\left(m +
    kWn^{1+1/k+\eps'}\right)$.
\end{lemma}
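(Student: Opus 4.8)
The plan is to implement the ``approximate-greedy'' spanner described in the paper overview, using the incremental distance oracle of \Cref{thm:dist_oracle} to decide edge insertions, and then to bound size, lightness and running time separately. First I would set the threshold parameter $d$ of the oracle to a constant (chosen as a function of $\eps'$ and the hidden constant stretch from \Cref{thm:dist_oracle}); concretely, if the oracle guarantees estimate $\hat d(u,v)\le c\cdot d(u,v)$ whenever $d(u,v)\le d$, I would run a greedy-type scan of the edges in non-decreasing weight order, maintaining the current spanner $H$ as the incremental graph fed to the oracle. When processing $e=(u,v)$ with weight $w(e)$, I would query $\hat d(u,v)$: if $\hat d(u,v) > (2k-1)\cdot c\cdot w(e)$ — equivalently, the oracle cannot certify a short path — I add $e$ to $H$ (and insert it into the oracle); otherwise I discard $e$. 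Because $w(e)\le W$ and the minimum weight is $1$, every ``relevant'' distance that must be certified is at most $(2k-1)\cdot c\cdot W$, so I would instantiate the oracle with $d=\Theta(kW)$ (rescaling weights to integers), which is where the $kW$ factor in the running time comes from.

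For the stretch bound, I would argue exactly as in the classic greedy analysis combined with the one-sided error of the oracle: the estimate always satisfies $\hat d(u,v)\ge d(u,v)$, so whenever an edge $e$ is discarded we have $d_H(u,v)\le \hat d(u,v)\le (2k-1)c\cdot w(e)$ at that moment, and since edges are processed in non-decreasing weight order and $H$ only grows, this path survives; hence $H$ is a $(2k-1)c = O(k)$-spanner (with the $O(\cdot)$ hiding the factor exponential in $1/\eps'$, as the paper warns). For the size and lightness bounds, the key point is that the set of edges we keep is a subset of what the \emph{exact} greedy $t'$-spanner could keep for $t' = (2k-1)$: more carefully, any edge $e=(u,v)$ that we add satisfies $d_H(u,v) > w(e)$ at insertion time (in fact $>(2k-1)w(e)$ since $c\ge 1$), so $H$ has girth-type properties — precisely, the analysis of \cite{Althofer1993,FiltserS16} applies because our $H$ satisfies the defining property of a greedy $(2k-1)$-spanner (every inserted edge closes a cycle of length $>2k\cdot w(e)$ in the current graph), yielding $O(n^{1+1/k})$ edges and $O(n^{1/k})$ lightness. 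I would invoke the existential-optimality / standard greedy size-and-lightness argument here rather than re-deriving it.

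The running time is the crux. Sorting (or bucket-sorting, since weights are integers in $[1,W]$) the edges costs $O(m + W)$, or $O(m\log n)$ naively; feeding all $m$ edges as insertions into the oracle costs, by \Cref{thm:dist_oracle}, total time $O(m^{1+\eps'}\cdot d) = O(m^{1+\eps'}\cdot kW)$; and the $m$ queries cost $O(m)$ total since each is $O(1)$. Since $m = O(n^{1+1/k})$ once we have observed that $H$ — and hence the graph inside the oracle — never exceeds that many edges, I would actually run the oracle only on the evolving spanner (which has $O(n^{1+1/k})$ edges) while the $m-|E(H)|$ rejected edges only cost one $O(1)$ query each; this replaces $m^{1+\eps'}$ inside the oracle bound by $(n^{1+1/k})^{1+\eps'} = O(n^{1+1/k+\eps''})$ for a slightly larger constant $\eps''$, absorbed by renaming $\eps'$. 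Altogether the time is $O(m + kW\cdot n^{1+1/k+\eps'})$ as claimed. The main obstacle I anticipate is the bookkeeping needed to show the oracle's total update time is charged against the \emph{spanner} edge count $n^{1+1/k}$ rather than the full $m$ — i.e., that feeding only accepted edges as insertions (and handling rejected edges purely as queries) is legitimate and still gives correct estimates at every step — together with making the interplay between the oracle's multiplicative error $c$ and the greedy threshold tight enough that the size/lightness analysis of the exact greedy spanner still applies verbatim.
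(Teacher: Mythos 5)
Your algorithm, stretch argument, and running-time accounting (inserting only the accepted edges into the oracle so that its update time is charged to $|E(H)|=O(n^{1+1/k})$ rather than $m$, answering the $m$ rejections with $O(1)$-time queries, and taking $d=\Theta(kW)$) all match the paper's proof. The genuine gap is in the size/lightness step. With your insertion threshold $\hat d(u,v)>(2k-1)c\cdot w(e)$, the strongest property you can extract is that every inserted edge satisfied $d_H(u,v)>(2k-1)w(e)$ at insertion time (the oracle's factor $c$ cancels exactly). That is the defining property of the \emph{exact} greedy $(2k-1)$-spanner, and the results you cite do not give lightness $O(n^{1/k})$ for it: Alth\"ofer et al.~\cite{Althofer1993} give lightness $O(n/k)$, and the $O_\eps(n^{1/k})$ bounds of \cite{ChechikW16,FiltserS16} are for stretch $(1+\eps)(2k-1)$ and their proofs need that $(1+\eps)$ slack. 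Concretely, the existential-optimality argument you want to invoke runs as follows: show that \emph{any} $(1+\eps)(2k-1)$-spanner $H'$ of $H$ must equal $H$ (via the ``last edge of $P\cup\{(u,v)\}$'' argument), and then take $H'$ to be the spanner of Theorem~\ref{lem:light_spanner} applied to $H$. For that argument one needs every inserted edge to satisfy $d_H>(1+\eps)(2k-1)w(e)$ at insertion time, which your threshold does not guarantee; a $(1+\eps)(2k-1)$-spanner of $H$ can legitimately omit one of your edges without contradicting $d_H>(2k-1)w(e)$, and for exact stretch $2k-1$ no existential $O(n^{1/k})$ lightness bound is available to import.

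The fix is exactly what the paper does: inflate the threshold to $t=c\,(1+\eps)(2k-1)$ for a fixed constant $\eps$ (the paper sets $\eps=1$), and set $d\ge t\cdot W$ so that the oracle's approximation guarantee applies to the comparison path of weight at most $(1+\eps)(2k-1)w(x,y)$; this keeps the stretch $O(k)$ and makes the $H'=H$ argument go through. Two smaller omissions in the same step: the lightness obtained from Theorem~\ref{lem:light_spanner} is relative to $MST(H)$, so one must also observe that $MST(H)=MST(G)$ (your algorithm adds every Kruskal MST edge, since the query returns $\infty$ across distinct components), and your remark that your edge set is ``a subset of what the exact greedy spanner could keep'' is not by itself usable --- the whole point of the Filtser--Solomon-style argument is to reason about $H$ directly rather than compare it to a particular greedy run. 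The girth-based $O(n^{1+1/k})$ size bound does survive with your weaker property, so the issue is specifically the lightness claim.
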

We note that \Cref{lem:fast_greedy} above requires integer edge
weights, but we may obtain this by simply rounding up the weight of each edge
losing at most a factor of $2$ in the stretch. Alternatively we can use the
approach of \Cref{lem:quad_greedy} in \Cref{sec:quad_greedy} to
reduce this factor of $2$ to $(1+\eps)$.

Our $A_2$ will be obtained by the following lemma, which is essentially a
modified implementation of \Cref{lem:fast_greedy}.
\begin{lemma}\label{lem:fast_greedy_notlight}
    Let $G = (V,E,w)$ be an edge-weighted graph with $m = |E|$ and $n = |V|$. Let $k$ be a positive integer and let $\eps'>0$ be a constant. Then one can deterministically construct an $O(k)$-spanner of
    $G$ with size $O(n^{1+1/k})$ in time $O\!\left(m +
    kn^{1+1/k+\eps'}\right)$.
\end{lemma}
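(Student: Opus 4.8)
I want to prove Lemma~\ref{lem:fast_greedy_notlight}: construct an $O(k)$-spanner of size $O(n^{1+1/k})$ in $O(m + kn^{1+1/k+\eps'})$ time, with no lightness guarantee. The key point is that Lemma~\ref{lem:fast_greedy} already gives almost exactly this, except its running time carries a factor of $W$ (the maximum integer weight) and it additionally promises lightness. So the task is really to remove the $W$ dependence. The plan is to run the approximate-greedy algorithm — iterate over edges in non-decreasing weight order, query an incremental approximate distance oracle on the endpoints, and add the edge iff the estimated distance exceeds $t' \cdot w(e)$ — but feed it a \emph{rescaled / bucketed} version of the weights so that the oracle's distance threshold $d$ (which is what actually multiplies into the running time in Theorem~\ref{thm:dist_oracle}, in the guise of $W$ inside Lemma~\ref{lem:fast_greedy}) stays bounded by a function of $k$ alone, independent of the true aspect ratio.

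**The main steps.** First I would recall the approximate-greedy framework from the start of Section~\ref{sec:fast_greedy}: with thresholds $t < t'$, the resulting graph $H$ is a $t'$-spanner whose sparsity is dominated by that of the exact greedy $t$-spanner, hence $O(n^{1+1/k})$ when $t' = O(k)$; this gives the size bound for free and is exactly where we \emph{drop} the lightness claim, since without the clustering/MST structure there is nothing to charge the weight against. Second, I would partition the edges by weight scale into buckets $B_j = \{ e : w(e) \in [2^j, 2^{j+1}) \}$. Within a single bucket the aspect ratio is at most $2$, so running the approximate-greedy step restricted to one bucket only requires the oracle to resolve distances up to $d = \Theta(k)$ (since an edge $e$ in the bucket is only relevant when $d_H(u,v) \le t' w(e) = O(k) \cdot w(e)$, and all edges in the bucket have comparable weight, so we can renormalize the bucket's weights to integers in $\{1, 2\}$ and take $d = \Theta(t') = \Theta(k)$). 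Third — and this is the delicate coordination step — I would process the buckets in increasing order of $j$, but maintain a \emph{single} global spanner $H$ and a single oracle instance across buckets is not enough, because distances realized through much lighter edges matter for heavier ones; instead I would use the standard scale-separation trick: when processing bucket $B_j$, contract or ignore the fine structure below scale $2^j$ appropriately so that the oracle only ever sees a graph of bounded effective aspect ratio. Concretely, this mirrors the proof of Lemma~\ref{lem:fast_greedy} but replaces "the weight bound $W$" with "the per-scale weight bound $O(1)$", so each scale costs $O(m_j + k n^{1+1/k+\eps'})$ oracle time via Theorem~\ref{thm:dist_oracle} with $d = \Theta(k)$, and summing over the $O(\log(\text{aspect ratio}))$ scales — together with the geometric decay in the number of vertices that remain relevant at coarse scales — collapses to $O(m + k n^{1+1/k+\eps'})$.

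**Where the real difficulty lies.** The main obstacle is exactly the cross-scale interaction: the oracle of Theorem~\ref{thm:dist_oracle} is incremental (insertions only) and gives good estimates only for distances $\le d$, so to keep $d = O(k)$ I must ensure that whenever I query the endpoints of a scale-$2^j$ edge, every relevant shorter path in $H$ between those endpoints has been "coarsened" to have only $O(k)$ hops of roughly-equal length. This is precisely why the full framework Lemma~\ref{lem:fw} does the $(i,\eps)$-clustering and why Lemma~\ref{lem:fast_greedy} needs integer weights bounded by $W$: there, the clustering already did the coarsening, so the oracle only runs on a bounded-aspect-ratio graph. For the present lemma I don't get to assume clustering, so I would instead argue directly that it suffices to maintain, for each scale, a \emph{separate} oracle instance fed only the edges of $H$ whose weight lies in a window $[2^{j-c\log k}, 2^{j+1})$ for a suitable constant $c$ — edges lighter than that contribute a total length less than $w(e)$ along any relevant path and can be safely over-estimated as "absent," at the cost of a constant factor absorbed into the $O(k)$ stretch. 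Rescaling this window to integers in $\{1,\dots,\poly(k)\}$ and invoking Theorem~\ref{thm:dist_oracle} with $d = \poly(k)$ gives per-scale time $O(m_j^{1+\eps'} \cdot \poly(k))$; since $\sum_j m_j = m$ and the number of vertices active at scale $j$ decays geometrically (each coarse scale only sees contracted clusters), the totals telescope to the claimed bound. I would also double-check the boundary bookkeeping — that an edge is queried against exactly the oracle instances in whose window its scale falls, and that adding it to $H$ triggers the corresponding insertions — and confirm the stretch stays $O(k)$ after the $O(\log k)$ overlapping windows and the scale-truncation each contribute only constant multiplicative slack.
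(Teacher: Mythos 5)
Your proposal goes a much harder route than necessary, and its two load-bearing steps do not hold up. First, the ``scale separation'' step: if, when querying the endpoints of a scale-$2^j$ edge, you treat edges below the window $[2^{j-c\log k},2^{j+1})$ as absent, you only \emph{over}-estimate distances, which is harmless for stretch but is exactly what breaks the sparsity argument -- the greedy/girth-type size bound needs that an edge is added only when its endpoints are genuinely far apart in the \emph{current full} spanner $H$, and a short path consisting mostly of very light edges (which can carry essentially all of the length budget $t'\cdot w(e)$, not a negligible fraction of it) becomes invisible, so nothing prevents adding far more than $O(n^{1+1/k})$ edges. If you instead contract the light structure, you need the contracted clusters to have diameter small relative to $2^j$, which a weight window alone does not give (a long path of light edges has small edge weights but huge diameter); that is precisely what the $(i,\eps)$-clustering of \Cref{lem:fw} provides via unit-weight MST edges, machinery you do not have here -- and cannot invoke, since this lemma is itself the $A_2$ ingredient fed into that framework. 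Second, the claimed geometric decay of ``active vertices'' across scales is unjustified without that clustering, so the sum over scales does not telescope, and a per-window size bound of $O(n_j^{1+1/k})$ would in any case accumulate a factor depending on the number of nonempty scales (or at least $\log k$ from the overlapping windows), contradicting the clean $O(n^{1+1/k})$ size claimed by the lemma.

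The paper avoids all of this with one observation: since no lightness is required, weights can be ignored inside the oracle altogether. One runs \Cref{alg:fast_greedy} with edges in non-decreasing weight order, but inserts every spanner edge into the oracle of \Cref{thm:dist_oracle} as an \emph{unweighted} edge and uses the hop threshold $d=c_1(2k-1)=O(k)$, adding $(u,v)$ only if the estimated hop distance exceeds this threshold. Stretch $O(k)$ follows because any $\le d$-hop path present when $(u,v)$ is examined uses only edges of weight at most $w(u,v)$, so its weighted length is at most $d\cdot w(u,v)$; size $O(n^{1+1/k})$ follows because every added edge has endpoints at hop distance more than $2k-1$, so $H$ has girth greater than $2k$ and the Moore bound applies; and the running time is $O(m)$ for queries plus the oracle's update time with $d=O(k)$ over $O(n^{1+1/k})$ insertions, i.e.\ $O(m+kn^{1+1/k+\eps'})$. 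In short, the weight-bucketing is not just more complicated than the paper's argument -- as proposed it leaves the size bound (and, in the contraction variant, the stretch bound) unproven.
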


Combining \Cref{lem:fw} of \Cref{sec:fw} with
Lemmas~\ref{lem:fast_greedy} and~\ref{lem:fast_greedy_notlight} above
immediately gives us a spanner with stretch $O(k)$, size
$O(n^{1+1/k})$ and lightness $O(n^{1/k})$ in time $O(m+n^{1+1/k+\eps''})$ for
any constant $\eps''>0$. This is true because we may assume that $k\le \gamma\log n$ for
any constant $\gamma > 0$ (as the improvement in sparsity and lightness obtained by picking $k>\gamma\log n$ is bounded by $2^\gamma$), and thus by picking $\gamma$ and $\eps'$ accordingly we have
that the running time given by \Cref{lem:fw} can be bounded by
\[
    O\!\left(m+kWn^{1+1/k+\eps'} + kn^{1+1/k+\eps'}\right) =
    O\!\left(m+kg^kn^{1+1/k+\eps'}\right) =
    O\!\left(m+n^{1+1/k+\eps''}\right)\ .
\]

\subsection{Details of the almost-greedy spanner}
Set $\eps=1$ \footnote{In \Cref{sec:quad_greedy} we let $0 < \eps < 1$ here to be arbitrary small parameter.}.
Our algorithm for \Cref{lem:fast_greedy} is described below in
\Cref{alg:fast_greedy}. 
It
computes a spanner of stretch $c_1(1+\eps)(2k-1)$, where $c_1 = O(1)$ 
is the
stretch of our incremental approximate distance oracle in
\Cref{thm:dist_oracle}. Let $t = c_1(1+\eps)(2k-1)$ throughout the
section.
\begin{algorithm}
    \caption{\FuncSty{Approximate-Greedy}}
    \label{alg:fast_greedy}
    \DontPrintSemicolon
    \SetKwInOut{Input}{input}\SetKwInOut{Output}{output}
    \Input{Graph $G = (V,E,w)$, Parameters $\eps,k$}
    \Output{Spanner $H$}
    \BlankLine
    Create $H = (V,\emptyset)$\;
    Initialize incremental distance oracle (\Cref{thm:dist_oracle}) on
    $H$ with $d = t\cdot W$\;
    \For{$(u,v)\in E$ in non-decreasing order}{
        \If{$\hat{d}_H(u,v) > t\cdot w(u,v)$}{
            Add $(u,v)$ to $H$\;
        }
    }
    \Return $H$\;
\end{algorithm}
With \Cref{alg:fast_greedy} defined we are now ready to prove
\Cref{lem:fast_greedy}.
\begin{proof}[Proof of \Cref{lem:fast_greedy}]
    Let $H$ be the spanner created by running
    \Cref{alg:fast_greedy} on the input graph $G$ with the input
    parameters.

    \textbf{Stretch:}
    We will bound the stretch by showing that for any edge $(u,v)\in E$ there
    is a path of length at most $t\cdot w(u,v)$ in $H$.
    Let $(u,v)$ be any edge considered in the for loop of
    \Cref{alg:fast_greedy}. If $(u,v)$ was added to $H$ we are done.
    Thus, assume that $(u,v)\notin H$. In this case we have
    $\hat{d}_H(u,v) \le t\cdot w(u,v)$ as $(u,v)$ would have been
    otherwise added to $H$. The lemma now follows by noting that $d(u,v)
    \le \hat{d}_H(u,v)$ by \Cref{thm:dist_oracle}.

    \textbf{Size and lightness:}
    Next we bound the size and lightness of $H$. Our proof is very similar to
    the proof of Filtser and Solomon for the greedy spanner~\cite{FiltserS16}.
    However, we need to be careful as we are using an approximate distance
    oracle and do not have the exact distances when inserting an edge.
    Let $H'$ be \emph{any} spanner of $H$ with stretch $(1+\eps)(2k-1)$. We
    will argue that $H' = H$.
    To see this let $(u,v)\in H\setminus H'$ be any edge contradicting the
    above statement. Then there must be a path $P$ in $H'$ connecting $u$ and
    $v$ with $w(P) \le (1+\eps)(2k-1)\cdot w(u,v)$. Let $(x,y)$ be the last
    edge in $P\cup \{(u,v)\}$ examined by \Cref{alg:fast_greedy}. It
    follows that $w(x,y)\ge w(u,v)$. As $P\cup \{(u,v)\}\in H$ it follows that
    all the edges of $(P\cup \{(u,v)\})\setminus (x,y)$ were already in $H$ 
    when $(x,y)$ was added. These edges form a path in $H$ connecting $x$ and
    $y$ of weight
    \[
        w(P) - w(x,y) + w(u,v) \le w(P) \le (1+\eps)(2k-1)\cdot w(u,v) \le
        (1+\eps)(2k-1)\cdot w(x,y)\ .
    \]
    It follows that $d(x,y) \le (1+\eps)(2k-1)\cdot w(x,y)\le d$ just
    before $(x,y)$ was added to $H$, and by \Cref{thm:dist_oracle} that
    $\hat{d}_H(x,y) \le t\cdot w(x,y)$. Thus
    \Cref{alg:fast_greedy} did not add the edge $(x,y)$ to $H$, which
    is a contradiction. We conclude that $H' = H$.

    Now, since $H'$ could be any spanner of $H$, we may in particular choose it
    to be the $(1+\eps)(2k-1)$ spanner from \Cref{lem:light_spanner}.
    It now follows immediately that $H=H'$ has size $O(n^{1+1/k})$. For the
    lightness we know that $H'$ has lightness $O(n^{1/k})$ with regard to the
    MST of $H$. Thus, if we can show that the MST of $H$ is the same as the MST
    of $G$ we are done. However, this follows by noting that
    \Cref{alg:fast_greedy} adds exactly the MST of $G$ to $H$ that
    would have been added by Kruskal's algorithm \cite{Kruskal56}, since each
    such edge connects two disconnected components. Thus the MST of $G$ and $H$
    have the same weight which completes the proof.

    \textbf{Running time:}
    In \Cref{alg:fast_greedy} we perform $m$ queries to the
    incremental distance oracle of \Cref{thm:dist_oracle} each of which
    take $O(1)$ time. We also perform $|E(H)|$ insertions to the incremental
    distance oracle. We invoke \Cref{thm:dist_oracle} using $\eps^*$
    picked such that $1/\eps^*$ is integer and $\eps^* + \eps^*/k  \leq \eps'$.
    Since $d = O(kW)$, it follows from \Cref{thm:dist_oracle} and the
    size bound above that running time of the for-loop of
    \Cref{alg:fast_greedy} is
    \[
        O(m+d|E(H)|^{1 + \eps^*}) = O(m+kWn^{1+1/k +\eps^* + \eps^*/k}) = O(m+kWn^{1+1/k +\eps'})\ .
    \]
    To achieve the non-decreasing order we may simply run the algorithm of
    Baswana and Sen~\cite{BaswanaS07} first with parameter $k = 1/\eps'$. This
    gives an additional factor of $O(1/\eps')$ to the stretch, but leaves us with a graph with only $O(n^{1+\eps'})$ edges which we may then sort.
\end{proof}

Next, we sketch the proof \Cref{lem:fast_greedy_notlight}, by explaining
how to modify the proof of \Cref{lem:fast_greedy}.
\begin{proof}[Proof of \Cref{lem:fast_greedy_notlight}]
    Recall that $c_1$ is defined as the constant stretch provided by
    \Cref{thm:dist_oracle}.
    We use \Cref{alg:fast_greedy} with the following modifications: (1)
    we pick $d = c_1(2k-1)$,~ (2) when adding an edge to
    the distance oracle we add it as an unweighted edge, ~(3) we add an edge if
    its endpoints are not already connected by a path of at most $d$ edges
    according to the approximate distance oracle.

    The stretch of the spanner follows
    by the same stretch argument as in \Cref{lem:fast_greedy} and the fact that we consider
    the edges in non-decreasing order. To see that the size of the spanner is
    $O(n^{1+1/k})$ consider an edge $(u,v)$ added to $H$ by the modified
    algorithm. Since $(u,v)$ was added to $H$ we know that the distance
    estimate was at least $c_1(2k-1)$. It thus follows from
    \Cref{thm:dist_oracle} that $u$ and $v$ have distance at least $2k$
    in $H$ and therefore $H$ has girth at least $2k+1$.
    It now follows that $H$ has $O(n^{1+1/k})$ edges by a standard
    argument. The running time of this modified algorithm follows directly
    from \Cref{thm:dist_oracle}.
\end{proof}

\subsection{Near-quadratic time implementation}\label{sec:quad_greedy}
The construction of the previous section used our result from
\Cref{thm:dist_oracle} to efficiently construct a spanner losing a
constant factor exponential in $1/\eps$ in the stretch. We may instead
use the seminal result of Even and Shiloach~\cite{EvenS81} to obtain the same
result with stretch $(1+\eps)(2k-1)$ at the cost of a slower running time as
detailed in \Cref{thm:slow_good}. It is well-known that the decremental data structure in~\cite{EvenS81} can be made to work with the same time guarantees in the incremental setting; we will make use of this result:
\begin{theorem}[\cite{EvenS81}]\label{thm:es-tree}
    There exists a deterministic incremental APSP data structure for graphs
    with integer edge weights, which answers distance queries within a given
    threshold $d$ in $O(1)$ time and has total update time $O(mnd)$.
\end{theorem}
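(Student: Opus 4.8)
The plan is to reduce the all-pairs problem to $n$ independent single-source problems: maintain, for every possible source $s\in V$, a separate ``incremental Even--Shiloach'' structure that stores the current distance from $s$ to every vertex (capped at $d$), and answer a query $(u,v)$ by reading off the stored value at $v$ in the structure rooted at $u$. Since this value is stored explicitly, the query takes $O(1)$ time, so the whole task is to build one single-source structure with total update time $O(md)$. Note first that since we only care about distances at most $d$ and weights are positive integers, any edge of weight exceeding $d$ is irrelevant to such distances and can simply be discarded on arrival; hence we may assume every edge weight lies in $\{1,\dots,d\}$.

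Fixing a source $s$, I would maintain an array $\ell[\cdot]$ with the invariant $\ell[v]=\min\{d+1,\,d_G(s,v)\}$ at all times, where the sentinel value $d+1$ means ``$>d$''. Initialization is a single Dijkstra run from $s$ on the initial graph, implemented with Dial's bucket array (buckets $0,1,\dots,d$ plus an overflow list), costing $O(m_0+d)$ on $m_0$ initial edges. On insertion of an undirected edge $(x,y)$ of weight $w$, I would test the two relaxation conditions $\ell[x]+w<\ell[y]$ and $\ell[y]+w<\ell[x]$ in $O(1)$ time; if neither holds, nothing changes. Otherwise the lower-level endpoint improves the other, so I set its new tentative level, put it in the corresponding bucket, and run the standard Dijkstra relaxation loop restricted to buckets $0,\dots,d$: repeatedly extract a vertex $z$ of minimum current level, and for each incident edge $(z,z')$ with $\ell[z]+w(z,z')<\ell[z']$ lower $\ell[z']$ and move it to the appropriate bucket, never inserting any vertex whose level would exceed $d$.

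For correctness, the two facts to invoke are that under edge insertions the true distances $d_G(s,\cdot)$ are monotonically non-increasing, and that the relaxation loop above is exactly Dijkstra run on the set of vertices whose distance actually decreased; the usual Dijkstra invariant (an extracted vertex carries its true capped distance) then re-establishes $\ell[v]=\min\{d+1,\,d_G(s,v)\}$ after every update. For the running time I would charge by level decreases: each time $\ell[z]$ strictly decreases we scan $z$'s adjacency list in $O(\deg(z))$ time, and since $\ell[z]$ is integer-valued, only decreases, and never drops below $0$, it can change at most $d+1$ times over the whole sequence, giving $O\!\big(\sum_{z}(d+1)\deg(z)\big)=O(md)$ of relaxation work for source $s$. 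The only other cost is scanning over empty buckets during a propagation, which is $O(d)$ per insertion, i.e.\ $O(md)$ in total for $s$; adding the $O(m_0+d)$ initialization and summing over the $n$ sources gives total update time $O(mnd)$.

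The main obstacle is the priority-queue bookkeeping: one must ensure each vertex is processed with its final (smallest) tentative level rather than a stale larger one, which is exactly what extracting the current minimum from Dial's buckets guarantees in amortized $O(1)$, and one must make sure the level-decrease charging covers all work — in particular that no work is done for a source unless some level strictly decreases, which the $O(1)$ per-insertion pre-check enforces. The remaining details (handling the sentinel $d+1$, undirected incidence lists, and the trivial case of an initially empty graph) are routine, and the $O(1)$ query time is immediate from storing distances explicitly.
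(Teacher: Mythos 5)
Your proposal is correct: maintaining one capped single-source structure per source, re-running a bucketed Dijkstra only on vertices whose (integer, monotonically non-increasing, $[0,d+1]$-valued) label strictly decreases, and charging each adjacency scan to a label decrease gives $O(md)$ per source and hence $O(mnd)$ total, with $O(1)$ queries by table lookup. This is essentially the classical Even--Shiloach-style argument behind the result; the paper does not prove Theorem~\ref{thm:es-tree} but only cites \cite{EvenS81}, so your write-up coincides with the standard proof of the cited statement rather than taking a different route.
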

Here, the threshold means that if the distance between two nodes is at most
$d$, the data structure outputs the exact distance and otherwise it outputs
$\infty$ (or some other upper bound).

To obtain \Cref{thm:slow_good} we use the framework of
\Cref{sec:fw}. For the algorithm $A_2$ we may simply use the
deterministic spanner construction of Roditty and Zwick~\cite{RodittyZ11}
giving stretch $2k-1$ and size $O(n^{1+1/k})$ in time $O(kn^{2+1/k})$. For
$A_1$ we will show the following lemma.
\begin{lemma}\label{lem:quad_greedy}
    Let $G = (V,E,w)$ be an undirected graph with $m = |E|$ and $n =
    |V|$, edge weights bounded from above by $W$ and where all MST
    edges have weight $1$. Let $k$ be a positive integer. Then one can deterministically construct a $(1+\eps)(2k-1)$-spanner of $G$ with size $\Oeps(n^{1+1/k})$ and lightness
    $\Oeps(n^{1/k})$ in time $\Oeps(m\log n + kWn^{2+1/k})$.
\end{lemma}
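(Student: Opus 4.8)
The plan is to adapt Algorithm~\ref{alg:fast_greedy} almost verbatim, but replace the call to the incremental distance oracle of Theorem~\ref{thm:dist_oracle} with the classical Even--Shiloach incremental APSP data structure of Theorem~\ref{thm:es-tree}, which returns \emph{exact} distances up to a threshold $d$ rather than an $O(1)$-approximation. Concretely, set $t = (1+\eps)(2k-1)$, initialize an Even--Shiloach structure on the empty spanner $H=(V,\emptyset)$ with threshold $d = t\cdot W$, and iterate over the edges $(u,v)\in E$ in non-decreasing weight order, inserting $(u,v)$ into $H$ (and the data structure) precisely when the reported distance $\delta_H(u,v) > t\cdot w(u,v)$. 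Since $w(u,v)\le W$, the relevant query distances never exceed $t\cdot W = d$, so the threshold is never an obstacle: every time the data structure would matter, it returns the true distance. This makes the resulting $H$ \emph{exactly} the greedy $(1+\eps)(2k-1)$-spanner, so its size and lightness bounds follow directly from Theorem~\ref{lem:light_spanner} together with the Filtser--Solomon existential optimality of the greedy spanner (the same reasoning already used in the proof of Lemma~\ref{lem:fast_greedy}, only now cleaner because no approximation slack is present).

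For the three correctness claims I would argue as follows. \textbf{Stretch:} for any edge $(u,v)$, if it was added we are done; otherwise the algorithm saw $\delta_H(u,v)\le t\cdot w(u,v)$ at that point, and distances in $H$ only shrink as more edges are inserted, so $\delta_H(u,v)\le t\cdot w(u,v)$ at the end. \textbf{Size and lightness:} observe that $H$ is exactly the output of the greedy algorithm run with stretch parameter $t=(1+\eps)(2k-1)$ on $G$ (the edge-insertion rule coincides, and the queries are exact), hence Theorem~\ref{lem:light_spanner} / \cite{FiltserS16} give size $\Oeps(n^{1+1/k})$ and lightness $\Oeps(n^{1/k})$; one must also note, as in Lemma~\ref{lem:fast_greedy}, that the MST edges of $G$ are all inserted (each connects two then-disconnected components, so its $H$-distance is $\infty > t\cdot w(e)$), so the lightness is measured against $w(MST(G))$ as required and, since all MST edges have weight $1$, $w(MST(G))=n-1$.

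\textbf{Running time:} Theorem~\ref{thm:es-tree} gives total update time $O(|E(H)|\cdot n\cdot d)$ over all insertions. With $|E(H)| = \Oeps(n^{1+1/k})$ (from the size bound just established) and $d = t\cdot W = O(kW)$, this is $\Oeps(k W n^{2+1/k})$. We also perform $m$ threshold queries at $O(1)$ time each, contributing $O(m)$, and we must produce the edges in non-decreasing weight order; as in the proof of Lemma~\ref{lem:fast_greedy} we first run the Baswana--Sen spanner with parameter $1/\eps'$ (or, here, simply sort after a sparsification step) to reduce to $O(n^{1+\eps''})$ edges before sorting, costing $O(m + n^{1+\eps''}\log n) = O(m\log n)$. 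Summing: $O(m\log n + k W n^{2+1/k})$, which is the claimed bound $\Oeps(m\log n + kWn^{2+1/k})$. The only subtlety — and the one place one must be slightly careful — is that the size bound used to charge the Even--Shiloach update time is itself a consequence of the greedy-optimality argument, so the bound on $|E(H)|$ must be established (as above) \emph{before} invoking Theorem~\ref{thm:es-tree}; but there is no circularity, since the greedy-optimality argument does not reference the running time. I expect this last bookkeeping point, rather than any deep idea, to be the main thing to get right.
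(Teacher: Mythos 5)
Your overall architecture (greedy insertion test driven by the Even--Shiloach structure with threshold $d=(1+\eps)(2k-1)W$, size and lightness via the exact-greedy/existential-optimality argument, update time charged as $d\cdot|E(H)|\cdot|V(H)|$, plus sorting) matches the paper. But there is a genuine gap: \Cref{thm:es-tree} only applies to graphs with \emph{integer} edge weights, and the weights in \Cref{lem:quad_greedy} are not assumed integral (they arise in the framework after dividing by $g^{ik}$, so they are genuinely fractional); the MST edges have weight $1$ and other weights merely lie in $[1,W]$. Your claim that the data structure ``returns the true distance'' whenever it matters, and the $O(mnd)$ total update time itself, both rely on integrality (distances decrease by at least $1$ per change), so running \Cref{alg:fast_greedy} ``almost verbatim'' on $G$ with exact ES queries is not available. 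Moreover, the obvious fix of rounding all weights up to integers destroys the claimed stretch: an edge of weight slightly above $1$ rounds to $2$, costing a factor close to $2$ rather than $1+\eps$ -- which is exactly the loss that \Cref{lem:quad_greedy} is designed to avoid (see the remark after \Cref{lem:fast_greedy}).

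The paper's proof handles this by splitting the instance: edges of weight less than $1/\eps$ are given to the deterministic sparse spanner of Roditty and Zwick \cite{RodittyZ11} (stretch $2k-1$, size $O(n^{1+1/k})$), whose total weight is $O(n^{1+1/k}/\eps)$ and hence harmless for lightness since $w(MST)=n-1$; only the edges of weight at least $1/\eps$ are rounded up to the nearest integer, where rounding costs merely a $(1+\eps)$ factor, and the ES-driven greedy is run on those. The final spanner is the union of the two. Your write-up needs this (or an equivalent) reduction to integer weights; without it the invocation of \Cref{thm:es-tree} is unjustified, and with naive rounding the stretch claim fails. The rest of your argument (stretch monotonicity, Kruskal argument that the MST of $H$ equals that of $G$, the bookkeeping that $|E(H)|$ is bounded before charging the ES update time, and the $O(m\log n)$ sorting term) is fine and coincides with the paper's reasoning.
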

\begin{proof}[Proof sketch]
    The final spanner will be a union of two spanners. Since
    \Cref{thm:es-tree} requires integer weights. We therefore need to treat
    edges with weight less than $1/\eps$ separately. For these edges we use
    the algorithm of Roditty and Zwick~\cite{RodittyZ11} to produce a spanner
    with stretch $2k-1$, size $O(n^{1+1/k})$ and thus total weight
    $O(n^{1+1/k}/\eps)$.
 
    For the remaining edges with weight at least $1/\eps$ we now round up the
    weight to the nearest integer incurring a stretch
    of at most a factor of $1+\eps$. We now follow the approach of
    \Cref{alg:fast_greedy} using the incremental APSP data structure
    of \Cref{thm:es-tree} and a threshold in line 4 of
    $(1+\eps)(2k-1)\cdot w(u,v)$ instead. We use the distance threshold $d
    = (1+\eps)(2k-1)\cdot W$.

    The final spanner, $H$, is the union of the two spanners above. The
    stretch, size and lightness of the spanner follows immediately from the
    proof of \Cref{lem:fast_greedy}. For the running time, we add in the
    additional time to sort the edges and query the distances to obtain a total
    running time of
    \[
        \Oeps\!\left(m\log n + d\cdot |E(H)|\cdot |V(H)|\right)
        = \Oeps\!\left(m\log n + kWn^{2+1/k}\right)\ .\qedhere
    \]
\end{proof}
Now, recall that $W = g^k$, where $k\le \log n$ and $g>1$ is a fixed parameter
of our choice. By picking $g$ such that $g^{2k}\le n^{\eps'}$  we get
a running time of $O(n^{2+1/k+\eps'})$ for $A_1$.
\Cref{thm:slow_good} now follows from \Cref{lem:fw}.

\section{Almost Linear Spanner}\label{sec:fastCW}
Our algorithm builds on the spanner of Chechik and
Wulff-Nilsen~\cite{ChechikW16}. Here we first describe their algorithm and then
present the modifications. Chechik and Wulff-Nilsen implicitly used our
general framework, and thus provide two different algorithms $A^{\mbox{{\tiny
CW}}}_1$ and $A^{\mbox{{\tiny CW}}}_2$.
 $A^{\mbox{{\tiny CW}}}_2$ is simply the greedy
spanner algorithm.

$A^{\mbox{{\tiny CW}}}_1$ starts by partitioning the non-MST edges into $k$
buckets, such that the $i$th bucket contains all edges with weight in
$[g^{i-1},g^i)$. The algorithm is then split into $k$ levels with the
$i$th bucket being treated in the $i$th level. In the $i$th level, the
vertices are partitioned into $i$-clusters, where the $i$-clusters refine the
$(i-1)$-clusters. Each $i$-cluster has diameter $O(kg^i)$ and contains at least
$\Omega(kg^i)$ vertices. This is similar to the $(i,\eps)$-clusters in
\Cref{sec:fw} with the modification of having two types of clusters,
\emph{heavy} and \emph{light}. A cluster is heavy if it has many incident
$i$-level edges and light otherwise. For a light cluster, we add all the
incident $i$-level edges to the spanner directly. For the heavy clusters,
Chechik and Wulff-Nilsen~\cite{ChechikW16} create a special auxiliary
cluster graph and run the greedy spanner on this to decide which edges should
be added.

To bound the lightness of the constructed spanner, they show that each time a
heavy cluster is constructed the number of clusters in the next level is
reduced significantly. Then, using a clever potential function, they show that
the contribution of all the greedy spanners is bounded.
It is interesting to note, that in order to bound the weight of a single greedy spanner,
they use the analysis of \cite{ElkinNS14}. Implicitly, \cite{ElkinNS14} showed
that on graphs with $O(\poly(k))$ aspect ratio, the greedy
$(1+\eps)(2k-1)$-spanner has $O_\eps(n^{1/k})$ lightness and $O(n^{1+1/k})$
edges.

There are three time-consuming parts in \cite{ChechikW16}:
1) The clustering procedure iteratively grows the $i$-clusters as the union of
several $(i-1)$-clusters, but uses expensive exact diameter calculations in the
original graph.
2) They employ the greedy spanner several times as a subroutine during $A^{\mbox{{\tiny CW}}}_1$ for graphs with $O(poly(k))$ aspect ratio.
3) They use the greedy spanner as $A^{\mbox{{\tiny CW}}}_2$.

In order to handle 1) above we will grow clusters purely based on the
number of nodes in the $(i-1)$-clusters (in similar manner to $(i,\eps)$-clusters), thus making the clustering much more efficient without
losing anything significant in the analysis. To handle 2) We will use
the following lemma in place of the greedy spanner.
\begin{lemma}
	\label{lem:fast_spanner_aspect}
    Given a weighted undirected graph $G = (V,E,w)$ with $m$ edges and $n$
    vertices, a positive integer $k$, 
    $\epsilon > 0$, such that all the weights are within $[a,a\cdot\Delta)$, and the MST have weight $O(na)$. One can deterministically construct a
    $(2k-1)(1+\eps)$-spanner of $G$ with $O_{\epsilon}(n^{1+\frac{1}{k}})$ edges and
    lightness $O_{\epsilon}\left( n^{\frac{1}{k}}\cdot\log\left(\Delta\right)\right)$ in time $O\left(m+n\log
    n)\right)$.
\end{lemma}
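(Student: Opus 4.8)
The plan is to adapt the construction of Elkin and Solomon~\cite{ElkinS16}, restricted to the bounded–aspect-ratio setting, replacing their weighted spanner sub-routine with the extremely efficient unweighted spanner of Halperin and Zwick~\cite{HZ96}. First I would normalize so that $a=1$, so all weights lie in $[1,\Delta)$ and $w(\mathrm{MST})=O(n)$. I then build a hierarchy of clusterings $\mathcal{C}_0,\mathcal{C}_1,\dots,\mathcal{C}_L$ with $L=O(\log_{1+\eps}\Delta)$ levels: $\mathcal{C}_0$ consists of singletons, and the radius scale at level $j$ is $\rho_j=(1+\eps)^j$. At level $j$ I partition the edges whose weight lies in $[\rho_{j-1},\rho_j)$ into a "super-graph" $G_j$ whose vertices are the clusters of $\mathcal{C}_{j-1}$ and whose edges are the minimum-weight graph edge between each pair of clusters (plus the MST edges of that scale). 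I rescale the weights of $G_j$ by dividing by $\rho_{j-1}$, obtaining an essentially unweighted instance (aspect ratio $1+\eps$, which after rounding up becomes unit-weight up to a $(1+\eps)$ factor), and I run the Halperin–Zwick unweighted $(2k-1)$-spanner on $G_j$ in $O(|E(G_j)|)$ time, adding the chosen edges (mapped back to $G$) to our spanner $H$. The clusters of $\mathcal{C}_j$ are then formed by growing clusters of diameter $O(\eps\rho_j)$ out of clusters of $\mathcal{C}_{j-1}$, exactly as the $(i,\eps)$-clustering in Section~\ref{sec:fw}, so that $\mathcal{C}_j$ refines the cluster structure and each level-$j$ cluster has diameter $O(\eps\rho_j)$ inside $H$.

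For \textbf{stretch}: take any edge $(u,v)$ with $w(u,v)\in[\rho_{j-1},\rho_j)$. If $u,v$ lie in the same level-$j$ cluster, the cluster's $H$-diameter of $O(\eps\rho_j)=O(\eps\cdot w(u,v))$ suffices. Otherwise there is a super-edge between their clusters in $G_j$ of weight at most $w(u,v)$, and Halperin–Zwick gives a $(2k-1)$-path in the level-$j$ spanner of $G_j$; translating this back through the clusters (each traversal of a super-edge costs its weight, each cluster visited costs its diameter $O(\eps\rho_j)$, and the number of super-edges on the path is $O(k)$ since each has weight $\ge\rho_{j-1}$) yields an $H$-path of length $(1+O(\eps))(2k-1)w(u,v)$, exactly as in the stretch computation \eqref{eq:cluster-Stretch}.

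For \textbf{size and lightness}: at level $j$, $G_j$ has $n_j=O(n/(\eps\rho_{j-1}))$ vertices (one per level-$(j-1)$ cluster, each of size $\Omega(\eps\rho_{j-1})$), so Halperin–Zwick contributes $O(n_j^{1+1/k})$ edges of weight $O(\rho_j)$ each, i.e.\ weight $O(\rho_j\cdot n_j^{1+1/k})=O(\eps^{-1/k}(1+\eps)\cdot\rho_{j-1}\cdot (n/(\eps\rho_{j-1}))^{1+1/k})$. Summing over the $L=O(\log\Delta)$ levels — crucially, here the geometric series in the level index does \emph{not} converge (the radii only grow by $1+\eps$, not by a $g^k$ factor), which is why the lightness picks up the extra $\log(\Delta)$ factor — gives total weight $O_\eps(n^{1/k}\log\Delta)\cdot w(\mathrm{MST})$, hence lightness $O_\eps(n^{1/k}\log\Delta)$; the size sum is dominated by the first few levels and is $O_\eps(n^{1+1/k})$. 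For \textbf{running time}: computing the MST and the initial sort costs $O(m+n\log n)$; the clusterings over all levels cost $O(\sum_j n_j)=O_\eps(n)$ as in Section~\ref{sec:fw}; and each Halperin–Zwick call is linear in $|E(G_j)|$, with $\sum_j|E(G_j)|\le m+O(n)$ since each graph edge appears in exactly one scale, giving $O(m+n\log n)$ overall.

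The main obstacle I anticipate is controlling the number of levels and the size-versus-lightness bookkeeping when the aspect ratio $\Delta$ is large: because the radius grows only geometrically with ratio $1+\eps$, there is no telescoping that would let the later levels be absorbed, so one must argue carefully that (i) the \emph{size} contribution still telescopes (because $n_j$ shrinks geometrically while the exponent $1+1/k>1$), giving $O_\eps(n^{1+1/k})$ edges, whereas (ii) the \emph{weight} contribution per level is roughly constant in $j$ (the shrinkage of $n_j$ by $\rho_{j-1}^{-(1+1/k)}$ is almost exactly cancelled by the growth of $\rho_j$), so summing over $O(\log\Delta)$ levels is where — and only where — the $\log(\Delta)$ factor in the lightness is unavoidable. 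A secondary technical point is making the weights of $G_j$ genuinely integral/unit so that Halperin–Zwick applies: rounding each rescaled weight up to $1$ costs only a $(1+\eps)$ factor since all rescaled weights lie in $[1,1+\eps)$, and one must check this rounding composes correctly with the stretch argument, which it does after the final rescaling of $\eps$.
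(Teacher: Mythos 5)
Your proposal matches the paper's proof in essence: bucket the non-MST edges into weight scales $(1+\eps)^j$, apply the $(i,\eps)$-style MST-based clustering of \Cref{sec:fw} at each scale, contract clusters and run the Halperin--Zwick unweighted $(2k-1)$-spanner on each cluster graph, and union with the MST, with the size sum converging geometrically and the lightness summing over the $O(\log_{1+\eps}\Delta)$ scales. Your extra rescaling/rounding step is unnecessary (the paper simply treats the cluster graph as unweighted and charges each super-edge its true weight $<\rho^j$ when mapping back), but this is cosmetic and the argument is correct.
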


The core of \Cref{lem:fast_spanner_aspect} already appears in
\cite{ElkinS16}, while here we analyze it for the special case where the aspect
ratio is bounded by $\Delta$. The main ingredient is an efficient spanner
construction by Halperin and Zwick~\cite{HZ96} for unweighted graphs (\Cref{thm:HZ96}).
The description of the algorithm of  \Cref{lem:fast_spanner_aspect} and its analysis can be found in
\Cref{sec:app_proof_fast_aspect}. Replacing the greedy spanner by
\Cref{lem:fast_spanner_aspect} above is the sole reason for the additional
$\log k$ factor in the lightness of \Cref{thm:fastCW}.

Imitating the analysis of \cite{ChechikW16} with the modified ingredients, we
are able to prove the following lemma, which we will use as $A_1$ in our
framework.
\begin{lemma}\label{lem:mainFCW}
    Given a weighted undirected graph $G = (V,E,w)$ with $m$ edges and $n$
    vertices, a positive integer $k\ge 640$, and
    $\epsilon > 0$, such that all MST edges
    have unit weight, and all weights bounded by $g^k$,    
    one can deterministically construct a
    $(2k-1)(1+\eps)$-spanner of $G$ with $O_\eps(n^{1+1/k})$ edges and
    lightness $O_{\epsilon}\left(\log k\cdot n^{\frac{1}{k}}\right)$ in time $O\left(m+nk\right)$.
\end{lemma}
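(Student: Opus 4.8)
The plan is to follow the structure of Chechik and Wulff-Nilsen~\cite{ChechikW16} essentially verbatim, but with three substitutions dictated by the discussion above: (i) replace their exact-diameter-based hierarchical clustering with a purely size-driven clustering of the kind used for $(i,\eps)$-clusters in Section~\ref{sec:fw}, so that each $i$-cluster has diameter $O(kg^i)$ (restricted to MST edges) and contains $\Omega(kg^i)$ original vertices, constructible in $O(n)$ time across all levels once the MST is given; (ii) replace every invocation of the greedy spanner on a bounded-aspect-ratio auxiliary cluster graph by the efficient construction of Lemma~\ref{lem:fast_spanner_aspect}, whose lightness guarantee carries an extra $\log\Delta$ factor with $\Delta = O(\poly(k))$ here, hence an $O(\log k)$ overhead; and (iii) postpone their greedy $A^{\mbox{\tiny CW}}_2$ entirely, since in Lemma~\ref{lem:mainFCW} we only build $A_1$ for bounded weights. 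First I would set up the bucketing of non-MST edges into $k$ buckets with weights in $[g^{i-1},g^i)$ and the $k$-level clustering, marking each $i$-cluster as \emph{heavy} or \emph{light} according to the number of incident level-$i$ edges, with threshold chosen as in~\cite{ChechikW16}.

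Next I would handle the stretch. For light clusters all incident level-$i$ edges go into the spanner directly, so any such edge has stretch $1$. For edges internal to a cluster, the $O(kg^i)$ diameter bound on the cluster gives stretch $O(k)$ relative to the edge weight (which is $\Omega(g^{i-1})$), matching what $(1+\eps)(2k-1)$ absorbs up to the $(1+O(\eps))$ slack we are allowed. For edges between two heavy clusters, we route through the auxiliary cluster graph: the spanner of Lemma~\ref{lem:fast_spanner_aspect} on that graph gives a $(1+\eps)(2k-1)$-path between the cluster-nodes, and expanding each cluster-node into an MST path inside the (diameter-$O(kg^i)$) cluster contributes a lower-order additive term, exactly the computation carried out in \eqref{eq:cluster-Stretch}-style telescoping. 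I would reproduce that estimate to conclude a $(1+O(\eps))(2k-1)$ overall stretch.

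For the size and lightness I would import the potential-function argument of~\cite{ChechikW16} wholesale. The key structural fact is unchanged: whenever a heavy cluster is formed, the number of clusters at the next level drops by a definite fraction, because a heavy cluster absorbs many level-$i$ edges and hence many $(i-1)$-subclusters; this lets one charge the edges added for heavy clusters against the decrease in cluster count via their potential. The only quantitative change is that each application of Lemma~\ref{lem:fast_spanner_aspect} contributes lightness $O_\eps(n'^{1/k}\log\Delta)$ rather than $O_\eps(n'^{1/k})$ on an $n'$-node auxiliary graph; since $\Delta = O(g^k\cdot\poly(k)) = O(\poly(k))$ after the within-scale rescaling, $\log\Delta = O(\log k)$, and this factor simply multiplies the final lightness bound, giving $O_\eps(\log k\cdot n^{1/k})$. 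The edge count is $O_\eps(n^{1+1/k})$ since Lemma~\ref{lem:fast_spanner_aspect} already gives the optimal size with no $\log k$ loss, and summing over the geometric hierarchy converges as in Lemma~\ref{lem:fw}. For the running time, the clustering costs $O(n)$ over all levels (given the MST), reading the buckets and building the $k$ auxiliary graphs costs $O(m+nk)$, and each call to Lemma~\ref{lem:fast_spanner_aspect} runs in $O(m_i+n_i\log n_i)$ time which sums to $O(m+nk)$ by the same geometric-decrease argument used for $T_1$ in Lemma~\ref{lem:fw}; note the $n_i\log n_i$ terms telescope against the exponentially shrinking $n_i$, avoiding an extra $\log n$.

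The main obstacle I expect is not any single step but faithfully adapting the potential-function bookkeeping of~\cite{ChechikW16} to the size-driven clustering: their argument is tuned to their particular cluster-growth rule and diameter guarantees, and one must verify that the weaker "purely count the vertices" rule still yields the requisite "heavy cluster $\Rightarrow$ large drop in next-level cluster count" dichotomy with the same constants, and that replacing greedy by Lemma~\ref{lem:fast_spanner_aspect} does not break the amortization (it should not, since only the per-application lightness constant changes, not the recursive structure). The constraint $k\ge 640$ is presumably exactly what is needed to make these constants work out, inherited from~\cite{ChechikW16}. I would therefore spend the bulk of the proof carefully re-deriving the clustering invariants and the potential inequality, and then note that the rest of~\cite{ChechikW16}'s analysis applies mutatis mutandis.
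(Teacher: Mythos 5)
Your proposal is correct and follows essentially the same route as the paper: size-driven (heavy/light) clustering in place of exact-diameter clustering, Lemma~\ref{lem:fast_spanner_aspect} in place of each greedy call on the auxiliary cluster graphs (incurring the $\log k$ via the $\log\Delta$ lightness term), and the Chechik--Wulff-Nilsen potential-function amortization for lightness, with the second component deferred to Theorem~\ref{thm:SparseNoLight}. The only details you leave implicit, which the paper spells out, are the grouping of $\mu=\log_g(k/\eps)$ consecutive weight scales into a single auxiliary graph $S_r$ with the floored weights $w_r(e)=\max\{w(e),kg^{(r-1)\mu}/\eps\}$ (which is what makes the aspect ratio $O(k/\eps)$) and the accompanying bound on $w_r(M_r)$ (Lemma~\ref{lem:M_r bound}) needed before the potential argument applies.
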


To address the third time-consuming part we instead use the algorithm of \Cref{thm:SparseNoLight} as $A_2$.
 Replacing the greedy algorithm by
\Cref{thm:SparseNoLight} is the sole reason for the additional
$\log k$ factor in the sparsity of \Cref{thm:fastCW}.

Combining \Cref{lem:mainFCW}, \Cref{thm:SparseNoLight} and
\Cref{lem:fw} we get \Cref{thm:fastCW}. The remainder of this
section is concerned with proving \Cref{lem:mainFCW}.

\subsection{Details of the construction}\label{sec:description_high_level}
\Cref{alg:fastCW} below contains a high-level description of the
algorithm. We defer part of the exact implementation details and the analysis
of the running time to \Cref{sec:running_time}.
We denote $E_i = \left\{(u,v)\in E\mid w(u,v)\in [g^i,g^{i+1})\right\}$.

\begin{algorithm}
    \caption{\FuncSty{$A_1$ component of \Cref{thm:fastCW}}}\label{alg:fastCW}
    \DontPrintSemicolon
	\SetKwInOut{Input}{input}\SetKwInOut{Output}{output}
	\Input{Parameters $k,\eps$, weighted graph $G=(V,E,w)$ where all MST edges have unit weight and $\max_{e\in E}w(e)\le g^k$.}
	\Output{Spanner $E_{sp}$.}
    \BlankLine
	Fix $g=20$, $c=24$, $d=160$
	and $\mu=\log_g(k/\eps)$\;
    $E_{sp}\leftarrow MST(G)$\;
	\tcc{First phase:}
	Partition $V$ into $0$-clusters $\mathcal{C}_0$ such that for every $C\in
    \mathcal{C}_0$, $|C|\in[\frac{k}{c},\frac{k}{2}]$\;
	\For{$i=1$ to $k-1$}{
		Let $K_i$ be $G$ with each $C\in \mathcal{C}_{i-1}$ contracted. Retain
        only edges of weight $[g^{i},g^{i+1})$ (keeping $K_i$ simple).\;
        \tcc{Construct $i$-level \textbf{heavy clusters}}
        Let all nodes of $K_i$ be unmarked\;
        \For{$\vpi\in K_i$}{
            \If{$\deg(\vpi) \ge d$, $\vpi$ is unmarked, and all of $\vpi$'s
            neighbours are unmarked}{
                Create new heavy cluster $\hat{\vpi}$ with $\vpi$ and all
                    neighbours\;
                Mark all nodes of $\hat{\vpi}$\;
            }
        }
        \For{$\vpi \in K_i$}{
            \If{$\deg(\vpi) \ge d$ and $\vpi$ is unmarked}{
                \tcc{$\vpi$ must then have marked neighbour.}
                Add $\vpi$ to the heavy cluster of a marked neighbour\;      
            }
        }
        Mark all clustered, unmarked vertices $\vpi$\;
        Add all edges used to join heavy clusters to $E_{sp}$\;
        \tcc{Construct $i$-level \textbf{light clusters}}
        Add all edges incident to unmarked nodes to $E_{sp}$\;
        Join remaining nodes into clusters of size (number of original
        vertices) $\ge \frac{1}{c}\cdot kg^i$ and diameter $\le
        \frac{1}{2}\cdot kg^i$ using MST edges\;
        If a cluster cannot reach $\frac{1}{c}\cdot kg^i$ nodes. Add it to a
        neighbouring heavy cluster (via MST edge)\;
	}
    \tcc{Second phase:}
	Let $S_0$ be a subgraph of $G$ which contain only edges of
    weight at most $k/\eps$. Let $H_0$ be a $(2k-1)(1+\epsilon)$-spanner of
    $S_0$ constructed using \Cref{lem:fast_spanner_aspect}\label{Line:G0} \;
    Add $H_0$ to $E_{sp}$\;
	\For{$r=1$ to $\left\lceil k/\mu\right\rceil-1$}{
		\label{Line:Vr} Let $V_r$ to be the set of nodes obtained by
        contracting each $(r-1)\mu$ cluster contained in some $i$-level heavy
        cluster for $i\in [r\mu,(r+1)\mu)$ (deleting all the other
        $(r-1)\mu$-clusters)\;
		Let $\mathcal{E}_r$ be all the edges used to create $i$-clusters (heavy
        or light) for $i\in \left((r-1)\mu,(r+1)\mu\right]$\;
		\label{Line:Sr} Let $S_r$ be the graph with $V_r$ as its vertices and
        $\mathcal{E}_r\cup\bigcup_{i=(r-1)\mu}^{(r+1)\mu-1}E_i$ as its edges
        (keeping $S_r$ simple)\;
		Let $w_r(e)=\max\{w(e),kg^{(r-1)\mu}/\eps\}$ be the weight function of
        $S_r$\;
		\label{Line:Hr} Construct a $(2k-1)(1+\epsilon)$-spanner $H_r$ of $S_r$
        using \Cref{lem:fast_spanner_aspect}\;
		\label{Line:SrEsp} Add $H_r$ to $E_{sp}$\;
	}
	\Return $E_{sp}$\;
\end{algorithm}

Using our modified clustering we will need the following claim which is key
to the analysis. The claim is proved in \Cref{sec:running_time}. We
refer to the definitions from \Cref{alg:fastCW} in the following section.
\begin{claim}\label{clm:cluster_size_diam}
    For each $i$-level cluster $C\in\mathcal{C}_i$ produced by
    \Cref{alg:fastCW} it holds that:
	\begin{enumerate}
		\item $C$ has diameter at most $\frac{1}{2}kg^{i}$ (w.r.t to the
            current stage of the spanner $E_{sp}$).
		\item The number of vertices in $C$ is larger than its diameter and is
            at least $\frac{1}{c}kg^{i}$.
	\end{enumerate} 
\end{claim}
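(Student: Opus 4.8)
The plan is to prove Claim~\ref{clm:cluster_size_diam} by induction on the level $i$, tracking both properties (diameter and vertex count) simultaneously, since the two are intertwined: the diameter bound on level-$(i-1)$ clusters is what lets us control the diameter of level-$i$ clusters, and the vertex-count lower bound is what we need to later argue that the number of clusters shrinks geometrically.

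\textbf{Base case.} For $i=0$, the $0$-clusters are constructed explicitly with $|C|\in[\tfrac{k}{c},\tfrac{k}{2}]$. Since each $0$-cluster is connected via MST edges (unit weight) and has at most $k/2$ vertices, its diameter is at most $k/2 - 1 < \tfrac{1}{2}kg^0$, and the vertex count $\ge \tfrac{k}{c} = \tfrac{1}{c}kg^0$ exceeds the diameter. (One must check the explicit $0$-clustering procedure actually produces such clusters, connected in the MST; this follows by the same greedy-growth-then-merge argument as for $(i,\eps)$-clusters in \Cref{sec:fw}.)

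\textbf{Inductive step.} Assume the claim for level $i-1$. A level-$i$ cluster $C$ is formed in one of three ways in \Cref{alg:fastCW}: (a) as a fresh heavy cluster $\hat\vpi$ built from a node $\vpi$ of $K_i$ together with all its (unmarked) neighbours; (b) as a heavy cluster that additionally absorbed some high-degree unmarked nodes $\vpi'$ via a single $K_i$-edge; (c) as a light cluster grown from MST edges among the remaining nodes, possibly with small leftover clusters merged into a neighbouring heavy cluster via one MST edge. In every case, $C$ is a union of level-$(i-1)$ clusters joined along edges of $K_i$ (which correspond to original edges of weight $<g^{i+1}$) or along MST edges. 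Here I would \emph{not} bound the diameter by summing over all constituent $(i-1)$-clusters naively, because a heavy cluster can have up to roughly $d$ (a constant) neighbours joined in a star, and a light cluster is grown precisely until it reaches $\ge \tfrac{1}{c}kg^i$ original vertices, so its number of constituent $(i-1)$-clusters is bounded. The key quantitative estimate: a light cluster stops as soon as it has $\ge \tfrac{1}{c}kg^i$ vertices; since each $(i-1)$-cluster contributes fewer than $\tfrac12 kg^{i-1}$ vertices, the core has fewer than $\tfrac{1}{c}kg^i + \tfrac12 kg^{i-1}$ vertices, hence diameter (via the tree structure, charging $\diam_{(i-1)}\le \tfrac12 kg^{i-1}$ per constituent cluster plus one unit-weight MST edge per join) at most roughly $\tfrac{1}{c}kg^i + O(kg^{i-1})$; merging in a few leftover small clusters and a heavy-cluster star of $\le d$ constituents adds only $O(kg^{i-1} + d\cdot kg^{i-1})$, and one checks that with $g=20$, $c=24$, $d=160$ the total stays below $\tfrac12 kg^i$. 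For the vertex count, a heavy cluster trivially has $\ge \tfrac{1}{c}kg^i$ vertices since $\deg(\vpi)\ge d$ forces many constituent $(i-1)$-clusters each with $\ge \tfrac{1}{c}kg^{i-1}$ vertices, and $d\cdot\tfrac1c kg^{i-1}\ge \tfrac1c kg^i$ exactly when $d\ge g$, which holds; light clusters are grown to $\ge\tfrac{1}{c}kg^i$ by construction, and leftover small clusters are absorbed rather than left standing. Finally "vertices $>$ diameter" follows since the diameter is at most (number of constituent $(i-1)$-clusters)$-1$ weighted by at most $(\tfrac12 kg^{i-1}+1)$ each, while the vertex count is at least that number of clusters times $\tfrac1c kg^{i-1}$, and $\tfrac1c kg^{i-1} \ge \tfrac12 kg^{i-1}/(\text{const})$... more cleanly, since each $(i-1)$-cluster has more vertices than its own diameter (induction) and the joins add one unit edge but at least one vertex, the inequality propagates.

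\textbf{Main obstacle.} The delicate point is the diameter bound for heavy clusters formed in case (b): a heavy cluster $\hat\vpi$ can absorb, via the second for-loop, \emph{several} previously-unmarked high-degree nodes, each attached by a single $K_i$-edge of weight up to $g^{i+1}$, and a $K_i$-edge is \emph{not} an MST edge, so traversing it costs up to $g^{i+1}$ in the spanner metric — far more than the $\tfrac12 kg^i$ budget. The resolution must be that such $K_i$-edges are among the edges added to $E_{sp}$ ("Add all edges used to join heavy clusters to $E_{sp}$") but are \emph{not} counted toward the intra-cluster diameter in the same way; rather, one shows each absorbed node is adjacent to the \emph{core} $\vpi$ of the heavy cluster, so the heavy cluster has radius (not diameter) bounded by one $(i-1)$-cluster-diameter plus one joining edge, i.e. it is a depth-one star around $\vpi$ in the contracted graph, and the true obstacle is verifying that the joining edges, though of weight up to $g^{i+1}$, are only used to witness connectivity while the diameter is measured through the MST skeleton — so I would need to carefully re-read how \cite{ChechikW16} defines the cluster metric and confirm that the constants $g,c,d$ were chosen precisely so that $d\cdot(\text{light-cluster growth increment}) $ and the star radius both fit under $\tfrac12 kg^i$; getting these constant inequalities to close simultaneously for all $i\ge 1$ (using $g^i\ge g\cdot g^{i-1}$ to absorb lower-order terms) is the crux of the calculation.
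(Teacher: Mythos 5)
Your base case and light-cluster argument track the paper's proof closely and are fine. The genuine gap is in the heavy-cluster case, where you identify the ``main obstacle'' but then resolve it incorrectly. You claim that a joining $K_i$-edge of weight up to $g^{i+1}$ is ``far more than the $\tfrac12 kg^i$ budget'' and conclude that such edges must somehow not be counted toward the diameter, with the diameter instead ``measured through the MST skeleton.'' Both halves of this are wrong. First, the magnitude comparison is inverted: since the claim is proved under the assumption $k\ge 640$ and $g=20$, one has $g^{i+1}=\frac{g}{k}\cdot kg^{i}\le\frac{1}{32}kg^{i}$, so a joining edge is a \emph{small} fraction of the budget, not a large one --- this is precisely why Lemma~\ref{lem:mainFCW} requires $k\ge 640$. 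Second, the diameter in the claim is with respect to the current spanner $E_{sp}$, and the joining $K_i$-edges (which are added to $E_{sp}$) are counted at their full weight; they cannot be replaced by MST paths, since the MST distance between two distinct $(i-1)$-clusters is not controlled at all, so your proposed ``resolution'' would not yield a proof.

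The correct argument, which your sketch never actually carries out, is purely quantitative: every $(i-1)$-cluster in the core of a heavy cluster is at distance at most $2$ from the originating node $\vpi$ in $K_i$ (absorbed nodes attach by one edge to a node of the origin, which is itself a neighbour of $\vpi$ --- so the star has depth two, not one as you state), hence a path between any two core vertices crosses at most $5$ $(i-1)$-clusters and $4$ joining edges, giving diameter at most $5\cdot\tfrac12 kg^{i-1}+4g^{i+1}=kg^{i}\bigl(\tfrac{5}{2g}+\tfrac{4g}{k}\bigr)\le \tfrac14 kg^{i}$; adding the semi-clusters absorbed later (each of diameter below the light-cluster threshold, attached by unit MST edges) gives at most $kg^{i}\bigl(\tfrac14+\tfrac{2}{c}\bigr)=\tfrac13 kg^{i}\le\tfrac12 kg^{i}$. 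Your size bound for heavy clusters ($d\cdot\tfrac1c kg^{i-1}\ge\tfrac1c kg^{i}$ since $d\ge g$) is correct, but without the diameter calculation above --- which is exactly where the constants $g,c,d$ and the hypothesis $k\ge 640$ are consumed --- the induction does not close, so as written the proposal does not constitute a proof of property (1) for heavy clusters.
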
 

Our analysis builds upon \cite{ChechikW16}.
The bound on the stretch of \Cref{lem:fast_spanner_aspect} follows as we
have only replaced the greedy spanner by alternative spanners with the same
stretch (and have similar guaranties on the clusters diameter). The proof
appears at \Cref{subapp:stretch}.

To bound the sparsity and lightness we consider the two phases of
\Cref{alg:fastCW}. During the $i$'th level of the first phase we add
at most $d$ edges per light cluster and at most $1$ edge per $(i-1)$-cluster to
form the heavy clusters. By \Cref{clm:cluster_size_diam}
each $i$-level cluster contains $\Omega(kg^i)$ vertices and thus the total
number of clusters over all levels is bounded by
$\sum_{i=0}^{k}O(\frac{n}{kg^i})=O(n/k)$. It follows that we add at most $O(n)$
edges during the first phase. For the lightness of these edges, note that edges
added during the $i$th level have weight at most $g^{i+1}$. Hence the total
weight added during the $i$th level is at most $O(\frac{n}{kg^{i-1}}\cdot
g^{i+1})$ for heavy clusters and at most $O(\frac{dn}{kg^i}\cdot g^{i+1})$ for
light clusters. Summing over all $k$ levels this contributes at most $O(n)$ to
the total weight from the first phase.

Next consider the second phase. 
First note $S_0$ has an MST of weight $n-1$ and only contains edges with weight
in $[1,\frac{k}{\eps})$. Thus, by \Cref{lem:fast_spanner_aspect}, 
$\left|H_{0}\right|=O_{\epsilon}(n^{1+\frac{1}{k}})$ and
$w(H_{0})=O_{\epsilon}\left(n^{\frac{1}{k}}\cdot\log\left(\frac{k}{\epsilon}\right)\right)=O_{\epsilon}\left(n^{\frac{1}{k}}\cdot\log
k\right)$.

Fix some $r\in\left[1,\left\lceil
k/\mu\right\rceil -1\right]$. Recall the definitions of $V_r$, $S_r$, and $H_r$:
$V_r$ is a set of vertices representing a subset of the $(r-1)\mu$-level
clusters. $S_r$ is a graph with nodes $V_r$ where all the edges have weight in
$[kg^{(r-1)\mu}/\eps,g^{(r+1)\mu}]=[g^{r\mu},g^{r\mu}k/\eps]$.
$H_r$ is a spanner of $S_r$ constructed using
\Cref{lem:fast_spanner_aspect}. 
Denote by $M_r$ the MST of $S_r$. The following lemma bound its weight. A proof can be found
in \Cref{subapp:lightness}.
\begin{lemma}
	\label{lem:M_r bound}
	The MSF $M_r$ of $S_r$ has weight $w_r(M_r)=O(\left|V_{r}\right|\cdot
	kg^{(r-1)\mu}/\eps)$.
\end{lemma}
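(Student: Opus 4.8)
The plan is to bound $w_r(M_r)$ by analysing Kruskal's algorithm run on $S_r$ with the weight function $w_r$, and to split the edges it selects into \emph{cheap} edges (those of the minimum possible weight $w_r=kg^{(r-1)\mu}/\eps=g^{r\mu}$) and \emph{expensive} ones (weight at most $g^{(r+1)\mu}$, the maximum weight present in $S_r$). Since $M_r$ is a spanning forest it has fewer than $|V_r|$ edges, so the cheap edges contribute at most $|V_r|\cdot g^{r\mu}=O(|V_r|\cdot kg^{(r-1)\mu}/\eps)$, which is already inside the target bound; the whole difficulty is to argue that only $O(|V_r|\cdot\eps/k)$ expensive edges are needed.

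For that, first note that every edge of $\mathcal{E}_r$ is an MST edge of $G$ and hence has weight $1$ in $G$, so $w_r$ rounds it up to exactly $kg^{(r-1)\mu}/\eps=g^{r\mu}$; thus Kruskal processes a spanning forest of the $\mathcal{E}_r$-edges before touching any expensive edge. I then claim that the components (``super-nodes'') spanned by these cheap $\mathcal{E}_r$-edges number only $O(|V_r|\cdot\eps/k)$. Indeed, by the definition of $V_r$, each $(r-1)\mu$-cluster $C\in V_r$ lies inside some $i$-level heavy cluster $\hat\vpi$ with $i\in[r\mu,(r+1)\mu)$. Every $(r-1)\mu$-cluster inside $\hat\vpi$ is then also in $V_r$, and each is connected to $C$ by the edges used to build the intermediate $j$-clusters for $j\in((r-1)\mu,i]$ — all of which belong to $\mathcal{E}_r$ and survive in $S_r$ (their endpoints are contracted, not deleted, when forming $V_r$). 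Hence the super-node of $C$ contains all $(r-1)\mu$-sub-clusters of $\hat\vpi$. By \Cref{clm:cluster_size_diam}, $\hat\vpi$ contains $\Omega(kg^i)\ge\Omega(kg^{r\mu})$ original vertices, while each $(r-1)\mu$-cluster contains $O(kg^{(r-1)\mu})$ of them (the matching upper bound on cluster sizes comes from the clustering construction, exactly as for the $(i,\eps)$-clusters of \Cref{sec:fw}); therefore $\hat\vpi$, and so the super-node, contains $\Omega(g^\mu)=\Omega(k/\eps)$ vertices of $S_r$. Summing over all vertices of $S_r$ gives at most $O(|V_r|\cdot\eps/k)$ super-nodes.

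Finally, to span each connected component of $S_r$, Kruskal needs at most (number of super-nodes) further — i.e. expensive — edges, namely $O(|V_r|\cdot\eps/k)$ of them, and each such edge has weight at most $g^{(r+1)\mu}=g^{r\mu}\cdot g^\mu=g^{r\mu}\cdot k/\eps$. Hence the expensive edges contribute at most $O(|V_r|\cdot\eps/k)\cdot g^{r\mu}\cdot k/\eps=O(|V_r|\cdot g^{r\mu})=O(|V_r|\cdot kg^{(r-1)\mu}/\eps)$, and together with the cheap-edge contribution we conclude $w_r(M_r)=O(|V_r|\cdot kg^{(r-1)\mu}/\eps)$.

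The step I expect to be the main obstacle is bounding the number of super-nodes, since it needs both a lower bound on the number of original vertices in a heavy cluster (\Cref{clm:cluster_size_diam}) and an upper bound on the size of a $(r-1)\mu$-cluster; the latter is not spelled out in the statement of \Cref{clm:cluster_size_diam} and has to be read off from the clustering construction in \Cref{alg:fastCW}. A secondary point to be careful about is verifying that the relevant $\mathcal{E}_r$-edges are genuinely present in $S_r$ (they are, because the $(r-1)\mu$-clusters inside a heavy cluster are contracted rather than deleted when passing to $V_r$), and that the heavy cluster $\hat\vpi$ witnessing membership of $C$ in $V_r$ sits at a level $i\le(r+1)\mu$ so that all of its building edges fall inside $\mathcal{E}_r$.
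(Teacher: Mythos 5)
Your argument collapses at its first structural claim: that ``every edge of $\mathcal{E}_r$ is an MST edge of $G$ and hence has weight $1$'', so that all of $\mathcal{E}_r$ is rounded up to the cheapest weight $g^{r\mu}$. This is false. Besides the unit-weight MST edges used for light clusters, $\mathcal{E}_r$ contains the edges used to form the heavy $i$-clusters for $i\in((r-1)\mu,(r+1)\mu]$, and those are edges of $E_i$ with weight in $[g^{i},g^{i+1})$, i.e.\ up to roughly $g^{(r+1)\mu}=g^{r\mu}\cdot k/\eps$ --- the heaviest edges of $S_r$, not the lightest. These are exactly the edges that connect the $(r-1)\mu$-subclusters of a heavy cluster, so your super-nodes are \emph{not} spanned by cheap edges, and bounding the total weight of these connecting edges is precisely the non-trivial content of \Cref{lem:M_r bound}. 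A naive count (number of such edges, which can be close to $|V_r|$, times the maximum weight $\approx g^{(r+1)\mu}$) overshoots the target by a factor of about $k/\eps$. The paper handles this by amortizing over the cluster hierarchy: \Cref{clm:heavy_potential_diffrence,clm:light_potential_diffrence,clm:j_i_cluster_bound} show that inside each maximal heavy cluster (``bug'') the cluster-forming edges admit a spanning tree of weight $O\bigl(\sum_{C'}\widehat{\mbox{diam}}(C')\bigr)$ over its $(r-1)\mu$-subclusters (each level-$j$ merge edge of weight $<g^{j+1}$ is charged to merged $(j-1)$-clusters of potential $\widehat{\mbox{diam}}\ge kg^{j-1}/c$), and the $1/\eps$ arises only from the at most $|\mathcal{H}_r|$ cross-bug edges of weight $\le kg^{r\mu}/\eps$, each charged against $\widehat{\mbox{diam}}\ge kg^{r\mu}/c$ of its bug. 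Your proposal contains no substitute for this amortization.

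There is a second, independent gap: the upper bound you invoke, that each $(r-1)\mu$-cluster contains $O(kg^{(r-1)\mu})$ original vertices, does not exist. \Cref{clm:cluster_size_diam} gives only a \emph{lower} bound on cluster sizes and an upper bound on diameters; heavy clusters (and light clusters that absorb them) can be arbitrarily large, and a large cluster can pass through many levels unchanged. Consequently a heavy $i$-cluster with $i\in[r\mu,(r+1)\mu)$ may contain as few as $d+1=O(1)$ vertices of $S_r$, so your claim that every super-node contains $\Omega(k/\eps)$ vertices of $V_r$ --- and with it the bound $O(|V_r|\cdot\eps/k)$ on the number of expensive edges --- fails even if the connectivity issue above were repaired. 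The paper avoids cluster sizes entirely and works with $\widehat{\mbox{diam}}$: each vertex of $V_r$ contributes at most $\tfrac12 kg^{(r-1)\mu}$ to the potential, while each bug has potential at least $kg^{r\mu}/c$, which is what makes both the intra-bug and the cross-bug charges close.
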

By \Cref{lem:fast_spanner_aspect}, $|S_{r}|=\ensuremath{O_{\epsilon}(|V_{r}|^{1+\frac{1}{k}})}$.
Summing over all the indices $r$, we can
bound the number of edges added in second phase by
\begin{align*}
\sum_{r=0}^{\left\lceil k/\mu\right\rceil -1}\left|H_{i}\right| & =O_{\epsilon}(n^{1+\frac{1}{k}})+\sum_{r=1}^{\left\lceil k/\mu\right\rceil -1}O_{\epsilon}\left(|V_{r}|^{1+\frac{1}{k}}\right)\\
& =O_{\epsilon}\left(n^{1+\frac{1}{k}}+\sum_{r=1}^{\left\lceil k/\mu\right\rceil -1}\left(\frac{n}{kg^{(r-1)\mu}}\right)^{1+1/k}\right)=O_{\epsilon}\left(n^{1+\frac{1}{k}}\sum_{r=0}^{\infty}\frac{1}{g^{r}}\right)=O_{\epsilon}\left(n^{1+\frac{1}{k}}\right)~.
\end{align*}
Using a potential function, we show that the sum of the weights $\sum_r w(H_r)$ converges nicely. The details can be found in \Cref{subapp:lightness}.
\begin{lemma}
	\label{lem:total_weight_spanners}
	The total weight of the spanners constructed in the second phase of
    \Cref{alg:fastCW} is $O_{\epsilon}\left(n^{1+\frac{1}{k}}\cdot\log
    k\right)$.
\end{lemma}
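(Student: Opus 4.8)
The plan is to follow the lightness analysis of Chechik and Wulff-Nilsen~\cite{ChechikW16}, treating $H_0$ by a direct calculation and the remaining spanners $H_r$, $r\ge 1$, through a potential argument. We have already noted that $w(H_0)=\Oeps(n^{1/k}\log k)$, so it suffices to bound $\sum_{r\ge 1} w(H_r)$. Fix $r\ge 1$. Under the weight function $w_r$, all edges of $S_r$ have weight in $[g^{r\mu},g^{r\mu}k/\eps]$, so the aspect ratio of $S_r$ is $\Delta=k/\eps$ and $\log\Delta=\Oeps(\log k)$; moreover $S_r$ has $|V_r|$ vertices and, by \Cref{lem:M_r bound}, its MST $M_r$ has weight $O(|V_r|\cdot kg^{(r-1)\mu}/\eps)=O(|V_r|\cdot g^{r\mu})$, where we used $g^\mu=k/\eps$. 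Hence the hypotheses of \Cref{lem:fast_spanner_aspect} hold with scale $a=g^{r\mu}$, and it yields
\[
    w(H_r)=\Oeps\!\left(|V_r|^{1/k}\cdot\log\Delta\right)\cdot w_r(M_r)=\Oeps\!\left(\log k\cdot|V_r|^{1+1/k}\cdot g^{r\mu}\right).
\]
So the lemma reduces to the combinatorial estimate $\sum_{r\ge 1}|V_r|^{1+1/k}\cdot g^{r\mu}=O(n^{1+1/k})$.

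To prove this estimate I would exploit that $|V_r|$ counts \emph{only} the $(r-1)\mu$-level clusters that are absorbed into some heavy cluster at a level in $[r\mu,(r+1)\mu)$, together with \Cref{clm:cluster_size_diam}, which guarantees $n_i=O(n/(kg^i))$ for the number $n_i$ of $i$-level clusters. The heavy-cluster rule of \Cref{alg:fastCW} (degree at least $d$ in $K_i$) forces each heavy cluster created at level $i$ to merge at least $d$ clusters of level $i-1$ into one, so a block $[r\mu,(r+1)\mu)$ in which many level-$(r-1)\mu$ clusters are swallowed by heavy clusters is also a block in which the number of clusters drops by a matching factor. Following~\cite{ChechikW16}, I would introduce a potential function built from the cluster counts $n_i$ and the scale factors $g^i$ (raised to powers matching the $1+1/k$ exponent), arranged so that the quantity $|V_r|^{1+1/k}g^{r\mu}$ charged to block $r$ is dominated by the net decrease of the potential across that block plus a geometrically summable residue; the per-block blow-up $g^\mu=k/\eps$ of the weight scale is then paid for by the super-linear dependence on the drop in the number of clusters. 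Since the initial value of the potential is $O((n/k)^{1+1/k})=O(n^{1+1/k})$ (the $0$-clusters each have $\Theta(k)$ vertices), telescoping gives $\sum_{r\ge 1}|V_r|^{1+1/k}g^{r\mu}=O(n^{1+1/k})$, and adding $w(H_0)=\Oeps(n^{1/k}\log k)$ completes the proof.

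The main obstacle is the choice of this potential and verifying that the telescoping actually closes; this is the technical core of the Chechik--Wulff-Nilsen lightness argument, and the content of the present lemma is that it survives our two substitutions. First, their diameter-based clustering is replaced by the purely size-based clustering of \Cref{alg:fastCW}, so it is \Cref{clm:cluster_size_diam} --- rather than an explicit diameter computation --- that must supply the bound $n_i=O(n/(kg^i))$ on which the convexity inequalities rest. Second, their internal greedy spanner is replaced by \Cref{lem:fast_spanner_aspect}, which is exactly why the final bound carries an extra $\log k=\Theta(\log\Delta)$ factor. The delicate point is that the constant $d$ in the heavy-cluster rule ($d=160$ here) must be large enough that the per-block drop in the number of clusters, raised to the power $1+1/k$, outpaces the factor $g^\mu=k/\eps$ by which the weight scale grows from one block to the next --- this is precisely the inequality that makes the series $\sum_r|V_r|^{1+1/k}g^{r\mu}$ geometric. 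Granting that, the remaining steps are routine bookkeeping given \Cref{alg:fastCW} and \Cref{clm:cluster_size_diam}.
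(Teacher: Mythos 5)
Your first half matches the paper: handling $H_0$ directly and bounding each $w(H_r)$ by \Cref{lem:fast_spanner_aspect} together with \Cref{lem:M_r bound} is exactly the paper's route (the paper just pulls out $|V_r|^{1/k}\le n^{1/k}$ immediately and works with the linear-in-$|V_r|$ quantity $|V_r|\cdot kg^{(r-1)\mu}$). The gap is in the step that is the technical core, the weighted telescoping $\sum_{r\ge1}|V_r|^{1+1/k}g^{r\mu}=O(n^{1+1/k})$. The mechanism you propose --- a potential built from the cluster counts $n_i$ and scales, with the ``super-linear dependence'' of the exponent $1+1/k$ paying for the per-block factor $g^{\mu}=k/\eps$ --- cannot work quantitatively. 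The super-linearity of $x\mapsto x^{1+1/k}$ is only a factor $x^{1/k}\le n^{1/k}$, which is $O(1)$ in the important regime $k=\Theta(\log n)$; it can never absorb a $\Theta(k/\eps)$ growth of the scale per block. Moreover, a count-based potential carrying the scale is not monotone: at levels where no merging occurs (every cluster already exceeds the size threshold and has small degree in $K_i$), $n_i=n_{i-1}$ while the scale grows by $g$, so such a potential increases by a factor $g$ per level, i.e.\ by $k/\eps$ per block, and the telescoping does not close. Finally, the consequence of the degree-$d$ rule that you invoke (each heavy cluster swallows at least $d$ lower-level clusters, so cluster counts drop in proportion to $|V_r|$) only yields the unweighted bound $\sum_r|V_r|=O(n/k)$; this is genuinely weaker than what is needed, because the terms of the sum carry geometrically growing weights $g^{r\mu}$, and no choice of the constant $d$ fixes this.

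What the paper actually uses is a potential of a different nature: $D_i=\sum_{\varphi\in\mathcal{C}_i}\widehat{\diam}(\varphi)+|\mathcal{C}_i|$, where $\widehat{\diam}$ is the cluster diameter capped from below by $\frac{1}{c}kg^i$. Its two key features are: (i) $D_i=O(n)$, because by \Cref{clm:cluster_size_diam} the number of vertices in a cluster exceeds its diameter --- this vertex-count resource is what bounds the whole weighted sum and is invisible to a potential that only sees how many clusters there are; and (ii) a heavy merge decreases the potential restricted to that cluster by a constant factor (\Cref{clm:heavy_potential_diffrence}) while light merges do not increase it (\Cref{clm:light_potential_diffrence}), which gives $D_{(r-1)\mu}-D_{(r+1)\mu}=\Omega\left(|V_r|\cdot kg^{(r-1)\mu}\right)$: the potential drop comes out already weighted by the scale at which the heavy merging happens, which is precisely why $\sum_r|V_r|\cdot kg^{(r-1)\mu}=O(D_0+D_1)=O(n)$ telescopes despite the geometric weights. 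The factor $n^{1/k}$ is then applied separately via $|V_r|^{1/k}\le n^{1/k}$, not extracted from convexity of the potential. To repair your argument you would have to replace the count-based potential by a diameter/size-based one with properties (i) and (ii); as written, the sentence ``telescoping gives $\sum_{r\ge1}|V_r|^{1+1/k}g^{r\mu}=O(n^{1+1/k})$'' is unsupported and the proposed mechanism would fail.
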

The size and lightness of \Cref{lem:mainFCW} now follows. All that
is left is to describe the exact implementation details and analyze the running
time, which is done below.

\subsection{Exact implementation of \Cref{alg:fastCW}}
\label{sec:running_time}
In this section we give a detailed description of \Cref{alg:fastCW}
and bound its running time. In addition we prove
\Cref{clm:cluster_size_diam}.

\paragraph*{First phase}
Let $m_i=|E_i|$ and $n_i=|\mathcal{C}_i|$ be number of $i$-level clusters as
described in \Cref{alg:fastCW}.
For each $i$, the clusters $\mathcal{C}_i$ form a partition of
$V$, where $\mathcal{C}_i$ is a refinement of $\mathcal{C}_{i+1}$. To
efficiently facilitate certain operations we will maintain a forest
$\mathcal{T}$ representing the hierarchy of containment between the clusters in
different levels. Specifically, $\mathcal{T}$ will have levels going from $-1$
to $k$. For simplicity, we treat each vertex $v\in V$ as a $-1$-cluster. Each
$i$-cluster $\vpi$ will be represented by an $i$-level node $v_\vpi$. $v_\vpi$
will have a unique out-going edge to $v_{\vpi'}$, the $(i+1)$-level node that
represents the $(i+1)$-cluster $\vpi'$ containing $\vpi$. In addition, each
node $v_\vpi$ in $\mathcal{T}$ will store the size of the cluster it
represents. Further, every $-1$ level node $v$ in $\mathcal{T}$ will have a
link to each of its ancestors in $\mathcal{T}$ (i.e. nodes representing the
$i\ge 0$ clusters containing $v$).

\textbf{$0$-clusters}: are constructed upon $-1$-clusters ($V$). The
construction of $\mathcal{C}_0$ is done the same way as for $i\ge 1$ (see
below), where we start the construction right away from the construction of
light clusters.

\textbf{$i$-clusters}:
Fix some $i\in[1,k-1]$. We assume that $\mathcal{T}$ is updated.
Construct a graph $K_i$ with $\mathcal{C}_{i-1}$ as its vertices. We add all
the edges of $E_i$ to $K_i$ (deleting self-loops and keeping only the
lightest edge between two clusters). The construction of $K_i$ is finished in
$O(m_i)$ time (using $\mathcal{T}$).

The construction of $\mathcal{C}_i$ is done from $K_i$ in two parts.
In the first part we construct the \emph{heavy clusters}. In the beginning
all the nodes are unmarked. We now go over all the nodes, $\vpi$, in $K_i$
and consider the following cases: If $\vpi$ has at least $d$ neighbors
and both $\vpi$ and all its neighbors are unmarked we create a new $i$-level
heavy cluster $\tilde{\vpi}$ containing $\vpi$ and all of its neighbors.
We mark all the nodes currently in $\tilde{\vpi}$, called the \emph{origin} of
$\tilde{\vpi}$ (additional clusters might be added later). In addition, we add all the (representatives of the) edges
between $\vpi$ and its neighbors to $E_{sp}$.
At the end of this procedure, each unmarked node $\vpi$ with at least $d$
neighbors has at least one marked neighbor. We add each such $\vpi$ to a
neighboring $i$-level cluster (via and edge to its origin) and mark $\vpi$. We
also add the corresponding edge to $E_{sp}$. For every heavy cluster $\vpi$
created so far, we denote all the vertices currently in $\vpi$ as the
\emph{core} of $\vpi$ (additional clusters might be added later during the
formation of light clusters).

In the second part we construct the \emph{light clusters}. We start by adding
all the (representatives of the) edges incident to the remaining unmarked nodes
to $E_{sp}$.
Let $L_i$ be the graph with the remaining unmarked nodes as its
vertex set and the edges of the MST
going between these nodes (keeping the
graph simple) as its edge set.
The clustering is similar to the $(i,\eps)$ clustering described in \Cref{sec:fw}.
Iteratively, we pick an arbitrary node $\vpi\in L_i$, and grow a cluster around
it by joining arbitrary neighbors one at a time. Once the cluster has size
at least $kg^i/c$ (number of actual vertices from $G$) we stop and make it
an $i$-level light cluster
$\tilde{\vpi}$. We call the nodes currently in $\tilde{\vpi}$ the \emph{core}
of $\tilde{\vpi}$. If the cluster has size less than $kg^i/c$ and there is no
remaining neighboring vertices in $L_i$, we add it to an existing neighboring
cluster (heavy or light) via an MST edge to its core (note that this is always
possible). We continue doing
this until all nodes are part of an $i$-level cluster.

This finishes the description of the clustering procedure. We are now
ready to prove \Cref{clm:cluster_size_diam}.
\begin{proof}[Proof of \Cref{clm:cluster_size_diam}]
	Recall the value of our constants: $g=20$, $c=24$, $d=160$. 
	We also assumed that $k\ge 640$.
    
    We will prove the claim by induction on $i$. We start with $i=0$.
    Property (2) of \Cref{clm:cluster_size_diam} is straightforward from
    the construction as we used only unit weight edges. For property (1), note
    that the core of each $0$-cluster has diameter at most $\frac{k}{c}$. Each
    additional part has diameter at most $\frac{k}{c}-1$ and is connected via
    unit weight edge to the core. Hence the diameter of each $0$-cluster is
    bounded by $3\cdot\frac{k}{c}< \frac{k}{2}$.
	
	Now assume that the claim holds for $i-1$ and let $C\in\mathcal{C}_i$.
	Assume first that $C$ is a light cluster. From the construction, $C$
    contains at least $kg^i/c$ vertices. The size of $C$ is larger than the
    diameter by the induction hypothesis and the fact that we used only unit
    weight edges to join the light cluster. For the upper bound on the
    diameter, observe that the diameter of $C$ was at most $kg^i/c$ before the
    last $(i-1)$-cluster was added to the core of $C$. At this point we add
    the final $(i-1)$-cluster, which has diameter at most $kg^{i-1}/2$. We
    conclude that the diameter of the core of $C$ is at most
    $kg^{i}/c+kg^{i-1}/2$.
	Afterwards, we might add additional parts to $C$. However, each such part
    has diameter strictly smaller then $kg^{i}/c$ and are added with a unit
    weight edge to the core of $C$. Thus each light cluster $C$ has diameter at
    most $\frac{1}{c}kg^{i}+\frac{1}{2}kg^{i-1}+2\frac{1}{c}kg^{i}\le
    kg^{i}\cdot\left(\frac{3}{c}+\frac{1}{2g}\right)\le\frac{1}{2}kg^{i}$.

	Next, we consider a heavy cluster $C$. Let $\tilde{C}\subseteq C$ be the
    set of vertices that belonged to $C$ before the construction of light
    clusters (i.e. the core of $C$). Let $\vpi$ be the original $(i-1)$-cluster that formed $C$.
    Then each $(i-1)$-cluster of $\tilde{C}$ is at distance at most $2$ from
    $\vpi$ in $K_i$. Thus, by the induction hypothesis, the diameter of
    $\tilde{C}$ is at most $5\cdot\frac{1}{2}kg^{i-1}+4\cdot
    g^{i+1}=kg^{i}\cdot\left(\frac{5}{2g}+\frac{4g}{k}\right)\le
    kg^{i}/4$, 
	and its size is
    at least $d\cdot kg^{i-1}/c=kg^{i}\cdot\frac{d}{cg}= kg^{i}/3$. 
	During the construction of the light clusters we might add some
    ``semi-clusters'' to $\tilde{C}$ of diameter strictly smaller then $kg^{i-1}/c$ via unit
    weight edges. We conclude that the diameter of $C$ is at most
    $kg^{i}/4+2\cdot
    kg^{i}/c=kg^{i}\cdot\left(\frac{1}{4}+\frac{2}{c}\right)= kg^{i}/3$.
\end{proof}

To conclude the first phase we will analyze its running time. Level $i$
clustering is done in $O(n_{i-1}+m_i)$ time, and updating $\mathcal{T}$ takes an
additional $O(n)$ time. In total all the first phase takes us $O(kn+m)$
time.

\paragraph*{Second phase}
Recall that we pick $\mu = \log(k/\eps)$ and refer to \Cref{alg:fastCW} for definitions and details. Here we only analyze the running time. We denote
$\tilde{m}_r=\sum_{i=(r-1)\mu}^{(r+1)\mu-1}m_i$.

Creating $S_0$ (line \ref{Line:G0}) takes $O(m+n\log n)$ times, computing $H_0$ takes
$\Oeps(m+n\log n)$ time (according
to \Cref{lem:fast_spanner_aspect}).
Next we have $\frac{k}{\mu}$ step loop. For fixed $r$, we create the vertex set $V_r$ (line \ref{Line:Vr}) in  $O(n_{(r-1)\mu})$ time, using $\mathcal{T}$. \footnote{Just go
	from each $(r+1)\mu$-level cluster to all of its descendants and return each
	$(r-1)\mu$ cluster that had a heavy cluster as ancestor in the first $\mu$ steps.}
Upon $V_r$, we create the graph $S_r$ (line \ref{Line:Sr}).
This is done by first adding the edges of $\mathcal{E}_r$, and all the edges in $\cup_{i=(r-1)\mu}^{(r+1)\mu-1}E_i$. 
We can  maintain $\mathcal{E}_r$ during the first phase in no additional cost,
thus creating $S_r$ and modifying the weights will cost us $O(n_{(r-1)\mu}+\tilde{m}_r)$.
Finally, we compute a spanner $H_r$ of $S_r$ using \Cref{lem:fast_spanner_aspect} (line \ref{Line:Hr})
in $O_{\eps}\left(\tilde{m}_{r}+|V_{r}|\cdot\log|V_{r}|\right)=O_{\eps}\left(\tilde{m}_{r}+n_{(r-1)\mu}\log n\right)$ time.
Then we add (the representatives of) the edges in $H_r$ into $E_{sp}$ (line \ref{Line:SrEsp}) in $O_{\eps}\left(\tilde{m}_{r}+n_{(r-1)\mu}\right)$ time. 
Thus, the total time invested in creating $H_r$ is $O_{\eps}\left(\tilde{m}_{r}+n_{(r-1)\mu}\log n\right)$.
The total time is bounded by 
\begin{align*}
\sum_{r=0}^{\left\lceil k/\mu\right\rceil -1}\text{Time}(H_{r}) & =O_{\epsilon}(m+n)+O_{\eps}\left(\tilde{m}_{r}+n_{(r-1)\mu}\log n\right)\\
& =O_{\epsilon}\left(m+\sum_{r=1}^{\left\lceil k/\mu\right\rceil -1}\tilde{m}_{r}+\log n\cdot\sum_{r=1}^{\left\lceil k/\mu\right\rceil -1}\frac{n}{kg^{(r-1)\mu}}\right)=O_{\epsilon}\left(m+n\log n\right)~.
\end{align*}
\paragraph*{Running time}
Combing the first and second phases above, the total running time is  
$O(kn+m)+O\left(m+n\log n\right)=O\left(m+n\log n\right)$.

\section{Proof of \Cref{thm:SparseNoLight}}\label{sec:proofOfSparseOnly}
We restate the theorem for convenience:
\SparseNoLight*
The basic idea in the algorithm of \Cref{thm:SparseNoLight}, is to
partition the edges $E$ of $G$ into $O_\eps(\log k)$ sets $E_1,E_2,\dots$, such
that the edges in $E_i$ are ``well separated''. That is, for every $e,e'\in
E_i$, the ratio between $w(e)$ and $w(e')$ is either a constant or at least
$k$.
By hierarchical execution of a modified version of \cite{HZ96}, with
appropriate clustering, we show how to efficiently construct a spanner of
size $O(n^{1+1/k})$ for each such ``well separated'' graph.
Thus, taking the union of these spanners, \Cref{thm:SparseNoLight}
follows.

In \Cref{subseq:Alg} we describe the algorithm. In
\Cref{subseq:Stre} we bound the stretch, in
\Cref{subseq:Spar} the sparsity, and in \Cref{subseq:Time} the running time. In \Cref{subsec:unionFind} we introduce a relaxed version of the union/find problem (called \emph{prophet union/find}), and construct a data structure to solve it. The prophet union/find is used in the implementation of our algorithm.

\subsection{Algorithm}\label{subseq:Alg}
The following is our main building block. The description and the proof can be found in \Cref{app:ModHZ}.
\begin{restatable}[Modified \cite{HZ96}]{lemma}{ModHZ}
	\label{lem:ModHZ}
	Given an unweighted graph $G=(V,E)$ and a parameter $k$,
	\Cref{alg:Mod-HZ} returns a $(2k-1)$-spanner $H$ with
	$O(n^{1+1/k})$ edges in $O(m)$ time. Moreover, it holds that
	\begin{enumerate}
		\item $V$ is partitioned into sets $S_1,\dots,S_R$, such that at iteration $i$ of the loop, $S_i$ was deleted from $V'$.
		\item For every $i$, $\diam_H(S_i)\le 2k-2$.
		\item When deleting $S_i$, \Cref{alg:Mod-HZ} adds less then
		$|S_i|\cdot n^\frac{1}{k}$ edges. All these edges are either
		internal to $S_i$ or going from $S_i$ to $\cup_{j>i}S_j$.
		\item There is an index $t$, such that for every $i\le t$, $|S_i|\ge
		n^{1/k}$, and for every $i>t$, $|S_i|=1$ (called singletons).
	\end{enumerate} 
\end{restatable}

\sloppy For simplicity we assume that the minimal weight of an edge in $E$ is $1$.
Otherwise, we can scale accordingly. Let $c_{l}=O_{\epsilon}(1)$, such that
$(1+\epsilon)^{c_{l}\log k}\ge\frac{18k}{\epsilon}$. Let $E_{i}=\left\{ e\in
E\mid w(e_{i})\in\left[(1+\epsilon)^{i},(1+\epsilon)^{i+1}\right)\right\} $,
and let $G_{j}$ be the subgraph containing the edges $\cup_{i\ge0}E_{j+i\cdot
c_{l}\log k}$. Note that $G_0,\dots,G_{c_{l}\log k -1}$ partition the edges of
$G$. Next we build a different spanner $H_j$ for every $G_j$ and set the final
spanner to be $H=H_0\cup\ldots\cup H_{c_l\log k - 1}$.

Fix some $j$. Set the $0$-clusters to be the vertex set $V$. Similar to the
previous sections we will have $i$-clusters, which are constructed as the union
of $(i-1)$-clusters. Let $G_{j,i}$ be the unweighted graph with the
$i$-clusters as its vertex set and $E_{j+i\cdot c_{l}\log k}$ as its edges
(keeping the graph simple). Let $H_{j,i}$ be the $(2k-1)$-spanner of $G_{j,i}$
returned by the algorithm of \Cref{lem:ModHZ}. We add (the representatives)
of the edges in $H_{j,i}$ to $H_{j}$. Based on $H_{j,i}$ we create the
$(i+1)$-clusters as follows. Let $S_1,\dots,S_t,V'$ be the appropriate
partition of the vertex set, where $S_1,\dots,S_t$ are non-singletons, and all
the singletons are in $V'$. Each $S_a$ for $a\le t$ becomes a $(i+1)$-cluster.
Next, for each connected component $C$ in $G_{j,i}[V']$, we
divide $C$ into clusters of size at least $k$, and diameter at most $3k$ (in
the case where $|C| < k$ we let $C$ be an $(i+1)$-cluster). We then proceed
to the next iteration.

\subsection{Stretch}\label{subseq:Stre}
We start by bounding the diameter of the clusters.
\begin{claim}
	Fix $j$, for every $i$-cluster $\varphi$ of $G_{j,i}$,  $\diam_{H}(\varphi)\le\frac{1}{2}\cdot\epsilon\cdot(1+\epsilon)^{j+i\cdot c_{l}\cdot\log k}$
\end{claim}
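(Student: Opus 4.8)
The plan is to prove this by induction on $i$, exactly mirroring the structure used in the analogous diameter bounds of \Cref{sec:fw} and \Cref{sec:fastCW} (e.g.\ the proof of \Cref{clm:cluster_size_diam}). Write $\rho_i = (1+\epsilon)^{j+i\cdot c_l\log k}$ for the weight scale at level $i$ in $G_j$; this is the smallest weight of an edge in $E_{j+i\cdot c_l\log k}$, hence a lower bound on the length (in $H$) of any edge that $H_{j,i}$ contributes at level $i$. The target bound is $\diam_H(\varphi)\le \tfrac12\epsilon\,\rho_i$. Note the key multiplicative gap between consecutive scales: $\rho_{i} = (1+\epsilon)^{c_l\log k}\cdot \rho_{i-1} \ge \tfrac{18k}{\epsilon}\,\rho_{i-1}$, by the choice of $c_l$. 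This factor of $\tfrac{18k}{\epsilon}$ is what makes the induction close despite each $(i+1)$-cluster being built from up to $O(k)$ many $i$-clusters joined along a spanner of stretch $2k-1$ and diameter $\le 2k-2$.

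For the base case $i=0$: the $0$-clusters are the singleton vertices of $V$, so $\diam_H(\varphi)=0\le \tfrac12\epsilon\rho_0$. For the inductive step, assume every $i$-cluster $\psi$ satisfies $\diam_H(\psi)\le\tfrac12\epsilon\rho_i$, and let $\varphi$ be an $(i+1)$-cluster. There are two cases, according to how \Cref{lem:ModHZ} and the subsequent refinement produce $\varphi$. (a) If $\varphi$ comes from a non-singleton part $S_a$ of the partition guaranteed by \Cref{lem:ModHZ}, then by property~2 of that lemma $\diam_{H_{j,i}}(S_a)\le 2k-2$; so any two $i$-clusters in $\varphi$ are joined by a path of at most $2k-2$ edges of $H_{j,i}$, each of weight $< (1+\epsilon)^{j+i\cdot c_l\log k +1}\le (1+\epsilon)\rho_i$. (b) If $\varphi$ comes from dividing a connected component $C$ of $G_{j,i}[V']$ into pieces of diameter at most $3k$, then the same estimate holds with $3k$ in place of $2k-2$; either way the number of inter-cluster edges along a diameter path is at most $3k$. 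Combining the $\le 3k$ inter-cluster hops, each of length $\le (1+\epsilon)\rho_i$, with the $\le 3k+1$ traversals of $i$-clusters each of diameter $\le \tfrac12\epsilon\rho_i$ (by the inductive hypothesis), I get
\[
\diam_H(\varphi)\ \le\ 3k\cdot(1+\epsilon)\rho_i \;+\; (3k+1)\cdot\tfrac12\epsilon\rho_i \ \le\ \big(6k + 2\epsilon k\big)\rho_i \ \le\ 8k\,\rho_i.
\]

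It then remains to check that $8k\,\rho_i \le \tfrac12\epsilon\,\rho_{i+1}$, which is precisely $16k/\epsilon \le \rho_{i+1}/\rho_i = (1+\epsilon)^{c_l\log k}$; this holds since $c_l$ was chosen so that $(1+\epsilon)^{c_l\log k}\ge 18k/\epsilon$. That closes the induction. The main thing to be careful about — the only place the argument could go wrong — is bookkeeping the exact number of inter-cluster hops and the exact weight of a level-$i$ edge (whether it is $(1+\epsilon)^{j+i\cdot c_l\log k}$ or up to a factor $(1+\epsilon)$ larger), together with the extra pieces that may be attached during the ``divide $C$ into clusters of size $\ge k$ and diameter $\le 3k$'' step and the $|C|<k$ fallback case; but in every case these only inflate the constant in front of $k\rho_i$ by a bounded amount, and the slack factor $18k/\epsilon$ versus the needed $16k/\epsilon$ absorbs it. (If one wants to be safe one can simply choose $c_l$ a hair larger, e.g.\ so that $(1+\epsilon)^{c_l\log k}\ge c\,k/\epsilon$ for a slightly larger constant $c$; this does not affect any asymptotics since it only changes the number $O_\epsilon(\log k)$ of subgraphs $G_j$ by a constant factor.)
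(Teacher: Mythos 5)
Your proposal is correct and follows essentially the same argument as the paper: induction on the level, bounding the diameter of a new cluster by at most $3k$ inter-cluster edges of the current weight scale plus $3k+1$ diameters of lower-level clusters, and closing the induction via the separation $(1+\epsilon)^{c_l\log k}\ge 18k/\epsilon$ (your slightly looser constant bookkeeping, $8k$ versus the needed $9k$, still fits within the same slack the paper uses). No gaps to report.
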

\begin{proof}
	We show the claim by induction on $i$. For $i=0$, the diameter is
	$0$. For general $i$, in the unweighted graph $G_{j,i-1}$, we created
	clusters of diameter at most $2k-2$ for the non-singletons and $3k$ for the
    singletons. Thus the diameter of $\phi$ in $H$ is bounded
    by the sum of $3k$ edges in $E_{j + (i-1)\cdot c_l \log k}$, and $3k+1$
    diameters of $(i-1)$-clusters. By the induction hypothesis
	\begin{align*}
	\diam_{H}(\varphi) & \le3k\cdot(1+\epsilon)^{j+(i-1)\cdot c_{l}\log k+1}+\left(3k+1\right)\cdot\frac{1}{2}\cdot\epsilon\cdot(1+\epsilon)^{j+(i-1)\cdot c_{l}\cdot\log k}\\
	& \le3k\cdot(1+\epsilon)^{j+(i-1)\cdot c_{l}\log k}\left(1+\epsilon+\epsilon\right)\\
	& =\frac{3k\cdot\left(1+2\epsilon\right)}{(1+\epsilon)^{c_{l}\log k}}\cdot(1+\epsilon)^{j+i\cdot c_{l}\log k}\\
	& \le\frac{1}{2}\cdot\epsilon\cdot(1+\epsilon)^{j+i\cdot c_{l}\log k}~,
	\end{align*}
	where the last inequality follows as $(1+\epsilon)^{c_{l}\log k}\ge\frac{18k}{\epsilon}$.
\end{proof}
The rest of the proof follows by similar arguments as in \Cref{eq:cluster-Stretch}. See \Cref{fig:stretchCluster} for illustration.

\subsection{Sparsity}\label{subseq:Spar}

Again, we fix some $j\ge0$. We will bound $|H_j|$ by $O(n^{1+1/k})$ using a
potential function. For a graph $G'$ with $n_{G'}$ vertices, set potential
function $P(G')=2\cdot n_{G'}\cdot n^{1/k}$.
That is, we start with a graph $G_{j,0}$ with $n_{0}=n$ vertices
and potential $P(G_{j,0})=2\cdot n\cdot n^{1/k}$. In step $i$ we considered the
graph $G_{j,i}$. Let $m_i$ denote the number of edges added to $H_j$ in
this step.
We will prove that $P(G_{j,i})-P(G_{j,i+1})\ge m_{i}$ to conclude that
\[
|H_{j}|=\sum_{i\ge0}m_{i}\le\sum_{i\ge0}P(G_{j,i})-P(G_{j,i+1})=P(G_{j,0})=2\cdot n^{1+\frac{1}{k}}~.
\]
Let $S_1,\dots,S_R$ be the partition created by \Cref{lem:ModHZ}, where
$S_1,\dots,S_t$ are the non-singletons, and $V'=\cup_{r>t}S_r$ are the
singletons. Let  $C_{1},\dots,C_{R'}$  be the connected components in the
induced graph $G_{j,i}[V']$. We will look on the clustering procedure
iteratively, and evaluate the change in potential after each contraction.

Consider first the non-singletons. Fix some $r\le t$ and let $X_{r}$ be
the graph after we contract $S_{1},\dots,S_r$ (note that
$X_0=G_{j,i}$). For $r\ge 0$, let $\hat{m}_{r}$ be the number of edges
added to $H_{j,i}$ while creating $S_{r}$. Recall that $\hat{m}_{r}\le
|S_{r}|\cdot n^{\frac{1}{k}}$. Thus
\begin{align*}
P(X_{r-1})-P(X_{r})&=2\cdot\left|X_{r-1}\right|\cdot n^{1/k}-2\cdot\left|X_{r}\right|\cdot n^{1/k}\\&=2\cdot\left|X_{r-1}\right|\cdot n^{1/k}-2\cdot\left(\left|X_{r-1}\right|-(\left|S_{r}\right|-1)\right)\cdot n^{1/k}\\&=2\cdot\left(\left|S_{r}\right|-1\right)\cdot n^{1/k}\ge \hat{m}_{r}~,
\end{align*}
where the inequality follows as $S_r$ is not a singleton. 

Next we analyze the singletons. Consider some singleton $\{v\} = S_r$. Recall
that once the algorithm processed $S_r$ it only added edges to the spanner from
the connected component $C_{r'}$ of $G_{j,i}[V']$ containing $v$. Furthermore
it added at most $n^{1/k}$ such edges. Instead of analyzing the potential
change from deleting $S_r$, we will analyze the change from processing the
entire connected component $C_{r'}$. Denote by $\tilde{m}_{r'}$ the total
number of edges added to the spanner from $C_{r'}$.
It holds that $\tilde{m}_{r'}\le \left|C_{r'}\right|\cdot n^{\frac{1}{k}}$.
Let $Y_{r'}$ be the graph $G_{j,i}$ where we contract $S_{1},\dots,S_t$,
and all the clusters created from $C_{1},\dots,C_{r'}$ (note that $Y_0=X_{t}$
and $Y_{R'}=G_{j,i+1}$). Suppose $C_{r'}$ is divided into clusters
$A_{1},\dots,A_{z}$. Then we have
\begin{align*}
P(Y_{r'-1})-P(Y_{r'}) & =2\cdot\left|Y_{r'-1}\right|\cdot
    n^{1/k}-2\cdot\left|Y_{r'}\right|\cdot n^{1/k}=2\cdot\left(\left|C_{r'}\right|-z\right)\cdot n^{1/k}\,.
\end{align*}
We prove that $P(Y_{r'-1})-P(Y_{r'})\ge \tilde{m}_{r'}$ by case analysis:
\begin{itemize}
    \item $|C_{r'}|=1$. Then $z=1$, which implies $\tilde{m}_{r'}=0$. 
    \item $|C_{r'}|>1$ and $z=1$. Then $\tilde{m}_{r'}\le |C_{r'}|\cdot
        n^{1/k}\le 2\cdot \left(\left|C_{r'}\right|-z\right)\cdot n^{1/k}=
        P(Y_{r'-1})-P(Y_{r'})$.
	\item $|C_{r'}|>1$ and $z>1$. Necessarily for every $q$, $|A_{q}|\ge k\ge3$.
	Hence
	\begin{align*}
	\tilde{m}_{r'} & \le|C_{r'}|\cdot n^{1/k}=\sum_{q=1}^{z}|A_{q}|\cdot n^{1/k}<\sum_{q=1}^{z}2\cdot\left(|A_{q}|-1\right)\cdot n^{1/k}\\
	& =2\cdot\left(\left|C_{r'}\right|-z\right)\cdot n^{1/k}=
        P(Y_{r'-1})-P(Y_{r'})~.
	\end{align*}
\end{itemize}
Finally, 
\begin{align}
P(G_{i})-P(G_{i+1}) & =\sum_{r=0}^{t-1}\left[P(X_{r})-P(X_{r+1})\right]+\sum_{r=0}^{R'-1}\left[P(Y_{r})-P(Y_{r+1})\right]\nonumber\\ 
& \ge\sum_{r=0}^{t-1}\hat{m}_{r}+\sum_{r=0}^{R'-1}\tilde{m}_{r}=m_{i}~.\label{eq:m_rPotBound}
\end{align}
\subsection{Running Time}\label{subseq:Time}
We can assume that the number of edges $m$ is at least $n^{1+1/k}\log k$, as otherwise we can simply return the whole graph as the spanner. Assuming this, dividing the edges into the sets $E_0,E_1,\dots$, and creating the graphs $G_0,\dots,G_{c_l\cdot\log k -1}$ will take us $O(m+n\log k)=O(m)$ time (first create $\log k$ empty graphs, and then go over the edges, and add each edge to the appropriate graph).
Fix $j$, and set $m_j$ to be the number of edges in $G_j$. The creation of $H_{j,i}$, takes $O\left(\left|E_{j+i\cdot c_{l}\log k}\right|\right)$ time (\Cref{lem:ModHZ}) which summed over all $i$ is $O(m_{j})$. Clustering can be done while constructing  $H_{j,i}$ with a union/find data structure. Queries to this data structure are used to identify the clusters containing the endpoints of edges and union operations are used when forming clusters from sub-clusters. However, a union/find data structure will be too slow for our purpose since we seek linear time for almost all choices of $k$. In the next subsection, we present a variant of the union/find problem called \emph{prophet union/find}; solving this problem suffices in our setting. With the constant $s$ from~\Cref{thm:SparseNoLight}, we give a data structure for prophet union/find which for any fixed $j$ spends time $O(m_{j}s + n\log^{(s)}n) = O(m_{j} + n\log^{(s)}n)$ on all operations. Summed over all $j$, this is $O(m + n\log^{(s)}n\log k)$.

We may assume that $n\log^{(s)}n\log k > m$ since otherwise the time bound simplifies to linear. Since we also assumed $m > n^{1+1/k}\log k$, we have
\[
  n\log^{(s)}n\log k > n^{1+1/k}\log k\Leftrightarrow \log^{(s)}n > n^{1/k}\Leftrightarrow k > \log n/\log^{(s+1)}n.
\]
We conclude that the running time is linear if $k\leq \log n/\log^{(s+1)}n$. Now, assume $k > \log n/\log^{(s+1)}n$. Then $\log n < k\log^{(s+1)}n < k^2$, implying that $\log^{(s)}n = O(\log^{(s-1)}(k^2)) = O(\log^{(s-1)}k)$ and we get a time bound of $O(m + n\log^{(s-1)}k\log k)$, as desired.


\subsection{Prophet Union/Find}\label{subsec:unionFind}
Consider a ground set $A=\{x_1,\dots,X_n\}$ of $n$ elements, partitioned to clusters $\mathcal{C}$, initially consisting of all the singletons. We need to support two type of operations:
\emph{find} query, where we are given an element $x\in A$ and should return the cluster $C\in\mathcal{C}$ containing it, and \emph{union} operation, where we are given two elements $x,y\in A$ where $x\in C_x$, $y\in C_y$ ($C_x,C_y\in\mathcal{C}$), and where we should delete the clusters $C_x,C_y$ from $\mathcal{C}$ and add a new cluster $C_x\cup\ C_y$ to $\mathcal{C}$.
The problem described above is called \emph{Union/Find}.
Tarjan \cite{Tarjan75} constructed a data structure that processes $m$ union/find operations over a set of $n$ elements in $O(m\cdot\alpha(n))$ time, where $\alpha$ is the very slow growing inverse Ackermann function.	

A trivial solution to the union/find problem will obtain $O(m+n\log n)$ running time, which is superior to \cite{Tarjan75} for $m\ge n\log n$.
Indeed, one can simply store explicitly for each element the name of the current cluster containing it, and given a union operation for $x,y\in A$, where $x\in A,y\in B$ and w.l.o.g. $|A|\ge |B|$, one can simply change the membership of all the elements in $B$ to $A$. Each find operation will take constant time, while every vertex can update its cluster name at most $\lg n$ times (as each time the cluster name is updated, the cluster size is at least doubled). Thus in total, the running time is bounded by $O(m+n\log n)$.

We introduce a relaxed version of the union-find problem we call the \emph{Prophet union/find}. Here we are given a ground set $A=\{x_1,\dots,x_n\}$ of $n$ elements, and a series $q_1,q_2,\dots,q_m\in A^m$ of element queries known in advance.
Then we are asked these previously provided set of queries, with union operations intertwined between the find operations. While the union operations are unknown in advance, they are of a restricted form: a union operation arriving after the query $q_j$, must be of the form $\{q_{j-1},q_j\}$, that is a union of the clusters containing the two last find query elements.
For a parameter $s$, we solve the Prophet union/find problem in $O(m\cdot s+n\log^{(s)}n)$ time.
\begin{theorem}\label{thm:prophet}
	For any $s\le\log^*n$, a series of $m$ operations in the Prophet union/find problem over a ground set of $n$ elements can be performed in  $O(m\cdot s+n\log^{(s)}n)$ time.
\end{theorem}
\begin{proof}
	Set $\alpha_0=n$, $\alpha_{s}=1$, and for $i\in\{1,\dots,s-1\}$, set $\alpha_i=\log^{(i)}n$. 
	We will execute a modified version of the trivial algorithm described above.
	Specifically, at any point in time, we maintain a set $A$ of elements partitioned to clusters $\mathcal{C}$, where for each element we will store the name of the cluster it currently belongs to, and for each cluster we will store the number of elements it contains.
	Initially we are given a list $q_1,\dots,q_m$ of queries, which we will store as well. Further, for each element $x\in A$, we will store a linked list containing the indices of all the queries $q_j$ such that $q_j=x$. Note that this prepossessing step is done in $O(m)$ time.  
	
	Given a find query $q_j$, which is simply a name of an element, we will return in $O(1)$ time the stored cluster name.
	Given a union operation arriving after the $j$'th query, we know that it is between the clusters $\mathcal{C}_{j-1},\mathcal{C}_{j}$ containing the elements $q_{j-1}$ and $q_j$ accordingly. We find the clusters and their sizes in $O(1)$ time. Assume w.l.o.g. that $|\mathcal{C}_{j-1}|\le|\mathcal{C}_{j}|$. Let $t\in\{0,\dots,s\}$ such that $|\mathcal{C}_{j}|\in[\alpha_{t+1},\alpha_t)$.
	There are two cases:
	\begin{enumerate}
		\item If $|\mathcal{C}_{j-1}|+|\mathcal{C}_{j}|< \alpha_t$, we proceed as the trivial algorithm. Specifically, we go over the elements of $\mathcal{C}_{j-1}$, update their cluster to be $\mathcal{C}_{j}$, and update the size of the cluster $\mathcal{C}_{j}$ to be $|\mathcal{C}_{j-1}|+|\mathcal{C}_{j}|$.
		\item Else, we have $|\mathcal{C}_{j-1}|+|\mathcal{C}_{j}|\ge \alpha_t$; in this case, replace all the elements  $\mathcal{C}_{j-1}\cup\mathcal{C}_{j}$ in $A$ by a new element $y$. 
		Specifically, add a new element $y$ to $A$ that will belong to a singleton cluster $C_y=\{y\}$ the size of which will be updated set to be $|\mathcal{C}_{j-1}|+|\mathcal{C}_{j}|$. Then make a linked list of queries for $y$ by concatenating the linked lists of the elements in $\mathcal{C}_{j-1}\cup\mathcal{C}_{j}$. Finally, use the newly created linked list to go over all the find queries $q_j,\dots,q_m$, and replace every appearance of an element from $\mathcal{C}_{j-1}\cup\mathcal{C}_{j}$ with $y$. Note that the we now have a valid preprocessed instance of the prophet union/find problem.
	\end{enumerate}
	We finish with time analysis of the execution of the algorithm. Note that every find operation takes $O(1)$ time, as we explicitly store all the queries and their answers.
	There are two types of executions of the union operation above.
	Denote by $A_t$ all the artificial elements created during the execution of the algorithm such that the number of ground elements they are replacing is in $[\alpha_{t},\alpha_{t-1})$. Then $|A_t|\le \frac{n}{\alpha_{t}}$. Each element $y\in A_t$ actively participated (that is makes any changes) in at most $\log \alpha_{t-1}$ union operations of the first type. This is because each time this happens, the size of the cluster containing $y$ is (at least) doubled, and once it reaches the size of $\alpha_{t-1}$, a union operation of the second type will occur, and $y$ will be deleted. Note that processing $y$ in each such union operation takes only $O(1)$ time (updating the name of the cluster it belongs to, and updating the size of the cluster). Thus it total, the time consumed by all the union operations of the first type is bounded by
	\[
	O(1)\cdot\sum_{t=1}^{s}|A_{t}|\cdot\log\alpha_{t-1}=O(1)\cdot\left[n\cdot\log\alpha_{s-1}+\sum_{t=1}^{s-2}\frac{n}{\alpha_{t}}\cdot\alpha_{t}\right]=O(n\cdot s+n\cdot\log^{(s)}n)~.
	\]
	To bound the time consumed by union operations of the second type, note that each ground element $x\in A=A_{s}$, can go over at most $s$ such transitions (implicitly). For every query $q_j$ that initially was to $x$, we will pay $O(1)$ for each such transition (update the query and the linked list), and  thus $O(s)$ overall. We conclude that all the changes due to the second type union operations consume at most $O(m\cdot s)$ time. The theorem now follows.
\end{proof}

\section{Deterministic Incremental Distance Oracles for Small Distances}\label{sec:DistOracle}
In this section, we present a deterministic incremental approximate distance
oracle which can answer approximate distance queries between vertex pairs whose
actual distance is below some threshold parameter $d$. This oracle will give us
\Cref{thm:dist_oracle} and finish the proof of \Cref{thm:fast_light}. In fact,
we will show the following more general result. \Cref{thm:dist_oracle} follows
directly by setting $k = 1/\eps$ in the theorem below.
\begin{theorem}\label{Thm:DistOracle}
    Let $G = (V,E)$ be an $n$-vertex undirected graph that undergoes a series
    of edge insertions. Let $G$ have positive integer edge weights and set $E =
    \emptyset$ initially. Let $\varepsilon > 0$ and
    positive integers $k$ and $d$ be given. Then a deterministic approximate
    distance oracle for $G$ can be maintained under any sequence of operations
    consisting of edge insertions and approximate distance queries. Its total
    update time is $O_{\varepsilon}(m^{1+1/k}(3+\varepsilon)^{k-1}d(k+\log
    d)\log n)$ where $m$ is the total number of edge insertions; the value of
    $m$ does not need to be specified to the oracle in advance. Given a query
    vertex pair $(u,v)$, the oracle outputs in $O(k\log n)$ time an approximate
    distance $\tilde d(u,v)$ such that $\tilde d(u,v)\ge d(u,v)$ and such that
    if $d(u,v)\le d$ then $\tilde d(u,v)\le (2(3+\varepsilon)^{k-1}-1)d(u,v)$.
\end{theorem}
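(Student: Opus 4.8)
The plan is to build and analyze the hierarchical ball structure sketched in Section~\ref{sec:overview}. Fix radius caps $d_i$ growing geometrically with ratio $(3+\varepsilon)$ (so $d_i\asymp(3+\varepsilon)^i d$, topping out at $(3+\varepsilon)^{k-1}d$) and degree budgets $\Delta_i$ growing geometrically in $i$, chosen at the end so that $|A_i|\cdot\Delta_i=O_\varepsilon(m^{1+1/k})$ holds at all times and the top level has $\Delta_{k-1}=\Theta(m)$. Maintain $V=A_0\supseteq A_1\supseteq\cdots\supseteq A_{k-1}$, where for each level $i$ and each center $u\in A_i$ we keep, via an incremental truncated Even-Shiloach tree (Theorem~\ref{thm:es-tree}), a ball $B_i(u)$ of the largest integer radius $r_i(u)\le d_i$ whose vertex set has total degree at most $\Delta_i$. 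Since insertions only shrink distances, a ball can only outgrow its budget; when that happens we freeze the current ES tree, replace $r_i(u)$ by $\lceil r_i(u)/2\rceil$, and start a fresh ES tree, which happens only $O(\log d_i)=O(k+\log d)$ times per center. To form $A_{i+1}$ we bucket the current centers of $A_i$ by radius scale (powers of $1+\varepsilon$), and within each bucket greedily keep a maximal family of centers whose balls $B_i(\cdot)$ are pairwise vertex-disjoint; every remaining $u\in A_i\setminus A_{i+1}$ stores a pointer $p(u)$ to a \emph{same-bucket} center of $A_{i+1}$ whose ball meets $B_i(u)$ (such a center exists by maximality within the bucket). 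All these choices are monotone under insertions—balls only shrink, disjointness is preserved, and $A_{i+1}$ together with the pointers needs only local repair—so the structure is deterministic and well-defined.

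For a query $(u,v)$ we run the climb: set $u_0:=u$; at level $i$, if $v\in B_i(u_i)$ return $\sum_{j<i}d_{B_j(u_j)}(u_j,u_{j+1})+d_{B_i(u_i)}(u_i,v)$; otherwise set $u_{i+1}:=u_i$ if $u_i\in A_{i+1}$ and $u_{i+1}:=p(u_i)$ if $u_i\in A_i\setminus A_{i+1}$, and recurse; if the levels run out without finding $v$, output $\infty$. There are $O(k)$ levels and each ball lookup or pointer hop costs $O(\log n)$, giving query time $O(k\log n)$. Every quantity summed is the length of an actual $u$--$v$ walk realized inside the balls, and a truncated ES tree returns exact distances within its radius, so the output is always $\ge d(u,v)$.

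The stretch bound is the conceptual core. Write $D_i:=d_G(u_i,v)$. If the climb reaches level $i$ and $v\notin B_i(u_i)$, then $r_i(u_i)<D_i$, and since $p(u_i)$ lies in the same scale bucket, $r_i(p(u_i))\le(1+\varepsilon)r_i(u_i)$; hence the hop satisfies $d_G(u_i,u_{i+1})\le r_i(u_i)+r_i(u_{i+1})\le(2+\varepsilon)r_i(u_i)<(2+\varepsilon)D_i$, so by the triangle inequality $D_{i+1}\le D_i+d_G(u_i,u_{i+1})<(3+\varepsilon)D_i$. Induction then gives that the climb reaches level $i$ only with $D_i\le(3+\varepsilon)^i d_G(u,v)$, so $\sum_{j<i}d_G(u_j,u_{j+1})<(2+\varepsilon)\sum_{j<i}(3+\varepsilon)^j d_G(u,v)=\big((3+\varepsilon)^i-1\big)d_G(u,v)$; combined with $d_{B_i(u_i)}(u_i,v)=D_i\le(3+\varepsilon)^i d_G(u,v)$ when $v$ is finally located, the returned estimate is at most $\big(2(3+\varepsilon)^i-1\big)d_G(u,v)\le\big(2(3+\varepsilon)^{k-1}-1\big)d_G(u,v)$. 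It remains to see that when $d_G(u,v)\le d$ the climb finds $v$ before exhausting the levels: the caps are chosen so that $(3+\varepsilon)^i d_G(u,v)\le(3+\varepsilon)^i d\le d_i$, so whenever $B_i(u_i)$ has its \emph{full} radius $d_i$ we already have $D_i\le d_i$ and hence $v\in B_i(u_i)$; thus the climb only continues past level $i$ when $B_i(u_i)$ was forced to shrink, and at the top level $\Delta_{k-1}=\Theta(m)$ prevents any shrinking, so $B_{k-1}(u_{k-1})$ spans the whole component and contains $v$.

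For the running time, the freeze-and-halve rule makes the ES-tree work charged to a fixed center $u\in A_i$ a convergent geometric series in the generation (depth halves while the edge count stays $\le\Delta_i$), so it totals $O(\Delta_i\cdot d_i\cdot\log n)$ per center—an extra $O(k+\log d)$ factor coming from the number of generations together with the auxiliary bookkeeping for the integer-weight ES trees—hence level $i$ costs $O_\varepsilon\big(|A_i|\Delta_i\cdot d_i(k+\log d)\log n\big)=O_\varepsilon\big(m^{1+1/k}(3+\varepsilon)^i d(k+\log d)\log n\big)$; summing the geometric series over $i<k$ yields the claimed $O_\varepsilon\big(m^{1+1/k}(3+\varepsilon)^{k-1}d(k+\log d)\log n\big)$, and maintaining the $A_i$'s and the pointers is dominated by this. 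The step I expect to be the main obstacle is establishing the invariant $|A_i|\cdot\Delta_i=O_\varepsilon(m^{1+1/k})$ deterministically and incrementally: one must argue that the centers selected into $A_{i+1}$—precisely those whose balls have been forced to shrink, hence carry total degree $\ge\Delta_i$—are vertex-disjoint across a given scale and therefore number $O(m/\Delta_i)$ per scale, that the full-radius centers contribute negligibly, and that re-selecting maximal disjoint sets and repairing pointers as balls shrink stays within the ES-tree budget; making the scale bucketing interact cleanly with the halving radii (both of which move only downward) is the delicate part.
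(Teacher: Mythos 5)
Your overall architecture coincides with the paper's: levels $A_0\supseteq\cdots\supseteq A_{k-1}$ with degree budgets growing like $m^{(i+1)/k}$, radius caps growing by a factor $\approx 3+\varepsilon$ per level, promotion to the next level via maximal families of vertex-disjoint balls within each $(1+\varepsilon)$-radius class, pointer hops for non-promoted centers, the recursion $D_{i+1}\le(3+\varepsilon)D_i$ giving stretch $2(3+\varepsilon)^{k-1}-1$, termination/$\infty$ handling via full-radius centers at a level whose cap exceeds $(3+\varepsilon)^i d$, and a geometric sum of per-level costs $|A_i|\cdot m_i\cdot d_i\log n$. That skeleton, including the stretch induction and the time accounting, is essentially the paper's.

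The genuine gap is that the heart of the proof — the deterministic, incremental maintenance of the balls, the disjoint families, and the pointers, i.e.\ the counterpart of the paper's Lemma~\ref{Lem:MaintainIncTrees} together with Lemmas~\ref{Lem:DisjointBalls}--\ref{Lem:Detour} — is exactly what you defer as ``the main obstacle,'' and the route you sketch would fail as stated. You propose keeping a maximal disjoint family among the \emph{current} balls, with each non-promoted center pointing to a same-bucket promoted center whose \emph{current} ball meets its own, to be ``locally repaired'' as balls shrink. Two concrete problems: (i) for the bound $|A_{i+1}|=O((m/m_i)\log_{1+\varepsilon}d_i)$ you may only promote centers whose balls overflowed (degree $>m_i$); but then a never-shrunk (full-radius) center is neither promoted nor pointered, so your climb is undefined there — you need the explicit rule of answering $\infty$ when $d_i(u_i)=d_i$, valid because then $d_G(u,v)>d$, which is precisely the paper's test in line~6 of the query; (ii) more seriously, a pointer installed when $u_i$'s ball was larger can become stale after further shrinking: the witness vertex need not lie within the current radius, and then the hop bound $d_G(u_i,u_{i+1})\le(2+\varepsilon)\,r_i(u_i)$ — which your induction needs in terms of the \emph{current} radius — no longer follows; moreover re-selecting maximal families and pointers after each shrink is not obviously affordable. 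The paper sidesteps repair entirely by a monotone design: each overflow outputs a \emph{frozen} snapshot tree (degree $>m_i$, radius $\le(1+\varepsilon)d_i(u)$), disjointness is tested against the ever-growing union $W_{i,j}$ of accepted snapshots at that scale, promotions and pointers are permanent, and at query time the scale $j$ is recomputed from the current $d_i(u_i)$ so that the referenced frozen trees have radius at most $(1+\varepsilon)d_i(u_i)$ (Lemma~\ref{Lem:Detour}); this is what makes both the size bound and the hop-length bound go through. Without this or an equivalent monotone mechanism, your central invariant and the correctness of the hop remain unestablished.
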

As discussed in \Cref{sec:overview}, a main advantage of our oracle is that,
unlike, e.g., the incremental oracle of Roditty and Zwick~\cite{RodittyZ12}, it
works against an adaptive adversary.
Hence, the sequence of edge insertions does not need to be fixed in advance and
we allow the answer to a distance query to affect the future sequence of
insertions. This is crucial for our application since the sequence of edges
inserted into our approximate greedy spanner depends on the answers to the
distance queries.

We assume in the following that $m\ge n$; if this is not the case, we simply
extend the sequence of updates with $n - m$ dummy updates.
We will present an oracle satisfying \Cref{Thm:DistOracle} except that
we require it to be given $m$ in advance. An oracle without this requirement
can be obtained from this as follows. Initially, an oracle is set up with $m =
n$. Whenever the number of edge insertions exceeds $m$, $m$ is doubled and a
new oracle with this new value of $m$ replaces the old oracle and the sequence
of edge insertions for the old oracle are applied to the new oracle. By a
geometric sums argument, the total update time for the final oracle dominates
the time for all the previous oracles. Hence, presenting an oracle that knows
$m$ in advance suffices to show the theorem.

Before describing our oracle, we need some definitions and notation. For an
edge-weighted tree $T$ rooted at a vertex $u$, let $d_T(v)$ denote the distance
from $u$ to $v$ in $T$, where $d_T(v) = \infty$ if $v\notin V(T)$. Let $r(T) =
\max_{v\in V(T)}d_T(v)$. Given a graph $H$ and $W\subseteq V(H)$, we let
$\deg_H(W) = \sum_{v\in W}\deg_H(v)$ and given a subgraph $S$ of $H$, we let
$\deg_H(S) = \deg_H(V(S))$. For a vertex $u$ in an edge-weighted graph $H$ and
a value $r\ge 0$, we let $B_H(u,r)$ denote the ball with center $u$ and radius
$r$ in $H$, i.e., $B_H(u,r) = \{v\in V(H)\vert d_H(u,v)\le r\}$. When $H$ is
clear from context, we simply write $B(u,r)$.

We use a superscript $(t)$ to denote a dynamic object (such as a graph or edge
set) or variable just after the $t$'th edge insertion where $t = 0$ refers to
the object prior to the first insertion and $t = m$ refers to the object after
the final insertion. For instance, we refer to $G$ just after the $t$'th update
as $G^{(t)}$.

In the following, let $\varepsilon$, $k$, and $d$ be the values and let $G =
(V,E)$ be the dynamic graph of \Cref{Thm:DistOracle}. For each
$i\in\{0,\ldots,k-1\}$, define $m_i = 2m^{(i+1)/k}$ and let $d_i$ be the
smallest integer power of $(1+\varepsilon)$ of value at least $(3+2\varepsilon)^id$.
For each $u\in V$ and each $t\in\{0,\ldots,m\}$, let $d_i^{(t)}(u)$ be the
largest integer power of $(1+\varepsilon)$ of value at most $d_i$ such that
$\deg_{G^{(t)}}(B^{(t)}(u,d_i^{(t)}(u)))\le m_i$; note that $d_i^{(t)}(u)\geq (1+\varepsilon)^{-1}$. We let $B_i^{(t)}(u) =
B^{(t)}(u,d_i^{(t)}(u))$ and let $T_i^{(t)}(u)$ be a shortest path tree from
$u$ in $B_i^{(t)}(u)$. Note that $T_i^{(t)}(u)$ need not be uniquely defined;
in the following, when we say that a tree is equal to $T_i^{(t)}$, it means
that the tree is equal to some shortest path tree from $u$ in $B_i^{(t)}(u)$.

The data structure in the following lemma will be used as black box in our distance oracle. One of its tasks is to efficiently maintain trees $T_i^{(t)}(u)$.
\begin{lemma}\label{Lem:MaintainIncTrees}
Let $U\subseteq V$ be a dynamic set with $U^{(0)} = \emptyset$ and let $i\in\{0,\ldots,k-1\}$ be given. There is a deterministic dynamic data structure which supports any sequence of update operations, each of which is one of the following types:
\begin{description}
    \item [$\FuncSty{Insert-Edge}(u,v)$:] this operation is applied whenever an edge $(u,v)$ is inserted into $E$,
    \item [$\FuncSty{Insert-Vertex}(u)$:] inserts vertex $u$ into $U$.
\end{description}
Let $t_{\max}$ denote the total number of operations and for each vertex $u$ inserted into $U$, let $t_u$ denote the update in which this happens. The data structure outputs in each update $t\in\{1,\ldots,t_{\max}\}$ a (possibly empty) set of trees $\overline T_i^{(t)}(u)$ rooted at $u$ for each $u\in U^{(t)}$ satisfying either $t > t_u$ and $d_i^{(t)}(u) < d_i^{(t-1)}(u)$ or $t = t_u$ and $d_i^{(t)}(u) < d_i$. For each such tree $\overline T_i^{(t)}(u)$, $r(\overline T_i^{(t)}(u))\le (1+\varepsilon)d_i^{(t)}(u)\le d_i$ and $\deg_{G^{(t)}}(\overline T_i^{(t)}(u)) > m_i$. Total update time is $O(m) + O_{\varepsilon}(|U^{(t_{\max})}|m_id_i\log n)$.

At any point, the data structure supports in $O(1)$ time a query for the value $d_i^{(t)}(u)$ and in $O(\log n)$ time a query for the value $d_{T_i^{(t)}(u)}(v)$ and for whether $v\in V(T_i^{(t)}(u))$, for any query vertices $u\in U$ and $v\in V$.
\end{lemma}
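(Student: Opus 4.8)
The plan is to build the data structure on top of the classical incremental Even--Shiloach (ES) tree machinery, maintaining one ES-type tree per vertex of $U$ but with the crucial twist that the radius is shrunk geometrically whenever the degree budget $m_i$ is exceeded. First I would recall the standard fact that a single ES tree maintaining shortest paths from a fixed source up to depth $D$ in an incremental graph with integer weights can be maintained in $O(mD)$ total time, supporting distance queries in $O(1)$ time; since we want $(1+\varepsilon)$-approximate distances up to depth $d_i$ rather than exact ones, I would instead maintain the tree on the graph with edge weights rounded to powers of $(1+\varepsilon)$ (or equivalently run the ES tree on $O_\varepsilon(\log n)$ scales), so that the effective depth is $O_\varepsilon(\log n)$ per scale and the per-source cost becomes $O_\varepsilon(m d_i \log n)$ — I would be a little careful here that "depth'' is measured in number of levels, which after bucketing weights into powers of $(1+\varepsilon)$ is $O(\log_{1+\varepsilon} d_i) = O_\varepsilon(\log n)$, and the radius bound $r(\overline T_i^{(t)}(u)) \le (1+\varepsilon)d_i^{(t)}(u)$ comes precisely from this rounding.

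The second ingredient is the radius-shrinking mechanism. For each $u \in U$, when it is inserted at time $t_u$ I would initialize $d_i(u)$ to be the largest power of $(1+\varepsilon)$ at most $d_i$ for which $\deg_{G^{(t_u)}}(B^{(t_u)}(u, d_i(u))) \le m_i$; this can be found by growing the ball outward one ES-level at a time and summing degrees, charging the work to the edges scanned. During subsequent insertions the ball $B_i^{(t)}(u)$ can only gain vertices (incremental graph), so $\deg_{G^{(t)}}(B^{(t)}(u, d_i(u)))$ is nondecreasing in $t$; whenever it first exceeds $m_i$, I output the current tree as $\overline T_i^{(t)}(u)$ (which then satisfies $\deg_{G^{(t)}}(\overline T_i^{(t)}(u)) > m_i$ and $r \le (1+\varepsilon)d_i^{(t-1)}(u) \le d_i$), then decrease $d_i(u)$ to the next smaller power of $(1+\varepsilon)$ for which the degree bound holds again, and restart a fresh ES tree at this smaller radius on the current graph. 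Since $d_i(u)$ only decreases and takes values among powers of $(1+\varepsilon)$ between $1$ and $d_i$, this happens at most $O(\log_{1+\varepsilon} d_i) = O_\varepsilon(\log n)$ times per vertex — actually I would phrase the radius-count bound as $O(\log d_i)$ geometric steps which, combined with the $d_i$ in the depth, is where the $m_i d_i \log n$ factor and the stated total bound $O(m) + O_\varepsilon(|U^{(t_{\max})}| m_i d_i \log n)$ come from; the leading $O(m)$ term accounts for one $O(1)$-time bookkeeping step per edge insertion (e.g.\ maintaining global degree counters) regardless of $|U|$.

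For the query operations: a query for $d_i^{(t)}(u)$ is answered in $O(1)$ by reading off the stored current radius of $u$'s tree; a query for $d_{T_i(u)}(v)$ or for membership $v \in V(T_i(u))$ is answered in $O(\log n)$ by a lookup into the ES tree's level/distance array for the current (live) tree of $u$ — the $\log n$ arises from the $O_\varepsilon(\log n)$ weight scales one may have to consult, or from a balanced-search-tree lookup keyed by vertex id. The main obstacle I expect is the amortized running-time accounting: I must argue that every unit of work (scanning an edge when the ball grows, rebuilding a tree after a radius shrink, recomputing the admissible radius) can be charged either to an edge insertion or to one of the $O_\varepsilon(\log n)$ "epochs'' of a vertex $u \in U$, and in the latter case each epoch does at most $O_\varepsilon(m_i d_i \log n)$ work because the ES tree it maintains lives on a ball of total degree $\le m_i$ (plus the one edge that overflowed it) at depth $O_\varepsilon(\log n)$ over $d_i$ weight-bucket levels. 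Getting this charging tight — in particular making sure the degree bound $\le m_i$ genuinely caps the ES tree's cost throughout an epoch and that the overflow is detected promptly without rescanning — is the delicate part; everything else is a fairly direct assembly of known incremental-shortest-path tools.
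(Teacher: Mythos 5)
Your overall architecture (per-source truncated trees capped at total degree $m_i$, radii that are powers of $(1+\varepsilon)$ and only shrink, $O_\varepsilon(\log d_i)$ epochs per source) is the same as the paper's, but there are concrete gaps. First, the weight-rounding step rests on a false premise: rounding edge weights to powers of $(1+\varepsilon)$ does not make the effective ES depth $O_\varepsilon(\log n)$ -- path lengths are sums of rounded weights and still take up to $\Theta(d_i)$ distinct values, so nothing is gained, and you lose something essential: the lemma (and its use in the oracle) requires $d_i^{(t)}(u)$ and $d_{T_i(u)}(v)$ to be defined via \emph{exact} balls and shortest-path trees in $G^{(t)}$; the $(1+\varepsilon)$ in $r(\overline T_i^{(t)}(u))\le(1+\varepsilon)d_i^{(t)}(u)$ comes from truncating the tree at the next power-of-$(1+\varepsilon)$ radius, not from perturbing weights. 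The paper keeps integer weights exact inside the ball precisely because the charging ``each strict distance decrease is by at least $1$, costs $\deg(v)\log n$, and can happen at most (current radius) times'' is what produces the $d_i$ factor; your rounded weights break the statement you are supposed to prove and buy no speedup.

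Second, your time accounting does not reach the claimed bound as written: you charge up to $O_\varepsilon(m_i d_i\log n)$ \emph{per epoch} and have $O_\varepsilon(\log d_i)$ epochs, which overshoots. What is needed (and what the paper does) is that within a maximal range of updates with fixed radius $r$, the cost is $O(m_i\log n)$ for vertices newly entering the tree plus $O(r\cdot m_i\log n)$ for strict decreases, and then the sum over ranges is geometric because the radii decrease by factors of $(1+\varepsilon)$, giving $O_\varepsilon(m_i d_i\log n)$ total per source; you gesture at this but never commit to the per-epoch bound being proportional to that epoch's radius. Third, the ``$O(1)$ bookkeeping per insertion'' does not explain how an inserted edge is routed only to the sources whose current trees contain one of its endpoints; notifying all of $U$ costs $\Omega(m|U|)$, which can dominate $O(m)+O_\varepsilon(|U|m_id_i\log n)$ when $m_i$ is small. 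The paper maintains, for every vertex $v$, the set $U^{(t)}(v)$ of sources whose tree currently contains $v$ (as a balanced search tree, updated as vertices join trees), and proves that trees not containing an endpoint are unchanged; some such reverse index is indispensable and is missing from your plan.
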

\begin{proof}
We assume in the following that each vertex of $V$ has been assigned a unique label from the set $\{0,\ldots,n-1\}$.

In the following, fix a vertex $u\in V$ such that $t_u$ exists, i.e., update
    $t_u$ is the operation $\FuncSty{Insert-Vertex}(u)$. Before proving the lemma, we describe a data structure $\mathcal D_u$ which maintains the following for each $t\in\{t_u,\ldots,m\}$: a tree $T^{(t)}(u)$ rooted at $u$, a distance threshold $d^{(t)}(u)$, and distances $d_{T^{(t)}(u)}(v)$ for all $v\in V(T^{(t)}(u))$. We will show that $\mathcal D_u$ maintains the following two properties:
\begin{enumerate}
\item $T^{(t)}(u) = T_i^{(t)}(u)$ and $d^{(t)}(u) = d_i^{(t)}(u)$ for all $t\in\{t_u,\ldots,m\}$,
\item in each update $t\in\{t_u,\ldots,m\}$ where either $t > t_u$ and $d_i^{(t)}(u) < d_i^{(t-1)}(u)$ or $t = t_u$ and $d_i^{(t)}(u) < d_i$, $\mathcal D_u$ outputs a tree $\overline T_i^{(t)}(u)$ rooted at $u$ such that $r(\overline T_i^{(t)}(u))\le (1+\varepsilon)d_i^{(t)}(u)\le d_i$ and $\deg_{G^{(t)}}(\overline T_i^{(t)}(u)) > m_i$. In all other updates, no tree is output.
\end{enumerate}

After any update $t$, $\mathcal D_u$ supports in $O(1)$ time a query for the value $d_i^{(t)}(u)$ and in $O(\log n)$ time a query for the value $d_{T_i^{(t)}(u)}(v)$ and for whether a given vertex $v\in V$ belongs to $V(T_i^{(t)}(u))$.

$\mathcal D_u$ maintains a tree $T^{(t)}(u)$ rooted at $u$ as well as a distance threshold $d^{(t)}(u)$; to simplify notation, we shall write $T^{(t)}$ instead of $T^{(t)}(u)$ and $d^{(t)}$ instead of $d^{(t)}(u)$. Later we show that $T^{(t)} = T_i^{(t)}(u)$ and $d^{(t)} = d_i^{(t)}(u)$. The tree $T^{(t)}$ is maintained by keeping a predecessor pointer for each vertex to its parent (with $u$ having a nil pointer) and where each vertex $v\in T^{(t)}$ is associated with its distance $d_{T^{(t)}}(v)$ from the root $u$.

Since $d^{(t)}$ is maintained explicitly by $\mathcal D_u$ and since $d^{(t)} = d_i^{(t)}(u)$, it follows that $\mathcal D_u$ can answer a query for the value $d_i^{(t)}(u)$ in $O(1)$ time. To answer the other two types of queries, $\mathcal D_u$ maintains $V(T^{(t)})$ as a red-black tree keyed by vertex labels; this clearly allows both types of queries to be answered in $O(\log n)$ time.

\subparagraph{Handling the update $t = t_u$ for $\mathcal D_u$:}
For the initial update $t = t_u$, a tree $\overline T_i^{(t)}(u)$ is computed by running Dijkstra's algorithm from $u$ in $G^{(t)}$ with the following modifications:
\begin{enumerate}
\item the priority queue initially contains only $u$ with estimate $0$; all other vertices implicitly have an estimate of $\infty$,
\item a vertex is only added to the priority queue if a relax operation caused its distance estimate to be strictly decreased to a value of at most $d_i$,
\item the algorithm stops when the priority queue is empty or as soon as $\deg_{G^{(t)}}(\overline T_i^{(t)}(u)) > m_i$.
\end{enumerate}
If the algorithm emptied its priority queue, $\mathcal D_u$ sets $T^{(t)}\leftarrow\overline T_i^{(t)}(u)$ and $d^{(t)}\leftarrow d_i$, finishing the update.

Now, assume that the algorithm did not empty its priority queue and let $v_{\max}$ denote the last vertex added to $\overline T_i^{(t)}(u)$. $\mathcal D_u$ lets $d^{(t)}$ be the largest power of $(1+\varepsilon)$ such that $d^{(t)} < d_{\overline T_i^{(t)}(u)}(v_{\max})$. Then it obtains $T^{(t)}$ as the subtree of $\overline T_i^{(t)}(u)$ consisting of all vertices of distance at most $d^{(t)}$ from $u$ in $\overline T_i^{(t)}(u)$. Finally, it outputs $\overline T_i^{(t)}(u)$.

\subparagraph{Handling updates $t > t_u$ for $\mathcal D_u$:}
Next, consider update $t > t_u$. $\mathcal D_u$ ignores updates to $U$ so we
assume that update $t$ is of the form $\FuncSty{Insert-Edge}(e^{(t)})$.
Assume that $\mathcal D_u$ has obtained $T^{(t-1)}$ and $d^{(t-1)}$ in the
previous update. To obtain $T^{(t)}$ and $d^{(t)}$, $\mathcal D_u$ regards
$e^{(t)}$ as two oppositely directed edges. Note that for at most one of
these edges $(v_1^{(t)},v_2^{(t)})$, we have $d_{T^{(t-1)}}(v_1^{(t)}) +
w(e^{(t)}) < d_{T^{(t-1)}}(v_2^{(t)})$. If no such edge exists, $\mathcal
D_u$ sets $T^{(t)} \leftarrow T^{(t-1)}$ and $d^{(t)} \leftarrow
d^{(t-1)}$, finishing the update. Otherwise, $\mathcal D_u$ applies a
variant of Dijkstra's algorithm. During initialization, this variant sets,
for each vertex $v\in V(T^{(t-1)})$, the starting estimate of $v$ to
$d_{T^{(t-1)}}(v)$ and sets its predecessor to be the parent of $v$ in
$T^{(t-1)}$; all other vertices implicitly have an estimate of $\infty$.
The priority queue is initially empty. In the last part of the
initialization step, the edge $(v_1^{(t)},v_2^{(t)})$ is relaxed. The rest
of the algorithm differs from the normal Dijkstra algorithm in the
following way:
\begin{enumerate}
\item a vertex $v$ is only added to the priority queue if a relax operation caused the estimate for $v$ to be strictly decreased to a value of at most $d^{(t-1)}$,
\item the algorithm stops when the priority queue is empty or as soon as the the total degree in $G^{(t)}$ of vertices belonging to the current tree found by the algorithm exceeds $m_i$.
\end{enumerate}

Let $\overline T_i^{(t)}(u)$ be the tree found by the Dijkstra variant. If the priority queue is empty at this point, $\mathcal D_u$ sets $T^{(t)} \leftarrow \overline T_i^{(t)}(u)$ and $d^{(t)} \leftarrow d^{(t-1)}$. Otherwise, $\mathcal D_u$ computes $T^{(t)}$ and $d^{(t)}$ in exactly the same manner as in the case above where $t = t_u$ and where the priority queue was not emptied; finally, $\mathcal D_u$ outputs $\overline T_i^{(t)}(u)$.

\subparagraph{Properties of $\mathcal D_u$:} We now show the two properties of $\mathcal D_u$ mentioned earlier. We repeat them here for convenience:
\begin{enumerate}
\item $T^{(t)}(u) = T_i^{(t)}(u)$ and $d^{(t)}(u) = d_i^{(t)}(u)$ for all $t\in\{t_u,\ldots,m\}$,
\item in each update $t\in\{t_u,\ldots,m\}$ where either $t > t_u$ and $d_i^{(t)}(u) < d_i^{(t-1)}(u)$ or $t = t_u$ and $d_i^{(t)}(u) < d_i$, $\mathcal D_u$ outputs a tree $\overline T_i^{(t)}(u)$ rooted at $u$ such that $r(\overline T_i^{(t)}(u))\le (1+\varepsilon)d_i^{(t)}(u)\le d_i$ and $\deg_{G^{(t)}}(\overline T_i^{(t)}(u)) > m_i$. In all other updates, no tree is output.
\end{enumerate}
The first property is shown by induction on $t\ge
t_u$. This is clear when $t = t_u$ so assume in the following that $t > t_u$,
that $T^{(t-1)} = T_i^{(t-1)}(u)$ and $d^{(t-1)} = d_i^{(t-1)}(u)$, and that
update $t$ is an operation $\FuncSty{Insert-Edge}(e^{(t)})$. The first property will follow if we can show that $T^{(t)} = T_i^{(t)}(u)$ and $d^{(t)} = d_i^{(t)}(u)$.

If the Dijkstra variant was not executed then no edge was relaxed which implies that $T^{(t)} = T^{(t-1)} = T_i^{(t-1)}(u) = T_i^{(t)}(u)$ and $d^{(t)} = d^{(t-1)} = d_i^{(t-1)}(u) = d_i^{(t)}(u)$, as desired. Otherwise, consider first the case where $d_i^{(t)}(u) = d_i^{(t-1)}(u)$. Then $d^{(t-1)} = d_i^{(t)}(u)$ so the priority queue of the Dijkstra variant must be empty at the end of update $t$. Combining this with the observation that any vertex $v$ whose distance from $u$ in $G^{(t)}$ is smaller than in $G^{(t-1)}$ must be on a $u$-to$-v$ path containing $e^{(t)}$, it follows that the Dijkstra variant computes $T^{(t)} = T_i^{(t)}(u)$ and $d^{(t)} = d^{(t-1)} = d_i^{(t)}(u)$, as desired. For the case where $d_i^{(t)}(u) < d_i^{(t-1)}(u)$, the priority queue of the Dijkstra variant is not emptied; it follows by definition of $T_i^{(t)}(u)$ and $d_i^{(t)}(u)$ that also in this case, $T^{(t)} = T_i^{(t)}(u)$ and $d^{(t)} = d_i^{(t)}(u)$.

To show that $\mathcal D_u$ satisfies the second of the properties above, consider an update $t \ge t_u$. Assume first that $t = t_u$. Then $\overline T_i^{(t)}(u)$ is output if and only if $d_i^{(t)}(u) < d_i$; this follows since $d^{(t)} = d_i^{(t)}(u)$, since $d^{(t)} = d_i$ when $\overline T_i^{(t)}$ is not output, and since $d^{(t)} < d_{\overline T_i^{(t)}(u)}(v_{\max})\le d_i$ when $\overline T_i^{(t)}$ is output. If $d_i^{(t)}(u) < d_i$, then Dijkstra's algorithm stopped without emptying its priority queue which implies that $\deg_{G^{(t)}}(\overline T_i^{(t)}(u)) > m_i$; furthermore, by the choice of $d^{(t)}$, $r(\overline T_i^{(t)}(u)) = d_{\overline T_i^{(t)}(u)}(v_{\max})\le (1+\varepsilon)d^{(t)} = (1+\varepsilon)d_i^{(t)}(u)$, as desired. The inequality $(1+\varepsilon)d_i^{(t)}(u)\le d_i$ holds since $d_i^{(t)}(u) < d_i$ implies that $d_i^{(t)}(u) \le d_i/(1+\varepsilon)$.

The case $t > t_u$ is quite similar. We may assume that this update inserts $e^{(t)}$ into $G^{(t)}$. If $d_i^{(t)}(u) = d_i^{(t-1)}(u)$ then as shown above, no tree is output in update $t$. Now, assume that $d_i^{(t)}(u) < d_i^{(t-1)}(u)$. Then the Dijkstra variant did not empty its priority queue so it outputs tree $\overline T_i^{(t)}(u)$. Clearly, $\deg_{G^{(t)}}(\overline T_i^{(t)}(u)) > m_i$ and since $d^{(t-1)} = d_i^{(t-1)}(u)$, the same argument as in the case where $t = t_u$ gives $r(\overline T_i^{(t)}(u))\le (1+\varepsilon)d_i^{(t)}(u)$. The inequality $(1+\varepsilon)d_i^{(t)}(u)\le d_i$ holds since $d_i^{(t)}(u) < d_i^{(t-1)}(u)$ implies that $d_i^{(t)}(u)\le d_i^{(t-1)}(u)/(1+\varepsilon)\le d_i/(1+\varepsilon)$. This shows the second of the two properties for $\mathcal D_u$ mentioned above.

\subparagraph{Bounding update time of $\mathcal D_u$:} We now bound the update
time for $\mathcal D_u$ where we ignore the cost of updates $t$ where $e^{(t)}$
is not incident to $T_i^{(t-1)}(u)$; when we use $\mathcal D_u$ in the final
data structure $\mathcal D$ below, $\mathcal D$ will ensure that
$\FuncSty{Insert-Edge}$ will only be applied to edges if they are incident to $T_i^{(t-1)}(u)$ and we show that this suffices to ensure the two properties of $\mathcal D_u$.

Consider an update $t_u$. Observe that our two Dijkstra variants (the one described in the case $t = t_u$ and the one described in the case $t > t_u$) are terminated as soon as the total degree in $G^{(t)}$ of vertices extracted from the priority queue exceeds $m_i$. Ignoring the cost of the initialization step of the second variant, it follows from a standard analysis of Dijkstra's algorithm that both variants run in time $O(m_i\log n)$. To bound the time for the initialization step of the second variant, note that the desired starting estimates and predecessor pointers are present in $T^{(t-1)}$ and this tree is available at the beginning of the update. Hence, the work done in the initialization step thus only involves relaxing a single edge $(v_1^{(t)},v_2^{(t)})$. With this implementation, the cost of the initialization step does not dominate the total cost of the update.

The number of updates $t > t_u$ for which $d_i^{(t)}(u) < d_i^{(t-1)}(u)$ is at most $\log_{1+\varepsilon} d_i = O_{\varepsilon}(i + \log d)$. As shown above, the time spent in each such update is $O(m_i\log n)$ which over all such updates is $O_{\varepsilon}(m_i\log n(i + \log d))$ time.

Now, consider a maximal range of updates $\{t_1,t_1+1\ldots,t_2\}\subseteq \{t_u,t_u+1,\ldots,t_{\max}\}$ where $d_i^{(t_2)}(u) = d_i^{(t_1)}(u)$ and consider an update $t\in\{t_1+1,t_1+2,\ldots,t_2\}$. Assuming that the Dijkstra variant is executed, it must empty its priority queue in this update. Let $V_1^{(t)} = V(T_i^{(t)}(u))\setminus V(T_i^{(t-1)}(u))$ and let $V_2^{(t)}$ be the set of vertices $v\in V(T_i^{(t)}(u))\cap V(T_i^{(t-1)}(u))$ such that $d_{T_i^{(t)}(u)}(v) < d_{T_i^{(t-1)}(u)}(v)$. Since the Dijkstra variant only adds a vertex of $T_i^{(t-1)}(u)$ to the priority queue if the distance estimate of the vertex is strictly decreased, we can charge the running time cost of the Dijkstra variant to $\deg_{G^{(t)}}(V_1^{(t)}\cup V_2^{(t)})\log n$. In the following, we bound $\deg_{G^{(t)}}(V_1^{(t)})\log n$ and $\deg_{G^{(t)}}(V_2^{(t)})\log n$ separately over all $t\in\{t_1+1,t_1+2,\ldots,t_2\}$ and all maximal ranges $\{t_1,t_1+1,\ldots,t_2\}$.

Let one such range $\{t_1,t_1+1,\ldots,t_2\}$ be given. Since $V(T_i^{(t-1)}(u))\subseteq V(T_i^{(t)}(u))$ for all $t\in\{t_1+1,t_1+2,\ldots,t_2\}$, we get
\[
      \sum_{t\in\{t_1+1,t_1+2,\ldots,t_2\}}\deg_{G^{(t)}}(V_1^{(t)})\log n
  \le \deg_{G^{(t_2)}}(T_i^{(t_2)}(u))\log n\le m_i\log n.
\]
for each range $\{t_1,t_1+1,\ldots,t_2\}$ which over all ranges $\{t_1,t_1+1,\ldots,t_2\}$ is $O_{\varepsilon}(m_i\log n(i + \log d))$.

Next, we bound the sum of $\deg_{G^{(t)}}(V_2^{(t)})\log n$ over all $t\in \{t_1+1,t_1+2,\ldots,t_2\}$ and all ranges $\{t_1,t_1+1,\ldots,t_2\}$. Let $\{t_1,t_1+1,\ldots,t_2\}$ and $v\in V$ be given. For each $t\in\{t_1+1,t_1+2,\ldots,t_2\}$ where $v\in V_2^{(t)}$, we have $d_{T_i^{(t)}(u)}(v) < d_{T_i^{(t-1)}(u)}(v)$. Since edge weights are integers, the sum of degrees of $v$ over all such $t$ is at most $(d_{T_i^{(t_1+1)}(u)}(v) - d_{T_i^{(t_2)}(u)}(v))\deg_{G^{(t_2)}}(v)\log n$. Observe that $v\in V_2^{(t)}$ for some $t\in\{t_1+1,t_1+2,\ldots,t_2\}$ implies that $v\in V(T_i^{(t_2)}(u))$. Summing over all $v$ thus gives
\begin{align*}
  \sum_{t\in\{t_1+1,t_1+2,\ldots,t_2\}}\deg_{G^{(t)}}(V_2^{(t)})\log n & \le\sum_{v\in V(T_i^{(t_2)}(u))}(d_{T_i^{(t_1+1)}(u)}(v) - d_{T_i^{(t_2)}(u)}(v))\deg_{G^{(t_2)}}(v)\log n\\
  & \le d_i^{(t_1)}(u)m_i\log n.
\end{align*}
Note that since $d_i^{(t_2+1)}(u) < d_i^{(t_1)}(u)$, we in fact have $d_i^{(t_2+1)}(u) \le d_i^{(t_1)}(u)/(1+\varepsilon)$. Summing over all ranges $\{t_1,t_1+1,\ldots,t_2\}$ thus gives a geometric sum of value $O_{\varepsilon}(m_id_i\log n)$.

We conclude that $\mathcal D_u$ requires time $O_{\varepsilon}(m_i\log n(d_i + i + \log d)) = O_{\varepsilon}(m_id_i\log n)$ over all updates $t$ consisting of the insertion of an edge $e^{(t)}$ which is incident to $T_i^{(t-1)}(u)$.

\subparagraph{The final data structure:} We have shown that $\mathcal D_u$ satisfies the two properties stated at the beginning of the proof and that the total update time over updates $t$ for which $e^{(t)}$ is incident to $T_i^{(t-1)}(u)$ is $O_{\varepsilon}(m_id_i\log n)$. We are now ready to give a data structure $\mathcal D$ satisfying the lemma.

Initially, $\mathcal D$ sets $U^{(0)} = \emptyset$. If update $t$ is an
operation of the form $\FuncSty{Insert-Vertex}(u)$, $\mathcal D$ initializes a new structure $\mathcal D_u$. For each $v\in V$, $\mathcal D$ keeps the set $U^{(t)}(v)$ of those vertices $u\in U^{(t)}$ for which $v$ belongs to the tree $T_i^{(t)}(u)$ maintained by $\mathcal D_u$. This set is implemented with a red-black tree keyed by vertex labels. We extend each data structure $\mathcal D_u$ so that when a vertex $v$ joins (resp.~leaves) $T_i^{(t)}(u)$, $u$ joins (resp.~leaves) $U^{(t)}(v)$. This can be done without affecting the update time bound obtained for $\mathcal D_u$ above.

If an update $t$ is of the form $\FuncSty{Insert-Edge}(e^{(t)})$ where $e^{(t)} = (v_1^{(t)},v_2^{(t)})$, $\mathcal D$ identifies the set $U^{(t-1)}(v_1^{(t)})\cup U^{(t-1)}(v_2^{(t)})$ and updates $\mathcal D_u$ with the insertion of $e^{(t)}$ for each $u$ in this set. This suffices to correctly maintain all data structures $\mathcal D_u$ since for each $u\in U^{(t-1)}\setminus U^{(t-1)}(v_1^{(t)})\cup U^{(t-1)}(v_2^{(t)})$, we have $T_i^{(t)}(u) = T_i^{(t-1)}(u)$ and $d_i^{(t)}(u) = d_i^{(t-1)}(u)$, implying that $\mathcal D_u$ need not be updated. Hence, $\mathcal D$ handles updates in the way stated in the lemma and has total update time $O(m) + O_{\varepsilon}(|U^{(t_{\max})}|m_id_i\log n)$, as desired.

Answering a query for values $d_{T_i(u)}(v)$ or $d_i(u)$ or for whether $v\in V(T_i(u))$ is done by querying $\mathcal D_u$. Since $\mathcal D_u$ can be identified in $O(1)$ time, the query time bounds for $\mathcal D$ match those for $\mathcal D_u$. This completes the proof.
\end{proof}

\subsection{The distance oracle}\label{sec:oracle}
We are now ready to present our incremental distance oracle. Pseudocode for the
preprocessing step is done with the procedure $\FuncSty{Initialize}(V,k)$ in
\Cref{alg:initialize}. Inserting an edge $(v_1,v_2)$ with integer
weight $w > 0$ is done with the procedure
$\FuncSty{Insert}(v_1,v_2,w(v_1,v_2))$ in \Cref{alg:insert} and a
query for the approximate distance
between two vertices $u$ and $v$ is done with the procedure
$\FuncSty{Query}(u,v)$ in \Cref{alg:query}.

The high level intuition of our construction is that we maintain increasingly smaller subsets of vertex sets denoted $A_i$, where $A_0$ is the entire vertex set $V$; see \Cref{fig:oracle}. For each vertex $v$, we grow a ball up to a threshold size, and we let the centers of a maximal set of disjoint balls be promoted to the next level $A_{i+1}$, where the same procedure happens. An implication is that $A_{i+1}$ is much smaller than $A_{i}$ and we can thus afford to grow larger balls as $i$ grows, i.e. we let the ball threshold size grow with $i$.

In order to bound stretch, we need balls to have roughly the same radius. To ensure this, we partition balls centered at vertices of $A_i$ into classes such that balls in the $j$th class all have radius within a constant factor of $(1+\varepsilon)^j$. For each class, we keep a maximal set of disjoint balls as described above and $A_{i+1}$ is the union of centers of these balls over all classes. In class $j$, each vertex $v$, which is the center of a ball not belonging to this maximal set points to a representative vertex $n_{i,j}(v)$. This representative vertex is picked in the intersection with another ball in class $j$ centered at a vertex of $A_{i+1}$. Every vertex $w$ in this other ball has a pointer $r_{i,j}(w)$ to the center. These pointers are used as navigation in the distance query algorithm when identifying a vertex $u_{i+1}\in A_{i+1}$ from a vertex $u_i\in A_i$; see \Cref{fig:oracle}. The fact that the two balls centered at $u_i$ resp.~$u_{i+1}$ have roughly the same radius is important to ensure that the stretch only grows by at most a constant factor in each iteration of the query algorithm.

\begin{algorithm}
    \caption{\FuncSty{Initialize}}
    \label{alg:initialize}
    \DontPrintSemicolon
    \SetKwInOut{Input}{input}\SetKwInOut{Output}{output}
    \Input{$V,k$}
    \BlankLine
    $A_0\gets V$\;
    Initialize $\mathcal{D}_0$ as an instance of the data structure of
    \Cref{Lem:MaintainIncTrees}\;
    \For{$i=1\to k-1$}{
        $A_i\gets\emptyset$\;
        Initialize $\mathcal{D}_i$ as an instance of the data structure of
        \Cref{Lem:MaintainIncTrees}\;
        \For{$j=0\to\log_{1+\eps}d_i$}{
            $W_{i,j}\gets\emptyset$\;
            Associate with each $v\in V$ uninitialized variables $n_{i,j}(v)$
            and $r_{i,j}(v)$\;
        }
    }
    \For{$u\in V$}{
        $\mathcal{D}_0.\FuncSty{Insert-Vertex}(u)$\;
    }
\end{algorithm}

\begin{algorithm}
    \caption{\FuncSty{Insert}}
    \label{alg:insert}
    \DontPrintSemicolon
    \SetKwInOut{Input}{input}\SetKwInOut{Output}{output}
    \Input{$v_1,v_2,w(v_1,v_2)$}
    \BlankLine
    Add $(v_1,v_2)$ to $E$ with weight $w(v_1,v_2)$\;
    $U_0,\ldots, U_{k-1}\gets \emptyset$\;
    \For{$i=0\to k-1$}{
        $\mathcal{T}_i\gets \mathcal{D}_i.\FuncSty{Insert-Edge}(v_1,v_2)$\;
        $A_i\gets A_i\cup U_i$\;
        \For{$u\in U_i$}{
            $\mathcal{T}_i\gets \mathcal{T}_i\cup
            \mathcal{D}_i.\FuncSty{Insert-Vertex}(u)$\;
        }
        \For{$\overline{T}_i(u)\in \mathcal{T}_i$}{
            $j\gets 1 + \log_{1+\eps}d_i(u)$\;
            $T_{i,j}(u)\gets \overline{T}_i(u)$\;
            \If{$V(T_{i,j}(u))\cap W_{i,j} = \emptyset$}{
                $W_{i,j}\gets W_{i,j}\cup V(T_{i,j}(u))$\;
                \For{$v\in V(T_{i,j}(u))$}{
                    $r_{i,j}(v)\gets u$\;
                }
                $U_{i+1}\gets U_{i+1}\cup \{u\}$\;
            }
            \Else{
                $n_{i,j}(u)\gets$ an arbitrary vertex of $W_{i,j}\cap
                V(T_{i,j}(u))$\;
            }
        }
    }
\end{algorithm}

\begin{algorithm}
    \caption{\FuncSty{Query}}
    \label{alg:query}
    \DontPrintSemicolon
    \SetKwInOut{Input}{input}\SetKwInOut{Output}{output}
    \Input{$u,v$}
    \Output{Estimated distance between $u$ and $v$}
    \BlankLine
    $u_0\gets u$\;
    $s_0\gets 0$\;
    \For{$i = 0\to k-1$}{
        \If{$v\in V(T_i(u_i))$}{
            \Return $s_i + d_{T_i(u_i)}(v)$\;
        }
        \If{$d_i(u_i) = d_i$}{
            \Return $\infty$\;
        }
        \If{$u_i\in A_{i+1}$}{
            $u_{i+1}\gets u_i$\;
            $s_{i+1}\gets s_i$\;
        }
        \Else{
            $j\gets 1 + \log_{1+\eps}d_i(u_i)$\;
            $w\gets n_{i,j}(u_i)$\;
            $u_{i+1}\gets r_{i,j}(w)$\;
            $s_{i+1}\gets s_i + d_{T_{i,j}(u_i)}(w) + d_{T_{i,j}(u_{i+1})}(w)$\;
        }
    }
\end{algorithm}
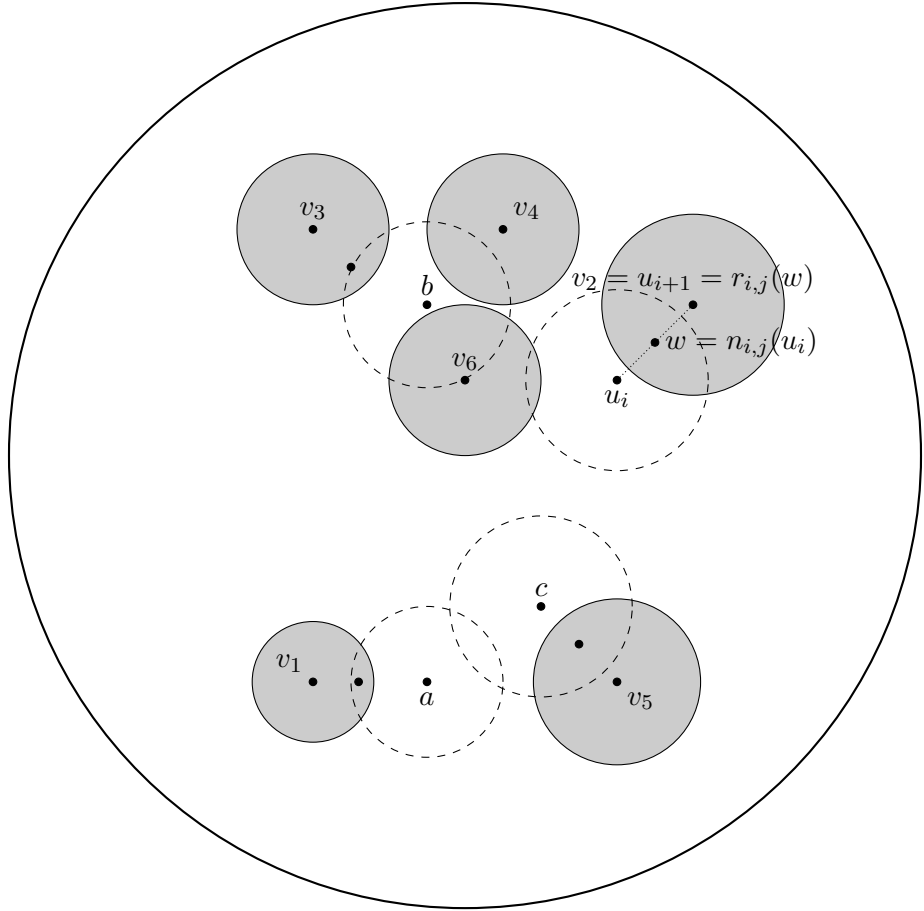
\begin{figure}
\centering
\begin{tikzpicture}
\definecolor{lyl}{gray}{0.8}

    \draw[thick] (0,0) circle (6cm);

     \draw[fill=lyl] (0,1) circle (1cm)node [above] {$v_6$};
      \draw[fill=lyl] (2,-3) circle (1.1cm)node [below right] {$v_5$};
        \draw[fill=lyl] (0.5,3) circle (1cm)node [above right] {$v_4$};
      \draw[fill=lyl] (-2,3) circle (1cm)node [above] {$v_3$};
           \draw[fill=lyl] (3,2) circle (1.2cm)node [above] {$v_2 = u_{i+1} = r_{i,j}(w)$};
      \draw[fill=lyl] (-2,-3) circle (0.8cm) node [above left] {$v_1$};

      \draw[fill=black] (0,1) circle (0.05cm);
       \draw[fill=black] (2,-3) circle (0.05cm);
        \draw[fill=black] (0.5,3) circle (0.05cm);
         \draw[fill=black] (-2,3) circle (0.05cm);
          \draw[fill=black] (3,2) circle (0.05cm);
           \draw[fill=black] (-2,-3) circle (0.05cm);

     \draw[dashed] (-0.5,-3) circle (1cm)node [below] {$a$};  
      \draw[dashed] (-0.5,2) circle (1.1cm)node [above] {$b$};
        \draw[dashed] (1,-2) circle (1.2cm)node [above] {$c$}; 
          \draw[dashed] (2,1) circle (1.2cm)node [below] {$u_i$};
                \draw[fill=black] (-0.5,-3) circle (0.05cm);
         \draw[fill=black] (-0.5,2) circle (0.05cm);
          \draw[fill=black] (1,-2) circle (0.05cm);
           \draw[fill=black] (2,1) circle (0.05cm);

            \draw[fill=black] (-1.5,2.5) circle (0.05cm);

            \draw[fill=black] (2.5,1.5) circle (0.05cm) node [right] {$w = n_{i,j}(u_i)$} ;
            \draw [densely dotted] (2.5,1.5)-- (2,1);
            \draw [densely dotted] (2.5,1.5) -- (3,2);

            \draw[fill=black] (1.5,-2.5) circle (0.05cm);

            \draw[fill=black] (-1.4,-3) circle (0.05cm) ;

\end{tikzpicture}
\caption{A high-level overview of the distance oracle construction. The vertices $v_1, \ldots, v_5$ are centers of disjoint (grey) balls and are thus promoted to $A_{i+1}$, while $W_{i,j}$ is the union over the vertices of the disjoint balls. The grey balls have radius roughly $(1+\eps)^j$, and we keep a set of balls for every $j \in \{1, \ldots,\log_{1+\eps}d_i \}$.  A query from a center $u_i$ of a non-disjoint ball has an assosicated vertex $w$ in an intersecting grey ball, which in turn has a pointer to the ball center $u_{i+1} = r_{i,j}(w)$.  }\label{fig:oracle}
\end{figure}

The following lemmas are crucial when we bound update and query time as well as
stretch. For $i = 0,\ldots,k-1$ and $j = 1,\ldots,\log_{1+\varepsilon}d_i$, let
$\mathcal T_{i,j}$ be the dynamic set of trees $T_{i,j}(u)$ obtained so far for
which the test in line $11$ of $\FuncSty{Insert}$ succeeded. Note that for any
$j$ in line $9$ of $\FuncSty{Insert}$, $1\le j =
\log_{1+\varepsilon}((1+\varepsilon)d_i(u))\le\log_{1+\varepsilon} d_i$ by
\Cref{Lem:MaintainIncTrees} so $W_{i,j}$ is well-defined and initialized
to $\emptyset$ in procedure $\FuncSty{Initialize}$.
\begin{lemma}\label{Lem:DisjointBalls}
After each update, the following holds. For any $i = 0,\ldots,k-2$ and any $j = 1,\ldots,\log_{1+\varepsilon} d_i$, $W_{i,j}$ is the disjoint union of $V(T_{i,j}(u))$ over all $T_{i,j}(u)\in\mathcal T_{i,j}$. Furthermore, $A_{i+1} = \bigcup_j\{u | T_{i,j}(u)\in\mathcal T_{i,j}\}$.
\end{lemma}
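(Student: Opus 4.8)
The plan is to establish both identities simultaneously by induction on the number of completed updates, reading them off the pseudocode of $\FuncSty{Initialize}$ and $\FuncSty{Insert}$; since a query operation modifies none of $W_{i,j}$, $\mathcal T_{i,j}$ or $A_{i+1}$, only edge insertions need to be treated. For the base case, right after $\FuncSty{Initialize}$ we have $W_{i,j}=\emptyset$, $A_{i+1}=\emptyset$, and $\mathcal T_{i,j}=\emptyset$ for every relevant $i\in\{0,\dots,k-2\}$ and $j$ (no tree can have passed the test in line~11 of $\FuncSty{Insert}$ yet), so both identities hold vacuously. I would also remark that the $\mathcal D_0.\FuncSty{Insert-Vertex}$ calls at the end of $\FuncSty{Initialize}$ output no trees, since at that moment $E=\emptyset$ and hence $d_0(u)=d_0$ for every $u$, so nothing relevant happens before the first edge insertion.

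For the inductive step I would assume both identities after update $t-1$ and trace the $t$-th call to $\FuncSty{Insert}$, which processes levels $i=0,\dots,k-1$ in order. Fixing $i\le k-2$, the only place where $W_{i,j}$, $\mathcal T_{i,j}$, or the level-$i$ contribution to $A_{i+1}$ (the set $U_{i+1}$, folded into $A_{i+1}$ when iteration $i+1$ executes $A_{i+1}\gets A_{i+1}\cup U_{i+1}$) can change is the inner loop over the trees $\overline T_i(u)\in\mathcal T_i$. Each such tree determines a class $j=1+\log_{1+\varepsilon}d_i(u)\in\{1,\dots,\log_{1+\varepsilon}d_i\}$ (this range by the remark preceding the lemma), the variable $T_{i,j}(u)$ is set to $\overline T_i(u)$, and then either the test $V(\overline T_i(u))\cap W_{i,j}=\emptyset$ succeeds — in which case $V(\overline T_i(u))$ is appended to $W_{i,j}$, the tree $T_{i,j}(u)$ joins $\mathcal T_{i,j}$ by definition, and $u$ is added to $U_{i+1}$ — or the test fails and nothing relevant changes on account of $u$. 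In the successful case the appended vertex set is disjoint from the previous $W_{i,j}$, which by the induction hypothesis is the disjoint union of $V(T_{i,j}(u'))$ over the trees already in $\mathcal T_{i,j}$; hence the updated $W_{i,j}$ is again precisely that disjoint union over the updated $\mathcal T_{i,j}$, and since $u$ enters $U_{i+1}$ exactly when $T_{i,j}(u)$ enters $\mathcal T_{i,j}$, the second identity is preserved as well.

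The one subtle point — and essentially the only obstacle — is that the assignment $T_{i,j}(u)\gets\overline T_i(u)$ must never overwrite a tree already in $\mathcal T_{i,j}$, as otherwise the bookkeeping in the previous paragraph breaks. I would rule this out by noting that $\mathcal D_i$ outputs a tree for $u$ only when $d_i(u)$ strictly decreases (or, in the update inserting $u$ into $\mathcal D_i$, when $d_i(u)<d_i$); since every value $d_i(u)$ is a power of $(1+\varepsilon)$, the class index $1+\log_{1+\varepsilon}d_i(u)$ is integer-valued and strictly decreasing along the subsequence of updates that output a tree for $u$. Consequently $\mathcal D_i$ never outputs two trees for the same $u$ in the same class $j$, so once $T_{i,j}(u)$ is assigned it is never reassigned; $\mathcal T_{i,j}$ therefore only grows, its member trees have pairwise disjoint vertex sets, and any vertex placed in $A_{i+1}$ (together with the tree witnessing its promotion joining $\mathcal T_{i,j}$) stays in $A_{i+1}$. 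Combining this stability/monotonicity observation with the inductive hypothesis gives both identities after update $t$ and closes the induction; everything else is a direct transcription of the two procedures.
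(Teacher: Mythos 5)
Your proof is correct and follows essentially the same route as the paper: the paper's argument is a terse direct reading of the pseudocode (line 12 is the only place $W_{i,j}$ changes and the line-11 test guarantees disjointness; line 5 of iteration $i+1$ is the only place $A_{i+1}$ changes), which you have simply formalized as an induction over updates. Your added observation that the class index $1+\log_{1+\varepsilon}d_i(u)$ strictly decreases along the outputs for a fixed $u$, so no $T_{i,j}(u)$ already in $\mathcal T_{i,j}$ is ever reassigned, is a refinement the paper leaves implicit and is consistent with its framework (where each $u$ has a single insertion time $t_u$ into $\mathcal D_i$).
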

\begin{proof}
    For every $u$ added to $U_{i+1}$ in procedure
    $\FuncSty{Insert}(v_1,v_2,w(v_1,v_2))$, $V(T_{i,j}(u))\cap W_{i,j} =
    \emptyset$ just before the update in line $12$ and line $12$ is the only
    place where $W_{i,j}$ is updated. All vertices of $U_{i+1}$ are added to
    $A_{i+1}$ in line $5$ of iteration $i+1$ and this is the only place where
    $A_{i+1}$ is updated.
\end{proof}
\begin{lemma}\label{Lem:ASetSizeBound}
After each update, $|A_i| = O_{\varepsilon}((m^{1-i/k})(i+\log d))$ for $i = 0,\ldots,k-1$.
\end{lemma}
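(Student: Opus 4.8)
The plan is to bound $|A_i|$ for each $i\ge 1$ directly, by combining the disjointness of the balls within a fixed radius class with a global budget on the sum of vertex degrees; the case $i=0$ is immediate since $A_0=V$ and $n\le m$.

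First I would recall from Lemma~\ref{Lem:DisjointBalls} that after any update $A_i=\bigcup_{j=1}^{\log_{1+\varepsilon}d_{i-1}}\{u\mid T_{i-1,j}(u)\in\mathcal T_{i-1,j}\}$, and that for each fixed class $j$ the vertex sets $V(T_{i-1,j}(u))$ over the centers $u$ with $T_{i-1,j}(u)\in\mathcal T_{i-1,j}$ are pairwise disjoint. Next I would use the guarantee of Lemma~\ref{Lem:MaintainIncTrees} that every tree $\overline T_{i-1}^{(t)}(u)$ it outputs — which is exactly what is stored as $T_{i-1,j}(u)$ in line~10 of $\FuncSty{Insert}$ — satisfies $\deg_{G^{(t)}}(\overline T_{i-1}^{(t)}(u))>m_{i-1}$ at the time $t$ it is produced. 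Since edge insertions only increase vertex degrees, the vertex set $V(T_{i-1,j}(u))$ still has total degree greater than $m_{i-1}$ in the current graph. As $\sum_{v\in V}\deg_{G^{(t)}}(v)=2|E^{(t)}|\le 2m$ at every point of the update sequence, a family of pairwise disjoint vertex sets each of total degree more than $m_{i-1}$ can have at most $2m/m_{i-1}$ members. Hence $|\{u\mid T_{i-1,j}(u)\in\mathcal T_{i-1,j}\}|<2m/m_{i-1}$ for every class $j$.

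Summing this over the $\log_{1+\varepsilon}d_{i-1}$ classes and recalling $m_{i-1}=2m^{i/k}$ gives $|A_i|<\log_{1+\varepsilon}(d_{i-1})\cdot m^{1-i/k}$. It then remains to observe that $\log_{1+\varepsilon}d_{i-1}=O_{\varepsilon}(i+\log d)$: since $d_{i-1}$ is the least power of $(1+\varepsilon)$ that is at least $(3+2\varepsilon)^{i-1}d$, we have $d_{i-1}\le (1+\varepsilon)(3+2\varepsilon)^{i-1}d$, so $\log_{1+\varepsilon}d_{i-1}\le 1+(i-1)\log_{1+\varepsilon}(3+2\varepsilon)+\log_{1+\varepsilon}d$, and the factor $1/\ln(1+\varepsilon)=O(1/\varepsilon)$ coming from the change of base is absorbed into the $O_{\varepsilon}$ notation. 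Combining, $|A_i|=O_{\varepsilon}(m^{1-i/k}(i+\log d))$ for $i\ge 1$, while for $i=0$ the bound holds trivially because $|A_0|=n\le m$.

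I expect no serious obstacle; the one point requiring care is the monotonicity step — that the degree lower bound $>m_{i-1}$ guaranteed at the moment a ball is frozen is preserved under all later edge insertions, so that the global degree budget $2m$ may legitimately be applied at the (later) query time rather than at the freeze time. A secondary point is merely to note that the count is a union bound over the classes $j$ (a vertex may be a ball center in several classes), which still yields a valid upper bound on $|A_i|$.
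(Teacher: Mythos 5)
Your proof is correct and takes essentially the same route as the paper's: it combines the disjointness of the ball vertex sets within each radius class (Lemma~\ref{Lem:DisjointBalls}) with the degree lower bound $>m_{i-1}$ guaranteed by Lemma~\ref{Lem:MaintainIncTrees} and the global degree budget $2m$, then sums over the $O_{\varepsilon}(i+\log d)$ classes. Your explicit remarks on degree monotonicity after the freeze time and on the union over classes are details the paper leaves implicit, but the argument is otherwise identical.
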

\begin{proof}
The lemma is clear for $i = 0$ since $|A_0| = n\le m$. Now, let $i\in\{1,\ldots,k-2\}$ be given. We will bound $|A_{i+1}|$. Consider any $j\in\{1,\ldots,\log_{1+\varepsilon}d_i\}$. Since the total degree of vertices in $G$ is at most $2m$ and since the sets $V(T_{i,j}(u))$ are pairwise disjoint for all $T_{i,j}(u)\in\mathcal T_{i,j}$ by \Cref{Lem:DisjointBalls}, it follows from \Cref{Lem:MaintainIncTrees} that the number of roots of these trees is less than $2m/m_i$. \Cref{Lem:DisjointBalls} then implies that $|A_{i+1}| = O((m/m_i)\log_{1+\varepsilon}d_i) = O_{\varepsilon}((m^{1 - (i+1)/k})(i+\log d))$. This shows the lemma. 
\end{proof}

\begin{lemma}\label{Lem:Detour}
After each update, the following holds. Let $i\in\{0,\ldots,k-2\}$ and $u\in A_i\setminus A_{i+1}$ be given and let $j = 1 + \log_{1+\varepsilon}d_i(u)$ and $w = n_{i,j}(u)$. If $d_i(u) < d_i$ then $w\in V(T_{i,j}(u))\cap V(T_{i,j}(r_{i,j}(w)))$ and $r_{i,j}(w)\in A_{i+1}$.
\end{lemma}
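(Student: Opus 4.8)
The plan is to pin down the single update $t'$ in which the objects $T_{i,j}(u)$ and $n_{i,j}(u)$ appearing in the statement were written, and then use monotonicity of the relevant dynamic sets to conclude that nothing has changed since. Throughout, ``the present'' refers to the point in time mentioned in the lemma, and unadorned symbols denote present values.

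First I would record that for each fixed radius $r$ the value $\deg_{G^{(t)}}(B^{(t)}(u,r))$ is non-decreasing in $t$, so the threshold $d_i^{(t)}(u)$ of Lemma~\ref{Lem:MaintainIncTrees} is non-increasing in $t$. By Lemma~\ref{Lem:MaintainIncTrees}, $\mathcal D_i$ emits a tree $\overline T_i(u)$ precisely in those updates $t$ where $t > t_u$ and $d_i^{(t)}(u) < d_i^{(t-1)}(u)$, or $t = t_u$ and $d_i^{(t_u)}(u) < d_i$. If it never emits one, then $d_i^{(t_u)}(u) = d_i$ and $d_i^{(\cdot)}(u)$ stays constant, forcing $d_i(u) = d_i$; so the hypothesis $d_i(u) < d_i$ guarantees at least one tree was emitted. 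Let $t'$ be the last update in which $\mathcal D_i$ emits a tree $\overline T_i(u)$. Since no tree is emitted between $t'$ and the present, $d_i^{(\cdot)}(u)$ is constant on that range, hence $d_i^{(t')}(u) = d_i(u)$ and the class index computed for $u$ by $\texttt{Insert}$ at update $t'$ is exactly $j = 1+\log_{1+\varepsilon}d_i(u)$.

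Next I would inspect the loop of $\texttt{Insert}$ over $\mathcal T_i$ during update $t'$. Processing $\overline T_i^{(t')}(u)$, the algorithm sets $T_{i,j}(u) \leftarrow \overline T_i^{(t')}(u)$ and then tests whether $V(T_{i,j}(u)) \cap W_{i,j} = \emptyset$. The crucial point is that $A_{i+1}$ never shrinks and $u \notin A_{i+1}$ at present, so $u$ was never added to $A_{i+1}$; hence the test must have \emph{failed} at $t'$, the else-branch executed, and $n_{i,j}(u)$ was set to some $w \in W_{i,j} \cap V(T_{i,j}(u))$ (with $W_{i,j}$ taken at that intermediate moment). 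Because $d_i^{(\cdot)}(u)$ is non-increasing, the index $1+\log_{1+\varepsilon}d_i^{(\cdot)}(u)$ equals $j$ only up to and including update $t'$ and strictly decreases afterward, so neither $T_{i,j}(u)$ nor $n_{i,j}(u)$ is rewritten after $t'$; the present $w = n_{i,j}(u)$ is therefore exactly the vertex chosen there, and in particular $w \in V(T_{i,j}(u))$, the first claimed membership.

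Finally, since the sets $W_{i,j}$ only grow, $w \in W_{i,j}$ from update $t'$ onward, so Lemma~\ref{Lem:DisjointBalls} furnishes a unique $u'$ with $T_{i,j}(u') \in \mathcal T_{i,j}$ and $w \in V(T_{i,j}(u'))$. When $u'$ entered $\mathcal T_{i,j}$ (i.e., its tree passed the disjointness test), $\texttt{Insert}$ put all of $V(T_{i,j}(u'))$ into $W_{i,j}$ and set $r_{i,j}(v) \leftarrow u'$ for every such $v$; any tree processed later that contains $w$ fails the disjointness test and never reaches that assignment, so $r_{i,j}(w) = u'$ is never overwritten, and by the same monotonicity reasoning $T_{i,j}(u')$ is frozen once $u'$ enters $\mathcal T_{i,j}$. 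Thus $w \in V(T_{i,j}(u')) = V(T_{i,j}(r_{i,j}(w)))$, and the second part of Lemma~\ref{Lem:DisjointBalls} gives $r_{i,j}(w) = u' \in A_{i+1}$. I expect the only real difficulty to be precisely this bookkeeping --- confirming that $T_{i,j}(u)$, $n_{i,j}(u)$, $r_{i,j}(w)$ and $T_{i,j}(u')$ are all effectively ``write-once'' for a fixed $j$, so that the present state faithfully reflects the decisions taken in update $t'$ --- rather than any substantive computation.
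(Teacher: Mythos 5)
Your proposal is correct and follows essentially the same route as the paper's proof: locate the last update in which $\mathcal D_i$ output a tree for $u$ (which must exist since $d_i(u)<d_i$), note that $d_i(u)$ and hence $T_{i,j}(u)$ are unchanged since then, use $u\notin A_{i+1}$ to force the else-branch so $w=n_{i,j}(u)\in V(T_{i,j}(u))\cap W_{i,j}$, and then trace $r_{i,j}(w)$ via the tree that placed $w$ in $W_{i,j}$ (Lemma~\ref{Lem:DisjointBalls}) to conclude $w\in V(T_{i,j}(r_{i,j}(w)))$ and $r_{i,j}(w)\in A_{i+1}$. The only blemish is the remark that the class index ``strictly decreases afterward'' --- after the last emission $d_i(u)$ in fact stays constant, as you yourself noted a sentence earlier --- but this is harmless since no trees are output for $u$ after $t'$, so no rewrites can occur regardless.
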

\begin{proof}
By \Cref{Lem:MaintainIncTrees}, since $d_i(u) < d_i$, there must have been some update to $\mathcal D_i$ that output a tree $\overline T_i(u)$; consider the last such tree. Then $d_i(u)$ has not changed since then and so $T_{i,j}(u)$ must be that tree. Since $u\notin A_{i+1}$, we must have $V(T_{i,j}(u))\cap W_{i,j}\ne\emptyset$ so $w\in V(T_{i,j}(u))\cap W_{i,j}$. Since $w\in W_{i,j}$, we have $w\in V(T_{i,j}(r_{i,j}(w))$. At some point, $r_{i,j}(w)$ was added to $U_{i+1}$ and hence to $A_{i+1}$. Since vertices are never removed from $A_{i+1}$, the lemma follows.
\end{proof}

\subsection{Bounding time and stretch}
After replacing $\varepsilon$ with $\varepsilon/2$, the following lemma gives the update time bound claimed in \Cref{Thm:DistOracle}.
\begin{lemma}
    A total of $O_{\varepsilon}(m^{1+1/k}(3+2\varepsilon)^{k-1}d(k+\log d)\log
    n)$ time is spent in all calls to procedure $\FuncSty{Insert}$.
\end{lemma}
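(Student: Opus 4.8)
The plan is to split the total running time of all calls to $\FuncSty{Insert}$ into (i) the work performed \emph{inside} the $k$ tree‑maintenance structures $\mathcal D_0,\dots,\mathcal D_{k-1}$ and (ii) the \emph{bookkeeping} done directly in $\FuncSty{Insert}$ — the loop over the output trees, the membership scans of $V(T_{i,j}(u))$ against the red‑black trees $W_{i,j}$, the pointer writes $r_{i,j}(\cdot)$ and $n_{i,j}(\cdot)$, and the updates to $A_{i+1}$ — to bound each level $i$ separately, and to sum over $i$; since $d_i$ grows geometrically in $i$, the top level $i=k-1$ will dominate.

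For part (i), the dynamic vertex set handed to $\mathcal D_i$ is exactly $A_i$, so Lemma~\ref{Lem:MaintainIncTrees} bounds its total update time by $O(m)+O_{\varepsilon}(|A_i^{(m)}|\,m_id_i\log n)$. I would then substitute $m_i=2m^{(i+1)/k}$, the estimate $d_i=O_{\varepsilon}((3+2\varepsilon)^id)$ (immediate from the definition of $d_i$ as the smallest power of $1+\varepsilon$ above $(3+2\varepsilon)^id$), and the bound $|A_i^{(m)}|=O_{\varepsilon}(m^{1-i/k}(i+\log d))$ of Lemma~\ref{Lem:ASetSizeBound}; the exponents of $m$ add up to $1+1/k$, giving $O(m)+O_{\varepsilon}(m^{1+1/k}(i+\log d)(3+2\varepsilon)^id\log n)$ for level $i$. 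Summing over $i<k$ and using $\sum_{i<k}(i+\log d)(3+2\varepsilon)^i=O((k+\log d)(3+2\varepsilon)^{k-1})$ yields $O(mk)+O_{\varepsilon}(m^{1+1/k}(3+2\varepsilon)^{k-1}d(k+\log d)\log n)$, which is already of the claimed order, since $m\le m^{1+1/k}$ and $k\le(3+2\varepsilon)^{k-1}d(k+\log d)\log n$.

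For part (ii), one pass over an output tree $T_{i,j}(u)$ costs $O(|V(T_{i,j}(u))|\log n)$, as every vertex of the tree is touched by at most one membership query, one red‑black‑tree insertion, and one pointer write. The quantitative input I need is $|V(T_{i,j}(u))|=O(m_i)$: the Dijkstra variant producing $\overline T_i(u)$ in the proof of Lemma~\ref{Lem:MaintainIncTrees} halts once the total $G$‑degree of its tree exceeds $m_i$, and every non‑root vertex of a shortest path tree has positive $G$‑degree, so removing the last‑added vertex leaves a connected tree on at most $m_i$ vertices, whence $|V(T_{i,j}(u))|\le m_i+1$. The number of trees $\mathcal D_i$ ever outputs is at most one per vertex of $A_i$ at its insertion plus one per subsequent drop of its threshold $d_i(u)$ — a power of $1+\varepsilon$ bounded by $d_i$, so $O_{\varepsilon}(i+\log d)$ drops — i.e.\ $O_{\varepsilon}(m^{1-i/k}(i+\log d)^2)$ trees in total; multiplying by $O(m_i\log n)$ and summing over $i$ gives $O_{\varepsilon}(m^{1+1/k}k(k+\log d)^2\log n)$, which is absorbed into the part‑(i) bound because $(3+2\varepsilon)^{k-1}d=\Omega(k(k+\log d))$.

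The step I expect to be the main obstacle is exactly the vertex‑count bound $|V(T_{i,j}(u))|=O(m_i)$: Lemma~\ref{Lem:MaintainIncTrees} as stated only records the degree \emph{lower} bound $\deg_G(\overline T_i(u))>m_i$, so charging the membership scans of $\FuncSty{Insert}$ forces one back into the internals of that lemma's Dijkstra variant and requires using connectivity of the output tree to convert the degree budget into a size budget. Everything else is routine — the two elementary inequalities $k^2=O(3^{k-1})$ and $\log d\le d$ are all that is needed to see that the start‑up term $O(mk)$ and the bookkeeping term are dominated by the main term.
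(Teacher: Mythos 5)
Your proposal is correct, and its dominant computation coincides with the paper's: both bound the work inside $\mathcal D_0,\dots,\mathcal D_{k-1}$ by plugging $m_i=2m^{(i+1)/k}$, $d_i=O_\varepsilon((3+2\varepsilon)^i d)$ and the bound $|A_i|=O_\varepsilon(m^{1-i/k}(i+\log d))$ of Lemma~\ref{Lem:ASetSizeBound} into Lemma~\ref{Lem:MaintainIncTrees} and summing the resulting geometric series. Where you genuinely diverge is in the bookkeeping done directly in lines $9$--$17$ of $\FuncSty{Insert}$. The paper dispatches this in one sentence: each such execution costs $O(|V(\overline T_i(u))|)$, and since $\mathcal D_i$ itself spends at least that much time producing the tree it outputs, the bookkeeping is simply charged to (dominated by) the data structures' update time, with no need to count trees or bound their sizes. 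You instead bound it explicitly: at most $O_\varepsilon(|A_i|(i+\log d))$ output trees per level (one per vertex insertion plus one per drop of $d_i(u)$), each of size $O(m_i)$ because the Dijkstra variant halts once the tree's total degree exceeds $m_i$, and then you absorb the resulting $O_\varepsilon(m^{1+1/k}k(k+\log d)^2\log n)$ into the main term via $k(k+\log d)=O((3+2\varepsilon)^{k-1}d)$. Both arguments are valid; you correctly flag that the size bound $|V(\overline T_i(u))|=O(m_i)$ is not in the statement of Lemma~\ref{Lem:MaintainIncTrees} and must be extracted from its proof (degree budget plus every tree vertex having degree at least one), which is exactly the extra work the paper's charging argument avoids, at the price of being slightly less self-contained. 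Your more conservative $O(\log n)$ per vertex for the $W_{i,j}$ membership tests (versus the paper's implicit $O(1)$ per vertex) is harmless since it is absorbed anyway.
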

\begin{proof}
    It is easy to see that each execution of lines $9$ to $17$ in procedure
    $\FuncSty{Insert}$ can be implemented to run in time $O(|V(\overline
    T_i(u))|)$. Hence, the total time spent in all calls to $\FuncSty{Insert}$
    is dominated by the total update time of data structures $\mathcal D_i$,
    for $i = 0,\ldots,k-1$. Note that after each update, $A_i$ is the current
    set of vertices added to $\mathcal D_i$. Letting $A_i^{(t_{\max})}$ be the
    set $A_i$ after the last update, it follows from
    Lemmas~\ref{Lem:MaintainIncTrees} and~\ref{Lem:ASetSizeBound} that
    $\mathcal D_0,\ldots,\mathcal D_{k-1}$ have total update time
\begin{align*}
  O(km) + \sum_{i = 0}^{k-1}O_{\varepsilon}(|A_i^{(t_{\max})}|m_id_i\log n)
   & = \sum_{i = 0}^{k-1}O_{\varepsilon}(m^{1+1/k}d_i(i+\log d)\log n)\\
   & = \sum_{i = 0}^{k-1}O_{\varepsilon}(m^{1+1/k}(3+2\varepsilon)^id(i+\log d)\log n)\\
   & = O_{\varepsilon}(m^{1+1/k}(3+2\varepsilon)^{k-1}d(k+\log d)\log n),
\end{align*}
where the last bound follows from a geometric sums argument.
\end{proof}

Finally, we bound query time and stretch with the following lemma; replacing $\varepsilon$ with $\varepsilon/2$ gives the bounds of \Cref{Thm:DistOracle}.
\begin{lemma}
    Procedure $\FuncSty{Query}(u,v)$ outputs in $O(k\log n)$ time a value $\tilde d_G(u,v)$ such that $d_G(u,v)\le \tilde d_G(u,v)$ and such that if $d_G(u,v)\le d$ then $\tilde d_G(u,v)\le (2(3+2\varepsilon)^{k-1}-1)d_G(u,v)$.
\end{lemma}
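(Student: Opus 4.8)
The plan is to analyze the \FuncSty{Query}$(u,v)$ procedure by tracking the sequence of vertices $u = u_0, u_1, \ldots$ it generates, together with the running prefix-distance estimates $s_0, s_1, \ldots$, and to maintain two invariants simultaneously: a lower-bound invariant ensuring $s_i + d_{T_i(u_i)}(\cdot)$ never underestimates true distances, and an upper-bound (stretch) invariant controlling how far $u_i$ can be from $u$ and $v$. For the lower bound, I would argue by induction on $i$ that $s_i \ge d_G(u, u_i)$: the base case $s_0 = 0$ is trivial, the case $u_i \in A_{i+1}$ keeps $s_{i+1} = s_i$ and $u_{i+1} = u_i$, and in the other branch $s_{i+1} = s_i + d_{T_{i,j}(u_i)}(w) + d_{T_{i,j}(u_{i+1})}(w)$, where by \Cref{Lem:Detour} both trees genuinely contain $w$, so the added quantity is an upper bound on $d_G(u_i, u_{i+1})$ via the triangle inequality (tree distances dominate graph distances since the trees are subgraphs of shortest-path trees in $G$). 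Hence whenever the procedure returns $s_i + d_{T_i(u_i)}(v)$ in line $5$, this is at least $d_G(u,u_i) + d_G(u_i, v) \ge d_G(u,v)$; and when it returns $\infty$ the bound $\tilde d_G(u,v) \ge d_G(u,v)$ is vacuous. I also need to check that if the loop runs to completion without returning, the algorithm still outputs something $\ge d_G(u,v)$ — I expect the intended reading is that reaching $i = k-1$ forces $d_i(u_i) = d_i$ (since $A_{k-1} = \emptyset$, the ``$u_i \in A_{i+1}$'' branch is unavailable and the balls at the top level must be maximal), so line $7$ fires; this edge case is worth stating carefully.

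For the stretch bound, assume $d_G(u,v) \le d$. The key geometric claim is an inductive statement of the form: as long as the procedure has not yet returned a finite value, $d_G(u_i, v) \le c_i \cdot d_G(u,v)$ and $s_i \le (2c_i - 1)\cdot d_G(u,v) - d_G(u_i,v)$ (or some clean variant in that spirit) for a sequence $c_i$ growing geometrically with ratio $(3+2\varepsilon)$, with $c_0 = 1$. The crucial step is the recursion from $i$ to $i+1$: if $v \notin V(T_i(u_i))$, then $d_G(u_i, v) > d_i(u_i)$, i.e. the ball around $u_i$ has radius $d_i(u_i) < d_G(u_i,v)$ — but we also need $d_i(u_i) < d_i$ to avoid the $\infty$-return, so when we \emph{do} proceed, $v$ is ``moderately far'' and $d_i(u_i)$ is comparable to $d_G(u_i,v)$ up to the $(1+\varepsilon)$-rounding. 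By \Cref{Lem:Detour}, $w = n_{i,j}(u_i)$ lies in both $T_{i,j}(u_i)$ and $T_{i,j}(u_{i+1})$, and both trees have radius at most $(1+\varepsilon)d_i(u_i)$ and belong to class $j$, so $d_G(u_i, w), d_G(u_{i+1}, w) \le (1+\varepsilon) d_i(u_i)$; thus $d_G(u_i, u_{i+1}) \le 2(1+\varepsilon)d_i(u_i)$ and, since $d_i(u_i) \le d_G(u_i, v)$ essentially (modulo the fact that $d_i(u_i)$ is the largest feasible power of $(1+\varepsilon)$, so $(1+\varepsilon)d_i(u_i)$ may slightly exceed $d_G(u_i,v)$ when $v\notin V(T_i(u_i))$ — here one uses that $d_i(u_i) < d_G(u_i,v)$ because $v$ is outside the ball of radius $d_i(u_i)$), the detour added to $s$ is at most $2(1+\varepsilon)d_i(u_i)$ and the new distance-to-$v$ satisfies $d_G(u_{i+1}, v) \le d_G(u_i,v) + 2(1+\varepsilon)d_i(u_i) \le (3 + 2\varepsilon) d_G(u_i, v)$. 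Iterating $k-1$ times gives $c_{k-1} = (3+2\varepsilon)^{k-1}$, and summing the detours telescopes $s$ into a geometric series bounded by $(2(3+2\varepsilon)^{k-1} - 1)d_G(u,v)$; combined with the final $d_{T_i(u_i)}(v)$ term when the return finally triggers, this yields the claimed stretch.

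For the query time, each iteration of the for-loop in \FuncSty{Query} does $O(1)$ pointer lookups plus a constant number of queries to the data structures $\mathcal D_i$ of \Cref{Lem:MaintainIncTrees} — membership tests ``$v \in V(T_i(u_i))$'', distance lookups $d_{T_i(u_i)}(v)$, and reads of $d_i(u_i)$ — each costing $O(\log n)$ or $O(1)$ by that lemma. Since there are at most $k$ iterations, the total is $O(k\log n)$. The main obstacle I anticipate is the stretch analysis, specifically pinning down the exact constants in the interaction between the $(1+\varepsilon)$-rounding of ball radii, the ``same class $j$'' guarantee (which is what makes the two balls at $u_i$ and $u_{i+1}$ comparable in radius), and the requirement that $d_i(u_i) < d_i$ when we proceed — getting a clean inductive invariant that survives the rounding and produces exactly the factor $(3+2\varepsilon)^{k-1}$ rather than something off by lower-order $\varepsilon$-terms (which is presumably why the statement is phrased with $\varepsilon$ and then $\varepsilon$ is replaced by $\varepsilon/2$ to absorb slack). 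I would set up the invariant first on paper with the rounding suppressed to find the right form, then re-introduce the $(1+\varepsilon)$ factors and verify they only inflate the base of the geometric growth from $3$ to $3 + 2\varepsilon$.
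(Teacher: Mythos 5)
Your proposal follows essentially the same route as the paper's proof: a per-iteration invariant whose stretch factor grows geometrically with ratio $(3+2\varepsilon)$, the detour at level $i$ bounded by $2(1+\varepsilon)d_i(u_i) < 2(1+\varepsilon)d_G(u_i,v)$ using the same-class-$j$ radius guarantee together with \Cref{Lem:Detour}, the lower bound from the fact that $s_i$ is the weight of an actual $u$-to-$u_i$ walk, and $O(\log n)$ per iteration from \Cref{Lem:MaintainIncTrees}. Your two-part invariant ($d_G(u_i,v)\le c_i\,d_G(u,v)$ together with a bound on $s_i$, $c_i=(3+2\varepsilon)^i$) does close under the recursion; the paper keeps only $s_i\le((3+2\varepsilon)^i-1)d_G(u,v)$ and recovers $d_G(u_i,v)\le s_i+d_G(u,v)$ when needed, which is an inessential difference.

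Two loose ends need fixing. First, your justification for why the loop always returns is wrong as stated: $A_{k-1}$ is not empty in this construction (the nesting here is $A_{k-1}\subseteq\cdots\subseteq A_0=V$, and only $A_k$ is undefined). The actual reason is that $m_{k-1}=2m^{k/k}=2m$ is at least the total degree of the graph at all times, so $d_{k-1}(u)=d_{k-1}$ for every $u$, and hence at $i=k-1$ either the test in line 4 or the test in line 6 succeeds, so the later lines of that iteration (which would consult $A_k$) are never reached. Second, and more substantively, the stated stretch guarantee also requires showing that the $\infty$ output in line 7 can occur only when $d_G(u,v)>d$; your sketch analyzes only the iterations that proceed past both tests, so the bound $\tilde d_G(u,v)\le(2(3+2\varepsilon)^{k-1}-1)d_G(u,v)$ is not established on that branch. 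Fortunately this follows in one line from your own invariant: if line 6 fires after line 4 failed, then $d_G(u_i,v)>d_i(u_i)=d_i\ge(3+2\varepsilon)^i d$, while $d_G(u_i,v)\le(3+2\varepsilon)^i d_G(u,v)$, so $d_G(u,v)>d$. (This is exactly the argument the paper makes, phrased through $s_i$ instead of $d_G(u_i,v)$.) With those two points added, your outline matches the paper's proof.
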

\begin{proof}
    To bound the stretch, we will first show the following loop invariant: at
    the beginning of the $i$th execution of the for-loop of procedure
    $\FuncSty{Query}(u,v)$, $u_i\in A_i$ and $s_i\le ((3+2\varepsilon)^i -
    1)d_G(u,v)$. This is clear when $i = 0$ so assume that $i > 0$ and that the
    loop invariant holds at the beginning of the $i$th iteration. We need to
    show that if the beginning of the $(i+1)$th iteration is reached, the loop
    invariant also holds at this point.

    We may assume that the tests in lines $4$ and $6$ fail since otherwise, the
    $(i+1)$th iteration is never reached. If $u_i\in A_{i+1}$ then at the
    beginning of the $(i+1)$th iteration, we have $u_{i+1} = u_i\in A_{i+1}$
    and $s_{i+1} = s_i\le ((3+2\varepsilon)^i - 1)d_G(u,v) <
    ((3+2\varepsilon)^{i+1} - 1)d_G(u,v)$, as desired.

    Now, assume that $u_i\notin A_{i+1}$. Then $u_i\in A_i\setminus A_{i+1}$.
    Since the tests in lines $4$ and $6$ fail, we have $d_i(u) < d_G(u_i,v)$
    and $d_i(u) < d_i$. Since $s_i$ is the weight of some $u$-to-$u_i$-path in
    $G$, we have $d_G(u,u_i)\le s_i$. \Cref{Lem:Detour} now implies that
    $u_{i+1} = r_{i,j}\in A_{i+1}$ and
\begin{align*}
  s_{i+1} & \le s_i + 2(1+\varepsilon)d_i(u) < s_i + 2(1+\varepsilon)d_G(u_i,v)\le s_i + 2(1+\varepsilon)(d_G(u_i,u) + d_G(u,v))\\
         & \le (3+2\varepsilon)s_i + 2(1+\varepsilon)d_G(u,v)\le ((3+2\varepsilon)^{i+1} - 1)d_G(u,v),
\end{align*}
as desired.

    We can now show the stretch bounds. First observe that $m_{k-1} = 2m$.
    Since at any time, the total degree of vertices in $G$ is at most $2m$, it
    follows that $d_{k-1}(u) = d_{k-1}$ for all $u\in V$. Hence,
    $\FuncSty{Query}$ outputs a value in some iteration.

    The bound $d_G(u,v)\le\tilde d_G(u,v)$ is clear if $\tilde d_G(u,v) =
    \infty$ and it also holds if $\tilde d_G(u,v)$ is output in line $4$ since
    $s_i + d_{T_i(u)}(v)$ is the weight of some path in $G$.

Next, we give the upper bound on stretch. If the test in line $4$ succeeds in some iteration $i$, it follows from the the loop invariant that
\begin{align*}
  \tilde d_G(u,v) & = s_i + d_{T_i(u_i)}(v) \le 2s_i + d_G(u,v)\le 2((3+2\varepsilon)^i-1)d_G(u,v) + d_G(u,v)\\
   & \le (2(3+2\varepsilon)^{k-1}-1)d_G(u,v),
\end{align*}
as desired.

Now, assume that the test in line $4$ fails in some iteration $i$, i.e., assume
    that $v\notin V(T_i(u_i))$. Then $d_G(u_i,v) > d_i(u_i)$. If the test in
    line $6$ succeeds in iteration $i$ then $d_G(u_i,v) > d_i(u_i) = d_i\ge
    (3+2\varepsilon)^id$. The loop invariant and the observation above that
    $d_G(u,u_i)\le s_i$ imply that
\[
  d_G(u,v)\ge d_G(u_i,v) - d_G(u_i,u) > (3+2\varepsilon)^id - s_i \ge (3+2\varepsilon)^id + d_G(u,v) - (3+2\varepsilon)^id_G(u,v).
\]
Hence, $(3+2\varepsilon)^id_G(u,v) > (3+2\varepsilon)^id$ which gives $d_G(u,v)
    > d$. Since the upper bound on stretch is only required when $d_G(u,v)\le
    d$, outputting $\tilde d_G(u,v) = \infty$ in line $6$ is thus valid.

It remains to bound query time. Consider any iteration $i$. By
    \Cref{Lem:MaintainIncTrees}, performing the tests in lines $4$ and $6$
    and computing distances in line $15$ can be done in $O(\log n)$ time. Over
    all iterations, this is $O(k\log n)$.
\end{proof}

\section{A note on the lightness of other spanners}\label{sec:bad_lightness}
To motivate the problem of computing light spanners efficiently, we will in this section consider notable spanner constructions and show that they do \emph{not} provide light spanners. More precisely, we consider the three celebrated spanner constructions of Baswana and Sen~\cite{BaswanaS07},
Roditty and Zwick~\cite{RodittyZ11}, and Thorup and Zwick~\cite{ThorupZ05}, respectively, and we show that they do
not provide light spanners.

We first consider the algorithm from \cite{RodittyZ11}. This algorithm
creates a spanner by considering the edges in non-decreasing
order by weight similar to the greedy algorithm. It maintains an incremental
distance oracle of an \emph{unweighted} version of the spanner, and adds an
edge $(u,v)$, if there is no path between $u$ and $v$ of at most $2k-1$
\emph{edges}. Consider now running this algorithm on the graph of
\Cref{fig:bad_cycle} consisting of a cycle of $n = 2k+1$ edges where $2k$
of them have weight $1$ and the last has an arbitrarily large weight $W$. In
this case the algorithm of \cite{RodittyZ11} would add every edge to the
spanner, since $u$ and $v$ are only connected by a path of length $2k+1$ when
the edge $(u,v)$ is considered (disregarding the weight of $(u,v)$). This gives
us a lightness of $\Omega(W/n)$. Since $W$ can be arbitrarily large it
follows that no guarantee in terms of $k$ and $n$ can be given on the lightness.

A key part of the algorithm of \cite{BaswanaS07} is to arrange the vertices in $k$
layers $\emptyset=A_k\subseteq A_{k-1}\subseteq \cdots \subseteq A_0=V$ and clustering
the vertices of each layer. Each layer is formed by randomly sampling the clusters
of the previous layer with probability $n^{-1/k}$. Consider a vertex $w$ and
let $A_i$ be the first layer where $w$ is not sampled.
If $w$ is not adjacent to any cluster in $A_i$, then the smallest-weight edge
from $w$ to each of the clusters of $A_{i-1}$ is added to the spanner. Thus, in
the example of \Cref{fig:bad_cycle}, if neither $u$ nor its
neighbours are sampled, then the edge $(u,v)$ is added to the spanner. This
happens with probability at least $(1-n^{-1/k})^3$ and thus we cannot
even give a guarantee on the expected lightness of the spanner, as $W$ could be
very large compared to this probability.
\begin{figure}[htbp]
    \begin{center}
        \begin{tikzpicture}[scale=1.5,auto]
            \tikzstyle{vertex} = [circle,draw,fill=black!50,inner
            sep=0pt,minimum width=5pt]
            \node[vertex] (0) at (0:2) {};
            \node[vertex] (45) at (45:2) {};
            \node[vertex] (90) at (90:2) {};
            \node[vertex] (135) at (135:2) {};
            \node[vertex] (180) at (180:2) {};
            \node[vertex] (225) at (225:2) {};
            \node[vertex] (270) at (270:2) {};
            \node[vertex] (315) at (315:2) {};

            \node[left=1pt of 180] {\Large $u$};
            \node[left=1pt of 225] {\Large $v$};

            \draw (0) -- node {1} (45);
            \draw[dashed] (45) -- (90);
            \draw (90) -- node {1} (135);
            \draw (135) -- node {1} (180);
            \draw[line width=2pt] (180) -- node {$W$} (225);
            \draw (225) -- node {1} (270);
            \draw (270) -- node {1} (315);
            \draw (315) -- node {1} (0);
        \end{tikzpicture}
    \end{center}
    \caption{Example of a bad input graph to the algorithms of
    \cite{BaswanaS07} and \cite{RodittyZ11}: A cycle of $2k+1$ edges with one
    very heavy edge. This bad instance implies $\Omega(W)$ lightness for both algorithms.}
    \label{fig:bad_cycle}
\end{figure}
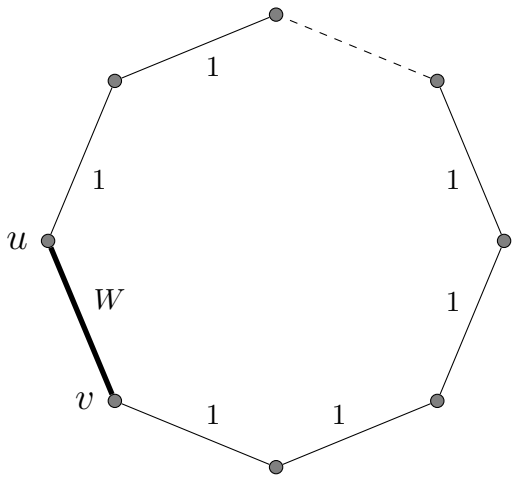

The spanner of \cite{ThorupZ05} also creates sets of vertices $\emptyset = A_k\subseteq
A_{k-1}\subseteq \cdots A_0 = V$, where each $A_i$ is formed by sampling the
vertices of $A_{i-1}$ independently with probability $n^{-1/k}$. For each vertex of
$v\in (A_i\setminus A_{i+1})$ they define the \emph{cluster} of $w$ to be the
set of all vertices in $V$ which are closer to $w$ than to any vertex in $A_{i+1}$.
The spanner they construct is simply the union of the shortest path trees of
each cluster with root in $w$. In particular, for the vertices $w\in A_{k-1}$ we
include the shortest path tree of the entire graph with root in $w$. We wish to
show that at least one of these shortest path trees have lightness $\Omega(n)$
with constant probability. To see this consider the graph of
\Cref{fig:tz_bad}. In this graph we have a complete graph $K$ on $n/2$
vertices with weights $1$ and a cycle $C$ on $n/2$ vertices with weights $1$. For each
vertex $u\in K$ and each vertex $v\in C$ there is an edge $(u,v)$ of large weight
$W$. Clearly the weight of the MST
for this graph is $W + n-2$, however the shortest path tree from any vertex $u\in
K$ has weight $nW/2 + n/2-1 = \Omega(nW)$. Since we expect half of the vertices of
$A_{k-1}$ to be from $K$ we see that the spanner has expected lightness at
least $\Omega(|A_{k-1}|\cdot n) =\Omega(n^{1+1/k})$ in this graph. We also
note that no edge of the spanner can have weight larger than that of the MST.
This follows because every edge of the spanner is part of some shortest-path
tree and if its weights was larger, we could simply replace it in the
shortest-path tree by the entire MST. Thus the lightness is also bounded from
above by $O(kn^{1+1/k})$.

\begin{figure}[htbp]
    \begin{center}
        \begin{tikzpicture}[scale=1.8,auto]
            \tikzstyle{vertex} = [circle,draw,fill=black!50,inner
            sep=0pt,minimum width=5pt]
            \draw (0:2) -- node[anchor=center,xshift=9pt,yshift=2pt] {$1$} (36:2);
            \draw (36:2) -- node[anchor=center,xshift=4pt,yshift=7pt] {$1$} (72:2);
            \draw (72:2) -- node[anchor=center,xshift=0pt,yshift=8pt] {$1$} (108:2);
            \draw (108:2) -- node[anchor=center,xshift=-4pt,yshift=7pt] {$1$}
            (144:2);
            \draw (144:2) -- node[anchor=center,xshift=-9pt,yshift=2pt] {$1$}
            (180:2);
            \draw (180:2) -- node[anchor=center,xshift=-9pt,yshift=-2pt] {$1$}
            (216:2);
            \draw (216:2) -- node[anchor=center,xshift=-4pt,yshift=-7pt] {$1$}
            (252:2);
            \draw (252:2) -- node[anchor=center,xshift=0pt,yshift=-8pt] {$1$}
            (288:2);
            \draw (288:2) -- node[anchor=center,xshift=4pt,yshift=-7pt] {$1$}
            (324:2);

            \foreach \x in {0,36,...,288,324}
            {
                \node[vertex] (\x) at (\x:2) {};
            };
            \draw[dashed] (324) -- (0);

            \node[circle,draw,minimum width=50pt,fill=black!15] (C) at (0:0)
            {$K$};
            \foreach \x in {0,36,...,324}
            {
                \draw (C) -- node {$W$} (\x);
            };

        \end{tikzpicture}
    \end{center}
    \caption{Example of a bad input graph for the spanner of \cite{ThorupZ05}.
    $K$ is the complete graph on $n^{1/k}$ vertices, where every edge has weight $1$. This bad instance implies $\Omega(n^{1+1/k})$ lightness for \cite{ThorupZ05}.}
    \label{fig:tz_bad}
\end{figure}
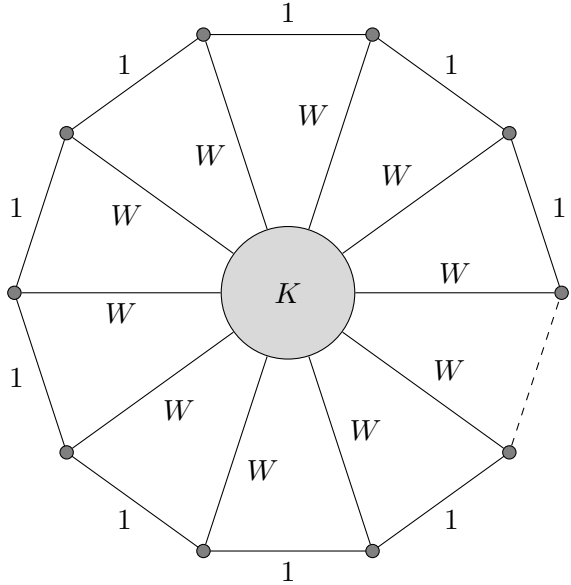

\section{Proof of \Cref{lem:fast_spanner_aspect}}\label{sec:app_proof_fast_aspect}

We build upon (the first variant of) the algorithm from \cite{ElkinS16}, while
we get an improved bound using the assumption of the small aspect ratio. 
The basic component of the algorithm is the spanner of Halperin and Zwick (see \Cref{thm:HZ96}). For simplicity we will assume that $a=1$. The construction/proof stays the same for general $a$.

	Fix $\rho=1+\epsilon$. 
	We start by computing the MST $T$.
	We divide the edges into $\log_\rho \Delta$ buckets. For $j\in[1,\log_{\rho}\Delta]$, let $E_{j}=\left\{ e\in E\mid w(e)\in [\rho^{j-1},\rho^{j})\right\}\setminus T$. 	
	We will construct separate spanner for each bucket. 
	We will use the $(i,\frac{\eps}{4})$-clustering as described in \Cref{sec:fw}. That is, for every $j$, we will have set of at most $n_{j}=\frac{4n}{\eps\rho^{j-1}}$ cluster, each with diameter bounded by $\eps\rho^{j-1}$ (in the MST metric). 
	Then, for each $j$, we contract each cluster and construct an unweighted graph
    $G_j$ with clusters as its vertices, where there is an edge between two
    clusters $\vpi_v$, $\vpi_u$, iff there are vertices  $v\in\vpi_v$ and
    $u\in\vpi_u$ such that $(u,v)\in E_j$. 
	Next we will construct a spanner $H_j$ for $G_j$ using \Cref{thm:HZ96}. 
	For each edge $\tilde{e}\in H_j$ we will add the edge $e\in E_j$ that created $\tilde{e}$ to our final spanner $H$ (if there multiple such edges, we add an arbitrary one).	
	Our final spanner $H$ contains the MST edges and the representatives of all the edges in $\bigcup_jH_j$.
	\begin{description}
        \item [{Stretch}] 
        As the diameter of every $j$-cluster is only an $\eps$ fraction of the weight of  edges  in $E_j$, bound on the stretch proof follows by similar arguments as in \Cref{eq:cluster-Stretch}. See \Cref{fig:stretchCluster} for illustration.

\item [{Number of edges}] by \Cref{thm:HZ96}
\begin{align*}
|H| & \le|T|+O(1)\cdot\sum_{j=1}^{\log_{\rho}\Delta}\left|n_{j}\right|^{1+\frac{1}{k}}=O(1)\cdot\left[\sum_{j=1}^{\log_{\rho}\Delta}\left(\frac{n}{\eps\rho^{j-1}}\right)^{1+\frac{1}{k}}\right]\\
& =O_{\epsilon}(n^{1+\frac{1}{k}})\cdot\left[\sum_{j=0}^{\infty}\left(\rho^{-j}\right)^{1+\frac{1}{k}}\right]=O_{\epsilon}(n^{1+\frac{1}{k}})~.
\end{align*}		
		\item [{Lightness}] as all the edges in $H_j$ have weight at most $\rho_j$,
\begin{align*}
|w(H)| & \le w(T)+O(1)\cdot\sum_{j=1}^{\log_{\rho}\Delta}\left|n_{j}\right|^{1+\frac{1}{k}}\cdot\rho^{j}=O_{\epsilon}(n^{1+\frac{1}{k}})\cdot\left[\sum_{j=1}^{\log_{\rho}\Delta}\left(\frac{1}{\rho^{j-1}}\right)^{1+\frac{1}{k}}\cdot\rho^{j}\right]\\
& =O_{\epsilon}(n^{1+\frac{1}{k}})\cdot\left[\sum_{j=0}^{\log_{\rho}\Delta}\left(\frac{1}{\rho^{j}}\right)^{\frac{1}{k}}\right]=O_{\epsilon}(n^{1+\frac{1}{k}}\cdot\log\Delta)~,
\end{align*}		
Thus the lightness bounded by $O_{\epsilon}(n^{\frac{1}{k}}\cdot\log\Delta)$.
		
	\item [{Running time}] Computing the MST takes $O(n\log n)$ times. Following the analysis of  \Cref{sec:fw}, the construction of the vertices
        for all the graphs $G_1,\dots,G_{\log_\rho\Delta}$ will take
        $O\left(\sum_{j=1}^{\log_{\rho}\Delta}\left|n_{j}\right|\right)= O\left(n\sum_{j=0}^{\log_{\rho}\Delta}\frac{1}{\rho^{j}}\right)=\Oeps(n)$ time.
	Adding the edges to the graphs will take additional $O(m+n\log n)$
    time. 
	Computing the spanners $H_j$ (using \Cref{thm:HZ96}) takes $\sum_j O(|E_j|)=O(m)$ time. All in all, a total of $O_{\epsilon}\left(m+n\log n\right)$ time.
	\end{description}

\section{Missing proofs from the analysis}	\label{sec:FastCWmissing}
\subsection{Stretch}\label{subapp:stretch}
In this section we bound the stretch of the spanner constructed in \Cref{alg:fastCW}  by $(1+O(\eps))(2k-1)$. 
Consider some edge $(u,v)=e\in E$.
If $w(e)\le \frac{k}{\eps}=g^\mu$, then $e$ is treated by $H_0$, the spanner constructed in line \ref{Line:G0} of \Cref{alg:fastCW}. 
Otherwise, let $i\ge \mu$ and $r\ge 1$ be such that $w(e)\in [g^i,g^{i+1})\subseteq [g^{r\mu},g^{(r+1)\mu})$. 
For any $j$, let $\vpi^{j}_v$ (resp. $\vpi^{j}_u$) denote the $j$-level
clusters containing $v$ (resp. $u$).

If $\vpi^{i}_v=\vpi^{i}_u$, by \Cref{clm:cluster_size_diam} $d_{E_{sp}}(v,u)\le \diam_{E_{sp}}(\vpi^{i}_v)\le \frac{kg^{i}}{2}\le \frac{k}{2}w(e)$ and we are done.

Otherwise, 
if $\vpi^{i}_v$ or $\vpi^{i}_u$ are light $i$-clusters, then during
the first phase, we add an edge $e'$ (of weight at most $w(e)$) between
$\vpi^{i-1}_v$ and $\vpi^{i-1}_u$. In particular
\begin{align*}
d_{E_{sp}}(v,u) & \le\diam_{E_{sp}}(\vpi_{v}^{i-1})+w(e')+\diam_{E_{sp}}(\vpi_{u}^{i-1}) \\
& \le\frac{kg^{i-1}}{2}+w(e)+\frac{kg^{i-1}}{2}\le(k/g+1)w(e)~.
\end{align*}

Finally consider the case where $\vpi^{i}_v$ and $\vpi^{i}_u$ are heavy $i$-clusters.
Recall the auxiliary graph $S_r$ constructed during the second phase. Its vertices were $V_{(r-1)\mu}$. In particular it contained an edge $e'$ from  $\vpi^{(r-1)\mu}_v$ to $\vpi^{(r-1)\mu}_u$, where $w(e')\le w(e)$. 
Note that the diameter of each $(r-1)\mu$ cluster is bounded by $\frac{k\cdot g^{(r-1)\mu}}{2}=\frac{\eps}{2}g^{r\mu}$, while in the used modified weight function $w_r(e')$ the minimal weight is $g^{r\mu}$. 
Following similar arguments to those in \Cref{eq:cluster-Stretch} there is a path in $E_{sp}$ of length $(1+O(\epsilon))(2k-1)\cdot w(e)$ from $v$ to $u$. See \Cref{fig:stretchCluster} for illustration.

\subsection{Proofs of \Cref{lem:M_r bound} and \Cref{lem:total_weight_spanners}}\label{subapp:lightness}
For $i$-level cluster $C$ (heavy or light), set $\widehat{\mbox{diam}}(C)$ to be the maximum value between the diameter (in $H$) of the cluster $C$ (in the time it was created) and $\frac{1}{c}kg^i$.

We start with proving some properties of the clusters:
\begin{claim}\label{clm:heavy_potential_diffrence} 
	Let $C$ be an
	$i$-level heavy cluster. Let $\mathcal{C}$ be the set of the $i-1$ clusters
	contained in $C$, then $\sum_{C'\in\mathcal{C}}\widehat{\mbox{diam}}(C')-\widehat{\mbox{diam}}(C)\ge\frac{\left|\mathcal{C}\right|g^{i-1}}{2c}\cdot k$.
\end{claim}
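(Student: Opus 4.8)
The plan is to bound the two sides of the claimed inequality separately, reusing the diameter estimates already established in the proof of \Cref{clm:cluster_size_diam}. For the left-hand sum, the definition of $\widehat{\mbox{diam}}$ immediately gives $\widehat{\mbox{diam}}(C')\ge\frac{1}{c}kg^{i-1}$ for every $(i-1)$-level cluster $C'\in\mathcal{C}$, hence $\sum_{C'\in\mathcal{C}}\widehat{\mbox{diam}}(C')\ge\frac{|\mathcal{C}|}{c}kg^{i-1}$. Next I would record the elementary lower bound $|\mathcal{C}|\ge d+1$: a heavy cluster $C$ is formed around an origin $(i-1)$-cluster $\vpi$ with $\deg_{K_i}(\vpi)\ge d$, and by construction $C$ contains $\vpi$ together with all of its neighbours in $K_i$ (plus possibly further $(i-1)$-clusters merged in afterwards).

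The technical heart is an upper bound of the form $\widehat{\mbox{diam}}(C)\le\alpha\,kg^{i-1}$ for a small absolute constant $\alpha$. Here I would follow the distance accounting from the proof of \Cref{clm:cluster_size_diam}: every $(i-1)$-cluster that ends up in the core of $C$ lies within a bounded number of $K_i$-hops of $\vpi$, each hop being an edge of $E_i$ of weight $<g^{i+1}$; by the induction hypothesis each $(i-1)$-cluster encountered on such a connecting path has diameter at most $\frac{1}{2}kg^{i-1}$; and the semi-clusters attached during the light-cluster phase each contribute diameter $<\frac{1}{c}kg^{i}$ via a unit-weight edge. Summing these contributions and using $g^{i+1}=g^{2}g^{i-1}$ together with the hypothesis $k\ge 640$ (which makes $g^{i+1}$ a bounded fraction of $kg^{i-1}$) yields a bound $\widehat{\mbox{diam}}(C)\le\alpha\,kg^{i-1}$ of exactly the form produced inside \Cref{clm:cluster_size_diam}.

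Combining the three ingredients,
\[
\sum_{C'\in\mathcal{C}}\widehat{\mbox{diam}}(C')-\widehat{\mbox{diam}}(C)\ \ge\ \frac{|\mathcal{C}|}{c}kg^{i-1}-\alpha\,kg^{i-1}\ \ge\ \frac{|\mathcal{C}|}{c}kg^{i-1}\Bigl(1-\tfrac{\alpha c}{d+1}\Bigr),
\]
so the claim follows as soon as $1-\tfrac{\alpha c}{d+1}\ge\tfrac12$, i.e. $\alpha\le\tfrac{d+1}{2c}$, which one then checks by plugging in $g=20$, $c=24$, $d=160$. Getting $\alpha$ small enough to close this last inequality is the main obstacle, and it is sensitive to exactly which moment in the life of the heavy cluster the quantity $\widehat{\mbox{diam}}(C)$ (taken ``at the time it was created'') refers to. If the bound coming from the fully merged cluster turns out too lossy, the remedy is to charge the potential drop at the instant the core of $C$ is first assembled from $\vpi$ and its $K_i$-neighbours --- when the diameter is smallest relative to $|\mathcal{C}|$ --- and to account separately for the (favourable) effect on the potential of each semi-cluster that is subsequently merged into $C$ via a unit-weight edge.
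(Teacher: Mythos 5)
Your proposal is essentially the paper's own proof: lower-bound each $\widehat{\mbox{diam}}(C')$ by $\frac{1}{c}kg^{i-1}$, use $|\mathcal{C}|\ge d$ (the origin $(i-1)$-cluster has at least $d$ neighbours in $K_i$), upper-bound $\widehat{\mbox{diam}}(C)$ via the diameter bound of \Cref{clm:cluster_size_diam}, and absorb the subtracted term into half the sum, which is exactly the paper's chain $\sum_{C'\in\mathcal{C}}\widehat{\mbox{diam}}(C')-\widehat{\mbox{diam}}(C)\ge\frac{|\mathcal{C}|g^{i-1}k}{c}-\frac{g^{i}k}{2}\ge\frac{|\mathcal{C}|g^{i-1}k}{2c}+\frac{dg^{i-1}k}{2c}-\frac{g^{i}k}{2}$. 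The constant check you single out as the main obstacle (your condition $\alpha\le\frac{d+1}{2c}$) is precisely the paper's final step, which implicitly needs $\frac{d}{2c}\ge\frac{g}{2}$, i.e.\ $d\ge cg$; this does not hold verbatim for the stated $g=20$, $c=24$, $d=160$ (it would need $d\ge 480$, or a correspondingly smaller diameter constant), so your hesitation points at a tightness in the paper's own constants at that step rather than at a missing idea in your approach.
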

\begin{proof}
	By the definition of $\widehat{\mbox{diam}}$, and \Cref{clm:cluster_size_diam}
	\[
		\sum_{C'\in\mathcal{C}}\widehat{\mbox{diam}}(C')-\widehat{\mbox{diam}}(C)  \ge\frac{\left|\mathcal{C}\right|g^{i-1}k}{c}-\frac{g^{i}k}{2} \ge\frac{\left|\mathcal{C}\right|g^{i-1}k}{2c}+\frac{dg^{i-1}k}{2c}-\frac{g^{i}k}{2}=\frac{\left|\mathcal{C}\right|g^{i-1}k}{2c}~.\qedhere\]	
\end{proof}
\begin{claim}\label{clm:light_potential_diffrence} 
	Let $C$ be an
	$i$-light cluster. Let $\mathcal{C}$ be the set of the $i-1$ clusters
	contained in $C$, then $\widehat{\mbox{diam}}(C)\le\sum_{C'\in\mathcal{C}}\widehat{\mbox{diam}}(C')+\left|\mathcal{C}\right|-1$.
\end{claim}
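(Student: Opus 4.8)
Write $\widehat{\mbox{diam}}(C)=\max\{\diam_H(C),\tfrac1c kg^{i}\}$, where $\diam_H(C)$ is the diameter of $C$ in the spanner $H$ at the moment $C$ is created. I would bound the two arguments of this maximum against $\sum_{C'\in\mathcal{C}}\widehat{\mbox{diam}}(C')+|\mathcal{C}|-1$ separately. The first step is to record the combinatorial shape of a light cluster: by the light--clustering procedure, $C$ is grown from one $(i-1)$-cluster by repeatedly attaching a still-unclustered $(i-1)$-cluster through a single MST edge, and finally each leftover ``semi-cluster'' is glued on by one MST edge to the core. Hence, viewing $C$ as the union of the $(i-1)$-clusters in $\mathcal{C}$, these clusters are connected by exactly $|\mathcal{C}|-1$ edges, each an MST edge of weight $1$, and these edges form a tree on $\mathcal{C}$.

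For the first argument, $\diam_H(C)$: given $x,y\in C$, walk along the unique path in the tree on $\mathcal{C}$ from the cluster of $x$ to the cluster of $y$, and inside each traversed $(i-1)$-cluster $C'$ replace the portion of the walk by a shortest $H$-path between its entry and exit vertices. This gives an $x$-to-$y$ walk in $H$ of weight at most $\sum_{C'\in\mathcal{C}}\diam_H(C')+(|\mathcal{C}|-1)$, the first sum covering the within-cluster parts (each at most $\diam_H(C')$ since a tree path meets every cluster at most once) and the term $|\mathcal{C}|-1$ covering the unit-weight connecting edges. Since $\widehat{\mbox{diam}}$ only enlarges the true diameter, $\diam_H(C')\le\widehat{\mbox{diam}}(C')$, so $\diam_H(C)\le\sum_{C'\in\mathcal{C}}\widehat{\mbox{diam}}(C')+|\mathcal{C}|-1$.

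For the second argument, $\tfrac1c kg^{i}$: if $\diam_H(C)\ge\tfrac1c kg^{i}$ then $\widehat{\mbox{diam}}(C)=\diam_H(C)$ and the previous paragraph is already the claim, so assume $\widehat{\mbox{diam}}(C)=\tfrac1c kg^{i}$. Now use that $C$ is a genuine light cluster, so $\sum_{C'\in\mathcal{C}}|C'|=|C|\ge\tfrac1c kg^{i}$, together with the level-$(i-1)$ instance of \Cref{clm:cluster_size_diam}, which gives $|C'|>\diam_H(C')$ and $|C'|\ge\tfrac1c kg^{i-1}$, hence $\widehat{\mbox{diam}}(C')\le|C'|$ and $\widehat{\mbox{diam}}(C')\ge\tfrac1c kg^{i-1}$ for each $C'\in\mathcal{C}$. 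It then suffices to prove $\sum_{C'\in\mathcal{C}}\bigl(|C'|-\widehat{\mbox{diam}}(C')\bigr)\le|\mathcal{C}|-1$, since adding $\sum_{C'}\widehat{\mbox{diam}}(C')$ to both sides and using $\sum_{C'}|C'|\ge\tfrac1c kg^{i}=\widehat{\mbox{diam}}(C)$ yields the claimed inequality.

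\textbf{Main obstacle.} The inequality $\sum_{C'\in\mathcal{C}}(|C'|-\widehat{\mbox{diam}}(C'))\le|\mathcal{C}|-1$ is the real content and is where the constants $g=20$, $c=24$, $d=160$ and the hypothesis $k\ge 640$ enter; intuitively it says that on average an $(i-1)$-cluster feeding a light cluster is almost as ``thin'' as its hatted diameter, equivalently that a genuine light $i$-cluster is assembled from $\Omega(g)$ sub-clusters. I would prove it by induction on the level, charging the excess $|C'|-\widehat{\mbox{diam}}(C')$ of a light $(i-1)$-cluster against the MST edges used to grow it (the ``stop once the vertex count reaches $\tfrac1c kg^{i-1}$'' rule keeps its core from overshooting by more than one $(i-2)$-cluster), and handling a heavy $(i-1)$-cluster — which is star-like and may carry far more vertices than its diameter — by a separate accounting against the $\ge d$ incident $i$-level edges placed in $E_{sp}$ when the heavy cluster was formed, exactly in the spirit of the potential argument of \cite{ChechikW16}. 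This level-by-level bookkeeping with the explicit constants is the delicate part; the two preceding paragraphs are essentially routine.
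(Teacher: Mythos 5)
Your second paragraph is, in substance, the paper's entire proof: the paper justifies the claim solely by the observation that $C$ is assembled from the clusters of $\mathcal{C}$ through $|\mathcal{C}|-1$ unit-weight MST edges arranged in a tree, which gives $\diam_H(C)\le\sum_{C'\in\mathcal{C}}\diam_H(C')+|\mathcal{C}|-1\le\sum_{C'\in\mathcal{C}}\widehat{\mbox{diam}}(C')+|\mathcal{C}|-1$. Up to that point your write-up is correct and coincides with the paper's argument.

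The gap is in your treatment of the other branch of the maximum, $\widehat{\mbox{diam}}(C)=\frac{1}{c}kg^{i}$. First, you do not actually prove it: the inequality $\sum_{C'\in\mathcal{C}}\bigl(|C'|-\widehat{\mbox{diam}}(C')\bigr)\le|\mathcal{C}|-1$ to which you reduce it is explicitly left as an ``obstacle'' with only a plan. Second, that inequality is false in general, so the plan cannot be carried out: an $(i-1)$-cluster contained in a light $i$-cluster may itself be a \emph{heavy} $(i-1)$-cluster, and a heavy $(i-1)$-cluster consists of an $(i-2)$-cluster together with all of its (at least $d$) neighbours in $K_{i-1}$, which are themselves clusters of unbounded size; hence $|C'|$ can be arbitrarily large while, by \Cref{clm:cluster_size_diam}, $\widehat{\mbox{diam}}(C')=O(kg^{i-1})$. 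A single such sub-cluster makes your left-hand side exceed any bound of the form $|\mathcal{C}|-1$. The rescue you sketch—charging the excess of a heavy sub-cluster against the $\ge d$ edges it placed in $E_{sp}$ when it was formed—is an amortization of spanner edges and has no bearing on this purely numerical inequality between vertex counts and hatted diameters (and those edges are $(i-1)$-level edges, not $i$-level ones), so the route through $|C'|$ dead-ends. For the record, the paper's one-line proof does not discuss the $\frac{1}{c}kg^{i}$ branch at all; you are right that it is not literally covered by the composition bound, but the extra argument you offer for it is both incomplete and built on a false intermediate step, so the proposal as a whole does not establish the claim.
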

\begin{proof}
	This is straightforward as the cluster $C$ was created from $\mathcal{C}$
	using only MST unit weight edges.
\end{proof}
\begin{claim}\label{clm:j_i_cluster_bound} 
	Let $C$ be an $i$ cluster
	and $\mathcal{C}$ be the set of the $j$ clusters contained in $C$
	for some $j<i$. Consider the graph $G\left[C\right]$ where we contract
	all the $j$-clusters and keep only the edges used to create clusters. Then
	$w\left(MST(G\left[C\right])\right)=O\left(\sum_{C'\in\mathcal{C}}\widehat{\mbox{diam}}(C')\right)$.
\end{claim}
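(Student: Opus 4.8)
The plan is to work directly with the creation edges. When we contract the $j$-level clusters of $C$ and keep only the creation edges of levels $j+1,\dots,i$, the resulting contracted graph has vertex set $\mathcal{C}$ and is connected, and the number of creation edges produced at level $\ell$ inside $C$ is exactly $n_{\ell-1}(C)-n_\ell(C)$ (where $n_\ell(C)$ is the number of $\ell$-level clusters inside $C$), since each $\ell$-cluster is a tree over its $(\ell-1)$-subclusters. These telescope to $|\mathcal{C}|-n_i(C)=|\mathcal{C}|-1$ creation edges in total, so they span $\mathcal{C}$ and hence $w(MST(G[C]))$ is at most their total weight. I would then split them into \emph{expensive} edges --- those of $K_\ell$ used to build the cores of $\ell$-level heavy clusters, each of weight $<g^{\ell+1}$ --- and \emph{unit} edges --- MST edges used to grow light clusters and to attach semi-clusters, each of weight $1$. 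The number of unit edges is at most $|\mathcal{C}|-1\le\sum_{C'\in\mathcal{C}}\widehat{\mbox{diam}}(C')$, using $\widehat{\mbox{diam}}(C')\ge kg^{j}/c\ge 1$, so the unit edges contribute at most $\sum_{C'\in\mathcal{C}}\widehat{\mbox{diam}}(C')$ to the weight.

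The core of the argument is bounding the total weight of the expensive edges. For a heavy $\ell$-level cluster $C'\subseteq C$ with set $\mathcal{C}'$ of $(\ell-1)$-subclusters, the number of expensive edges used to build its core is less than $|\mathcal{C}'|$, and \Cref{clm:heavy_potential_diffrence} gives $|\mathcal{C}'|\le\frac{2c}{kg^{\ell-1}}\bigl(\sum_{D\in\mathcal{C}'}\widehat{\mbox{diam}}(D)-\widehat{\mbox{diam}}(C')\bigr)$. Multiplying by the per-edge weight bound $g^{\ell+1}$ and summing over all heavy $\ell$-clusters inside $C$, the expensive-edge weight produced at level $\ell$ is at most $\frac{2cg^{2}}{k}\,H_\ell$, where I set
\[
  H_\ell \;:=\; \sum_{\substack{C'\subseteq C\\ C'\text{ heavy }\ell\text{-cluster}}}\Bigl(\textstyle\sum_{D\in\mathcal{C}'}\widehat{\mbox{diam}}(D)-\widehat{\mbox{diam}}(C')\Bigr)\;\ge\;0 ,
\]
the nonnegativity again coming from \Cref{clm:heavy_potential_diffrence}. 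To control $\sum_\ell H_\ell$ I would use a telescoping argument with the potential $\Phi_\ell:=\sum_{C''\subseteq C,\,C''\in\mathcal{C}_\ell}\widehat{\mbox{diam}}(C'')$, so that $\Phi_j=\sum_{C'\in\mathcal{C}}\widehat{\mbox{diam}}(C')$ and $\Phi_i=\widehat{\mbox{diam}}(C)$. Writing $\Phi_{\ell-1}-\Phi_\ell$ as the sum over \emph{all} $\ell$-clusters $C'\subseteq C$ of $\sum_{D\in\mathcal{C}'}\widehat{\mbox{diam}}(D)-\widehat{\mbox{diam}}(C')$ and isolating the heavy ones, each light cluster contributes at least $-(|\mathcal{C}'|-1)$ by \Cref{clm:light_potential_diffrence}; hence $H_\ell\le(\Phi_{\ell-1}-\Phi_\ell)+U_\ell$, where $U_\ell$ is the number of unit creation edges produced at level $\ell$ inside $C$.

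Summing $\ell=j+1,\dots,i$, the telescope gives $\sum_\ell H_\ell\le(\Phi_j-\Phi_i)+\sum_\ell U_\ell\le\Phi_j+(|\mathcal{C}|-1)\le 2\Phi_j$, where I used $\Phi_i=\widehat{\mbox{diam}}(C)\ge 0$ and that the total number of unit creation edges inside $C$ is at most $|\mathcal{C}|-1\le\Phi_j$. Therefore the total expensive-edge weight is at most $\frac{4cg^{2}}{k}\Phi_j$, and combining with the unit-edge bound,
\[
  w(MST(G[C]))\;\le\;\Bigl(1+\tfrac{4cg^{2}}{k}\Bigr)\Phi_j\;=\;O\!\left(\textstyle\sum_{C'\in\mathcal{C}}\widehat{\mbox{diam}}(C')\right),
\]
since $c,g$ are absolute constants and $k\ge 640$.

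The step I expect to be the main obstacle is the last telescoping: a naive per-level bound on how much $\Phi_\ell$ can grow through light clusters would accumulate a spurious factor of $i-j$ (and indeed a single $\Phi_\ell$, e.g.\ $\Phi_{i-1}$, can genuinely be much larger than $\Phi_j$), so one must never bound an individual $\Phi_\ell$ by $\Phi_j$; the resolution is to charge those increases globally against the total count of unit creation edges, which is bounded by $|\mathcal{C}|-1$ because it is a sub-count of the $|\mathcal{C}|-1$ creation edges of a single spanning tree of the contracted graph. The only other point that needs care is the edge bookkeeping from the algorithm's description in \Cref{sec:running_time} --- namely that a heavy core over $p$ subclusters uses $p-1$ expensive edges while attaching a semi-cluster of $q$ subclusters uses $q$ unit edges --- which is routine.
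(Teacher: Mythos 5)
Your proof is correct, but your accounting differs from the paper's. You split the creation edges into unit (MST) edges and expensive (bucket) edges, bound the unit edges directly by $|\mathcal{C}|-1\le\sum_{C'\in\mathcal{C}}\widehat{\mbox{diam}}(C')$, and charge the expensive edges to the potential \emph{drop} guaranteed by \Cref{clm:heavy_potential_diffrence}, telescoping $\Phi_{\ell-1}-\Phi_\ell$ over the levels and absorbing the light-cluster increases of \Cref{clm:light_potential_diffrence} into the unit-edge count. The paper is blunter: it charges \emph{every} creation edge at the level creating $\mathcal{C}_{r+1}$ the maximal bucket weight $g^{r+2}\le\frac{cg^{2}}{k}\widehat{\mbox{diam}}(C')$ of an $r$-level cluster (using $|E'_r|<|\mathcal{C}_r|$ and $\widehat{\mbox{diam}}(C')\ge kg^{r}/c$), bounds each level's sum $\sum_{C'\in\mathcal{C}_r}\widehat{\mbox{diam}}(C')$ by $O\bigl(\sum_{C'\in\mathcal{C}_j}\widehat{\mbox{diam}}(C')\bigr)$, and then simply sums over the at most $k$ levels, the factor $k$ being cancelled by the $1/k$ gained per edge. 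Your route buys a sharper constant (roughly $1+O(cg^{2}/k)$ versus $O(cg^{2})$) and makes explicit the heavy-cluster charging that the paper reuses later for \Cref{lem:total_weight_spanners}; the paper's route is shorter and needs only the crude per-level bound.

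One aside in your write-up is wrong, and it is exactly the point where the two proofs diverge: you claim that an individual $\Phi_\ell$ (e.g.\ $\Phi_{i-1}$) ``can genuinely be much larger than $\Phi_j$'' and that one must therefore never bound $\Phi_\ell$ by $\Phi_j$. In fact \Cref{clm:heavy_potential_diffrence,clm:light_potential_diffrence} imply that $\Phi_\ell+n_\ell$ is non-increasing in $\ell$ (this is the potential $D$ the paper defines right after the claim), so $\Phi_\ell\le\Phi_j+n_j\le 2\Phi_j$ for every $\ell$, using $\widehat{\mbox{diam}}(C')\ge kg^{j}/c\ge 1$ for $j$-clusters; the paper's proof rests precisely on this per-level bound and pays the $(i-j)\le k$ levels against the $1/k$ per edge. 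This does not affect the validity of your argument, which never uses the false statement.
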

\begin{proof}
	Denote by $\mathcal{C}_{r}$ the set of $r$-level clusters contained
	in $\mathcal{C}$. Let $E'_{r}$ be the set of edges used to create
	the clusters $\mathcal{C}_{r+1}$ from $\mathcal{C}_{r}$. Note that
	$\left|E'_{r}\right|<\left|\mathcal{C}_{r}\right|$, and that the weight
	of $e\in E'_{r}$ is bounded by $g^{r+2}$. Moreover, $E'=\cup_{r=j+1}^{i}E'_{r}$
	spans $G\left[C\right]$, and thus we can bound $w\left(MST(G\left[C\right])\right)$
	by $w(E')$. It holds that
	\[
	w\left(E'\right)=\sum_{r=j+1}^{i}w\left(E'_{r}\right)<\sum_{r=j+1}^{i}\left|\mathcal{C}_{r}\right|\cdot g^{r+2}=\sum_{r=j+1}^{i}\sum_{C'\in\mathcal{C}_{r}}g^{r+2}\le\sum_{r=j+1}^{i}\sum_{C'\in\mathcal{C}_{r}}\frac{c\cdot g^{2}}{k}\cdot\widehat{\mbox{diam}}(C')~.
	\]
	By  \Cref{clm:heavy_potential_diffrence} and  \Cref{clm:light_potential_diffrence},
	$\sum_{C'\in\mathcal{C}_{r}}\widehat{\mbox{diam}}(C')=O\left(\sum_{C'\in\mathcal{C}_{j}}\widehat{\mbox{diam}}(C')\right)$.
	We conclude
	\[
	w\left(E'\right)\le O\left(\sum_{r=j+1}^{i}\sum_{C'\in\mathcal{C}_{r}}\frac{1}{k}\cdot\widehat{\mbox{diam}}(C')\right)=O\left(\sum_{C'\in\mathcal{C}_{j}}\widehat{\mbox{diam}}(C')\right)~.\qedhere
	\]
\end{proof}
We now ready to prove \Cref{lem:M_r bound}.
\begin{proof}[Proof of \Cref{lem:M_r bound}]
	Recall that we used modified weights $w_r(e)=\max\left\{ kg^{(r-1)\mu}/\epsilon,w(e)\right\} $.
	The contribution of this change to the weight of $M_r$, bounded by $\left(\left|V_{r}\right|-1\right)kg^{\left(r-1\right)\mu}/\epsilon$.
	Thus we can ignore it, and bound $w(M_r)$ (original weight) instead of $w_r(M_r)$ (modified weight).
	
	Denote by $\mathcal{C}_{i}$ the set of $i$-level clusters. Let $\mathcal{H}_{r}$
	be the set of maximal heavy clusters in $\bigcup_{i=r\mu}^{(r+1)\mu-1}\mathcal{C}_{i}$
	(i.e. heavy clusters that does not contained in any other heavy cluster
	up to level $\left(r+1\right)\mu$). Note that $\mathcal{H}_{r}$
	form a partition of $V_{r}$. We will call the sets in $\mathcal{H}_{r}$
	bugs. We will construct a spanning tree $T$ of $S_{r}$. Trivially, $w(T)$ is upper bound on $w(M_r)$.
	$T$ will consist of spanning tree $T_{C}$ for
	every $C\in\mathcal{H}_{r}$, and in addition a set of cross-bug edges
	$T'$.
	
	First consider $C\in\mathcal{H}_{r}$. Let $\mathcal{C}_{C}$ be all
	the $\left(r-1\right)\mu$ clusters contained in $C$. By  \Cref{clm:j_i_cluster_bound},
	there is a spanning tree $T_{C}$ of weight $O\left(\sum_{C'\in\mathcal{C}_{C}}\widehat{\mbox{diam}}(C')\right)$
	that connects between all the clusters in $\mathcal{C}$. Note that all the edges in $T_C$ contained in $\mathcal{E}_r$, and thus in $S_r$.
	
	Next, let $T'$ be a set of edges between bugs of maximal cardinality,
	such that there is no cycles in $T'\cup\bigcup_{C\in\mathcal{H}_{r}}T_{C}$.
	Set $T=T'\cup\bigcup_{C\in\mathcal{H}_{r}}T_{C}$, note that $T$
	is a spanning forest of $S_{r}$. As each $C\in\mathcal{H}_{r}$ is
	already connected, necessarily $\left|T'\right|\le\left|\mathcal{H}_{r}\right|-1$.
	The weight of each edge $e\in T'$, is at most $g^{\left(r+1\right)\mu}=kg^{\mu r}/\eps$,
	while for every $C\in\mathcal{H}_{r}$, $\widehat{\mbox{diam}}(C)\ge\frac{kg^{r\mu}}{c}$.
	Hence $w(T')\le\left|\mathcal{H}_{r}\right|\cdot\frac{k}{\epsilon}\cdot g^{\mu r}\le\frac{c}{\epsilon}\cdot\sum_{C\in\mathcal{H}_{r}}\widehat{\mbox{diam}}(C)$.
	Using \Cref{clm:heavy_potential_diffrence} and \Cref{clm:light_potential_diffrence}
	\begin{align*}
		w(T) & \le w\left(T'\right)+\sum_{C\in\mathcal{H}_{r}}w(T_{C})=\frac{c}{\epsilon}\cdot\sum_{C\in\mathcal{H}_{r}}\widehat{\mbox{diam}}(C)+\sum_{C\in\mathcal{H}_{r}}O\left(\sum_{C'\in\mathcal{C}_{C}}\widehat{\mbox{diam}}(C')\right)\\
		& =O\left(\sum_{C\in\mathcal{H}_{r}}\sum_{C'\in\mathcal{C}_{C}}\widehat{\mbox{diam}}(C')\right)/\epsilon=O\left(\sum_{\varphi\in V_{r}}\widehat{\mbox{diam}}(\varphi)/\epsilon\right)~.\qedhere
	\end{align*}
\end{proof}
Define a potential function $D_{i}=\sum_{\varphi\in\mathcal{C}_{i}}\widehat{\mbox{diam}}(\varphi)+\left|\mathcal{C}_{i}\right|$.
According to \Cref{clm:light_potential_diffrence},
and \Cref{clm:heavy_potential_diffrence}, $D_i$
is not-increasing. 
\begin{claim}
	For every $r\ge 1$, $D_{\left(r-1\right)\mu}-D_{\left(r+1\right)\mu}=\Omega\left(\left|V_{r}\right|\cdot kg^{\left(r-1\right)\mu}\right)$.
\end{claim}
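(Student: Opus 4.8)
The plan is to bound the drop $D_{(r-1)\mu}-D_{(r+1)\mu}$ from below by the potential released when heavy clusters are formed at the intermediate levels, and then to charge the members of $V_r$ against those releases. Throughout, for a cluster $C$ at level $\ell(C)$ let $\mathcal C_C$ denote the set of $(\ell(C)-1)$-clusters forming $C$.

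\emph{Step 1 (telescoping into level steps).} Write $D_{(r-1)\mu}-D_{(r+1)\mu}=\sum_{\ell=(r-1)\mu+1}^{(r+1)\mu}\bigl(D_{\ell-1}-D_\ell\bigr)$. Since the $(\ell-1)$-clusters refine the $\ell$-clusters, for each $\ell$ we have $D_{\ell-1}-D_\ell=\sum_{C\in\mathcal C_\ell}\Bigl(\sum_{C'\in\mathcal C_C}\widehat{\mbox{diam}}(C')+|\mathcal C_C|-\widehat{\mbox{diam}}(C)-1\Bigr)$. By \Cref{clm:light_potential_diffrence} every light $C$ contributes a nonnegative summand, and by \Cref{clm:heavy_potential_diffrence} every heavy $C$ contributes at least $\frac{|\mathcal C_C|\,g^{\ell-1}k}{2c}$ (the count terms only help). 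Hence
\begin{equation}\label{eq:drop-lower}
D_{(r-1)\mu}-D_{(r+1)\mu}\ \ge\ \frac{k}{2c}\sum_{\ell=(r-1)\mu+1}^{(r+1)\mu}\ \sum_{\substack{C\ \text{heavy at level }\ell}}|\mathcal C_C|\,g^{\ell-1}.
\end{equation}

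\emph{Step 2 (charging $V_r$).} Fix $\varphi\in V_r$, a $(r-1)\mu$-cluster contained in some heavy cluster at a level in $[r\mu,(r+1)\mu)$. Since $\mu\ge2$, that level is $>(r-1)\mu$, so we may let $C_\varphi$ be a heavy cluster of \emph{smallest} level $\ell_\varphi>(r-1)\mu$ containing $\varphi$; then $(r-1)\mu<\ell_\varphi\le(r+1)\mu-1$. By minimality, the $(\ell_\varphi-1)$-cluster containing $\varphi$ is either $\varphi$ itself (when $\ell_\varphi=(r-1)\mu+1$) or a \emph{light} cluster, and in either case it lies in $\mathcal C_{C_\varphi}$. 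The key combinatorial estimate I would establish is: for a fixed heavy cluster $C$ at level $\ell$,
\begin{equation}\label{eq:mult}
\#\{\varphi\in V_r:\ C_\varphi=C\}\ =\ O\!\bigl(|\mathcal C_C|\cdot g^{\ell-1-(r-1)\mu}\bigr).
\end{equation}
Indeed at most $|\mathcal C_C|$ children $C'$ of $C$ can receive such $\varphi$, each such $C'$ is light (or a single $(r-1)\mu$-cluster when $\ell=(r-1)\mu+1$), and a light $(\ell-1)$-cluster contains only $O(g^{\ell-1-(r-1)\mu})$ many $(r-1)\mu$-sub-clusters: every $(r-1)\mu$-cluster has at least $\tfrac1c kg^{(r-1)\mu}$ vertices by \Cref{clm:cluster_size_diam}, while a light $(\ell-1)$-cluster has $O(kg^{\ell-1})$ vertices.

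\emph{Step 3 (conclusion).} The sets $\{\varphi\in V_r:C_\varphi=C\}$, over heavy clusters $C$ of level in $\bigl((r-1)\mu,(r+1)\mu-1\bigr]$, cover $V_r$, so summing \eqref{eq:mult} gives $|V_r|=O\!\bigl(g^{-(r-1)\mu}\sum_{C}|\mathcal C_C|g^{\ell(C)-1}\bigr)$, i.e.\ $\sum_{C}|\mathcal C_C|g^{\ell(C)-1}=\Omega\bigl(|V_r|g^{(r-1)\mu}\bigr)$. Restricting the double sum in \eqref{eq:drop-lower} to $\ell\in[r\mu,(r+1)\mu-1]$ (which only decreases it) and plugging in yields $D_{(r-1)\mu}-D_{(r+1)\mu}=\Omega\bigl(k\,g^{(r-1)\mu}|V_r|\bigr)$, absorbing the absolute constant $c$, which is the claim.

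\emph{Main obstacle.} The delicate point is \eqref{eq:mult}, namely that a light $(\ell-1)$-cluster has $O(kg^{\ell-1})$ vertices (equivalently, $O(g^{\ell-1-(r-1)\mu})$ many $(r-1)\mu$-sub-clusters). \Cref{clm:cluster_size_diam} records only a \emph{lower} bound on cluster sizes, so one has to revisit the light-cluster growth procedure — in particular the ``semi-clusters'' attached via MST edges after a cluster stops growing — to check that their aggregate contribution keeps every light cluster of size $\Theta(kg^{\ell-1})$; heavy sub-clusters (which can be far larger) never enter this count, because any $\varphi$ inside such a sub-cluster is charged to it, or to a deeper heavy cluster, rather than to $C$. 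If the size bound is awkward to obtain directly, the fallback is to charge recursively: descend from $C$ only through light children down to the level where genuine growth, rather than semi-cluster attachment, produced the $(r-1)\mu$-clusters, and bound the branching there.
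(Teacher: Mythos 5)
There is a genuine gap, and it is exactly the point you flagged: the multiplicity bound in your Step 2 is not just unproven, it is false for the clustering produced by \Cref{alg:fastCW}. \Cref{clm:cluster_size_diam} gives only a lower bound on cluster sizes, and the construction really does allow a single light $(\ell-1)$-cluster to contain an unbounded number of $(r-1)\mu$-descendants reachable through all-light chains. Two mechanisms cause this: the last $(\ell-2)$-cluster absorbed into the core is not size-limited, and, more damagingly, arbitrarily many failed ``semi-clusters'' may be attached to the core of the \emph{same} existing cluster via MST edges (think of a star in $L_{\ell-1}$: the center grows a core, and every remaining leaf then attaches to it as a separate part). The paper only controls the \emph{diameter} of such parts, never their number, so a light child $C'$ of a heavy cluster $C$ can hold $\omega(g^{\ell-1-(r-1)\mu})$ many light $(r-1)\mu$-sub-clusters, and your charge of $V_r$ against $|\mathcal C_C|\cdot g^{\ell-1-(r-1)\mu}$ breaks. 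The deeper reason your route cannot be patched along the suggested fallback is Step 1 itself: by keeping only the guaranteed drops at heavy-formation levels (\Cref{clm:heavy_potential_diffrence}) and discarding the drops at light levels, you throw away precisely the potential that is released when many sub-clusters collapse into one light cluster — and those are exactly the $\varphi\in V_r$ your counting cannot cover. The branching of light clusters is genuinely unbounded, so no refinement of the per-level bookkeeping of heavy drops alone recovers a term proportional to $|V_r|$.

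The paper's proof takes a different and more robust route: fix a maximal heavy cluster $C$ and consider the potential $\mathcal D_i$ \emph{restricted to $C$}. Using \Cref{clm:heavy_potential_diffrence} (together with the diameter and size bounds of \Cref{clm:cluster_size_diam}) it shows that at the level $i$ where $C$ is formed, $\mathcal D_i\le\tfrac12\mathcal D_{i-1}$, and since the restricted potential is non-increasing across all intermediate (light or heavy) levels by \Cref{clm:light_potential_diffrence,clm:heavy_potential_diffrence}, the total drop from level $(r-1)\mu$ to level $i$ is at least $\tfrac12\mathcal D_{(r-1)\mu}\ge\tfrac12|\mathcal V_r\cap C|\bigl(\tfrac{1}{c}kg^{(r-1)\mu}+1\bigr)$, where $\mathcal V_r\cap C$ is the set of \emph{all} $(r-1)\mu$-clusters inside $C$. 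Summing over the maximal heavy clusters, whose $(r-1)\mu$-descendants are exactly $V_r$, gives the claim. The essential difference is that each $\varphi\in V_r$ is charged through its own contribution $\widehat{\mbox{diam}}(\varphi)+1$ to the level-$(r-1)\mu$ potential, so the argument is insensitive to how many $\varphi$'s pile up inside any one intermediate light cluster — which is exactly the configuration that defeats your counting.
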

\begin{proof}
	Consider some $i$-level heavy cluster $C$. Let $\mathcal{C}$ be
	all the $i-1$ clusters contained in $C$. Let $\mathcal{D}$ be the
	potential function on the graph induced by $C$. Then by  \Cref{clm:heavy_potential_diffrence},
	\begin{eqnarray*}
		\mathcal{D}_{i} & = & \widehat{\mbox{diam}}(C)+1\le\sum_{C'\in\mathcal{C}}\widehat{\mbox{diam}}(C')-\frac{\left|\mathcal{C}\right|g^{i-1}k}{2c}+1 \\
		&\le& \frac{1}{2}\sum_{C'\in\mathcal{C}}\widehat{\mbox{diam}}(C')+1 \le
		\frac{1}{2}\left(\sum_{C'\in\mathcal{C}}\widehat{\mbox{diam}}(C')+\left|\mathcal{C}\right|\right)=
		\frac{1}{2}\cdot\mathcal{D}_{i-1}~.		
	\end{eqnarray*}
	Let $\mathcal{V}_{r}$ be all the $\left(r-1\right)\mu$ clusters
	contained in $C$. For $i>\left(r-1\right)\mu$ 
	\begin{eqnarray*}
		\mathcal{D}_{i} & \le & \frac{1}{2}\cdot\mathcal{D}_{i-1}\le\frac{1}{2}\cdot\mathcal{D}_{\left(r-1\right)\mu}=\mathcal{D}_{\left(r-1\right)\mu}-\frac{1}{2}\cdot\left(\sum_{\varphi\in\mathcal{V}_{r}}\widehat{\mbox{diam}}(\varphi)+\left|\mathcal{V}_{r}\right|\right)\\
		& \le & \mathcal{D}_{\left(r-1\right)\mu}-\frac{1}{2}\cdot\left(\sum_{\varphi\in\mathcal{V}_{r}}\left(\frac{g^{\left(r-1\right)\mu}k}{c}+1\right)\right)=\mathcal{D}_{\left(r-1\right)\mu}-\Omega\left(\left|\mathcal{V}_{r}\right|g^{\left(r-1\right)\mu}k\right)~.
	\end{eqnarray*}
	By applying this on all the maximal heavy clusters and get the claim.
\end{proof}
Now we ready to prove  \Cref{lem:total_weight_spanners}.
\begin{proof}[Proof  of \Cref{lem:total_weight_spanners}]
	Fix some $r$. 
	Note that the minimal weight of an edge in $S_{r}$
	is $kg^{(r-1)\mu}/\epsilon$, while by \Cref{lem:M_r bound}, $w_r(M_r)\le O\left(|V_{r}|\cdot kg^{(r-1)\mu}/\epsilon\right)$. 
	Using  \Cref{lem:fast_spanner_aspect},

	\[
	w(H_{r})\le w_{r}(H_{r})\le O_{\epsilon}\left(|V_{r}|^{\frac{1}{k}}\cdot\log\left(\frac{k}{\epsilon}\right)\right)\cdot w_{r}(M_{r})=O_{\epsilon}\left(n^{\frac{1}{k}}\cdot\log k\cdot|V_{r}|\cdot kg^{(r-1)\mu}\right)~.
	\]
	The total weight of the spanners added during the second phase is
	bounded by
	\begin{align*}
		\sum_{r=1}^{\left\lceil k/\mu\right\rceil -1}w(H_{r}) & =O_{\epsilon}\left(n^{\frac{1}{k}}\cdot\log k\cdot\sum_{r=1}^{\left\lceil k/\mu\right\rceil -1}|V_{r}|\cdot kg^{(r-1)\mu}\right)\\
		& =O_{\epsilon}\left(n^{\frac{1}{k}}\cdot\log k\cdot\sum_{r=1}^{\left\lceil k/\mu\right\rceil -1}D_{\left(r-1\right)\mu}-D_{\left(r+1\right)\mu}\right)\\
		& =O_{\epsilon}\left(n^{\frac{1}{k}}\cdot\log k\cdot\left(D_{0}+D_{1}\right)\right)\\
		& =O_{\epsilon}\left(n^{1+\frac{1}{k}}\cdot\log k\right)~,
	\end{align*}
	where the last step follows as $D_{1},D_{0}\le D_{-1}=|V|=n$, as
	all the $-1$-clusters are simply vertices of $G$.
\end{proof}

\section{Halperin Zwick spanner}
In this section we state and analyze the spanner construction of  \cite{HZ96}.
\begin{theorem}[\cite{HZ96}]\label{thm:HZ96}
	For any unweighted graph $G = (V,E)$ and integer $k \ge 1$, a $(2k-1)$-spanner with $O(n^{1+1/k})$ edges can be built in $O(m)$ time.
\end{theorem}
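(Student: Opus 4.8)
The plan is to construct $H$ by a greedy ball-growing clustering. Maintain a set $V'$ of uncovered vertices, initially $V'=V$, and repeat until $V'=\emptyset$: pick an arbitrary $v\in V'$ and run breadth-first search from $v$ inside $G[V']$, producing layers $L_0=\{v\},L_1,L_2,\dots$ with $B_i:=L_0\cup\cdots\cup L_i$; stop at the first index $t$ with $|B_{t+1}|<n^{1/k}|B_t|$, or with $t=k-1$, whichever comes first. Add to $H$ a BFS spanning tree of $B_t$ rooted at $v$, and for each $u\in L_{t+1}$ add one edge joining $u$ to $B_t$; then delete $B_t$ from $V'$. The clusters $B_t$ produced across iterations form a partition of $V$; list them $C_1,\dots,C_R$ in creation order.

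I would first dispatch the size bound. As long as the loop has not stopped, $|B_{i+1}|\ge n^{1/k}|B_i|$, so $|B_t|\ge n^{t/k}$; hence when the loop stops because $t=k-1$ we still have $|L_{t+1}|\le |B_k|\le n\le n^{1/k}|B_{k-1}|$, and in every case $|L_{t+1}|\le n^{1/k}|B_t|$. So a cluster $C$ contributes at most $(|C|-1)+n^{1/k}|C|=O(n^{1/k}|C|)$ edges to $H$, and since $\sum_C|C|=n$ we get $|E(H)|=O(n^{1+1/k})$.

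For the stretch it suffices to show $d_H(x,y)\le 2k-1$ for every edge $(x,y)\in E$, since a spanner of the edge set is a spanner of all pairs by concatenation along shortest paths. Each cluster is spanned by a BFS tree of depth at most $k-1$ rooted at its center, so $\diam_H(C)\le 2k-2$ for every cluster $C$. Fix $(x,y)\in E$ and consider the iteration that first removes one of $x,y$ from $V'$; say $x$ joins the cluster $C=B_t$ built in that iteration, so $y\in V'$ at that moment. If $y\in B_t$ then $d_H(x,y)\le 2k-2$. Otherwise $y$ is a $G[V']$-neighbour of $x\in B_t$ lying outside $B_t$, forcing $y\in L_{t+1}$, so we added an edge $(y,z)$ with $z\in B_t$, whence $d_H(x,y)\le\diam_H(C)+1\le 2k-1$. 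All these edges remain in $H$, so $H$ is a $(2k-1)$-spanner.

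The remaining point, which I expect to be the most delicate, is the $O(m)$ running time. The crucial observation is that a single iteration's BFS scans only edges incident to its cluster $B_t$: the frontiers $L_0,\dots,L_t$ that get expanded all lie in $B_t$, and $B_t$ is deleted afterwards. Hence each vertex has its incident edges scanned in at most one iteration (a vertex that lands in some $L_{t+1}$ is never expanded and survives), and charging each scan to its scanning endpoint bounds the total scan cost by $\sum_v\deg_G(v)=O(m)$, even if some scans reach already-deleted endpoints, which are recognised and skipped in $O(1)$ time each. The per-iteration bookkeeping---maintaining the prefix sizes $|B_i|$, comparing with $n^{1/k}|B_i|$, assembling the spanning tree, and adding the $L_{t+1}$-edges---costs $O(|B_t|+|L_{t+1}|)$, which is dominated by the scans except for an additive $O(t)=O(|B_t|)$ per cluster that sums to $O(n)=O(m)$ (the graph is connected, so $m\ge n-1$). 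The two things I would be careful to spell out are the termination case $t=k-1$ and this amortisation argument; the rest is routine.
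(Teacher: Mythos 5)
Your proposal is correct and is essentially the paper's own construction (Algorithm~\ref{alg:HZ}): grow a ball from an arbitrary centre until the next layer fails to grow by a factor $n^{1/k}$, add the BFS tree of the ball together with one edge to each vertex of the next layer, delete the ball, and repeat, with the same charging argument for size, the same ``within-cluster path plus one crossing edge'' argument for stretch, and the same each-edge-scanned-$O(1)$-times argument for the $O(m)$ running time. The only cosmetic difference is that you cap the radius at $t=k-1$ explicitly (and handle that case separately in the size bound), whereas the paper derives $r\le k-1$ from the growth condition itself.
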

\begin{algorithm}[]
	\caption{$\texttt{HZ-Spanner}(G=(V,E),k)$}
	\label{alg:HZ}
    \DontPrintSemicolon
	$H=(V,\emptyset)$. $V'=V$. $n=|V|$. Throughout the algorithm, $G'$ denotes
    $G[V']$\;
    \While{$V'\ne\emptyset$}{
		Let $v\in V'$ be arbitrary vertex.\label{line:ChooseV}\;
		Let $r\in \mathbb{N}$ be minimal such that \label{line:ChooseR}
        $\left|B_{G'}(v,r)\right|\cdot
        n^{\frac{1}{k}}\ge\left|B_{G'}(v,r+1)\right|$.\;
		Let $T$ be a BFS tree in $B_{G'}(v,r+1)$, rooted at $v$\;
		$H\leftarrow H\cup T$\;
		$V'\leftarrow V'\setminus B_{G'}(v,r)$\;
    }
	\Return $H$\;
\end{algorithm}
\begin{proof}
	We analyze \Cref{alg:HZ}. Note that in line \ref{line:ChooseR}, necessarily $r\le k-1$ as otherwise 	
	\[
	\left|B_{G}(v,r)\right|\ge\left|B_{G}(v,k)\right|>\left|B_{G}(v,k-1)\right|\cdot n^{\frac{1}{k}}>\cdots>\left|B_{G}(v,k-i)\right|\cdot n^{\frac{i}{k}}>\left|B_{G}(v,0)\right|\cdot n^{\frac{k}{k}}=n~.
	\]
    To bound the stretch of $H$, consider an edge $e=(x,y)$. Let
    $v_x,r_x$ (resp. $v_y,r_y$) such that $x$ (resp. $y$) was removed from $V'$
    as part of $B_{G'}(v_x,r_x)$ (resp .$B_{G'}(v_y,r_y)$).
	If $v_x=v_y$, then $d_H(x,y)\le r_x+r_y\le 2(k-1)$. Otherwise, assume
    w.l.o.g that $v_x$ was removed before $v_y$. As $y$ is neighboring vertex
    of $B_{G'}(v_x,r_x)$, necessarily there is a vertex $z\in B_{G'}(v_x,r_x)$,
    such that we added $(z,y)$ to $H$. By triangle inequality
	\[d_H(x,y)\le d_H(x,z)+d_H(z,y)\le 2\cdot r_x+1\le 2k-1~.\]
	
	To bound the sparsity, note that when deleting $B_{G'}(v,r)$, we add  $|B_{G'}(v,r+1)|-1\le |B_{G'}(v,r)|\cdot n^{\frac{1}{k}}$ edges. Thus by charging $n^{\frac{1}{k}}$ on each deleted vertex, we can bound the total number of edges by $O(n^{1+\frac{1}{k}})$.
				
	The runtime is straightforward, as we consider each edge at most twice.
\end{proof}

\subsection{Modified \cite{HZ96} Spanner}\label{app:ModHZ}
\Cref{alg:HZ} picks an arbitrary vertex in line \ref{line:ChooseV} and grow a ball around it. Our spanner in \Cref{thm:SparseNoLight} uses \Cref{alg:HZ} as sub-procedure. However we will need additional property from the spanner. Specifically, we will prefer to pick a vertex with at least $n^{\frac{1}{k}}-1$ active neighbors.
The modified algorithm presented in \Cref{alg:Mod-HZ}. We denote by $\deg_{G'}(v)$, the degree of $v$ in $G'$.
\begin{algorithm}[]
	\caption{$\texttt{Modified-HZ-Spanner}(G=(V,E),k)$}
    \label{alg:Mod-HZ}
    \DontPrintSemicolon
	$H=(V,\emptyset)$. $V'=V$. $n=|V|$. Throughout the algorithm, $G'$ denotes
    $G[V']$\;
	\While{$V'\ne\emptyset$}{
		If possible, pick $v\in V'$ such that $\deg_{G'}(v)\ge
        n^{\frac{1}{k}}-1$. If not, pick  arbitrary vertex $v\in V'$.
        \label{line:ModChooseV}\;
		Let $r\in \mathbb{N}$ be minimal such that
        $\left|B_{G'}(v,r)\right|\cdot
        n^{\frac{1}{k}}\ge\left|B_{G'}(v,r+1)\right|$.\label{line:ModChooseR}\;
		Let $T$ be a BFS tree in $B_{G'}(v,r+1)$, rooted at $v$\;
		$H\leftarrow H\cup T$\;
		$V'\leftarrow V'\setminus B_{G'}(v,r)$\;
	}
	\Return $H$\;
\end{algorithm}
%
\ModHZ*
\begin{proof}
	The stretch and sparsity follows from \Cref{thm:HZ96} as we only specify (the prior arbitrary) order of choosing vertices in line \ref{line:ChooseV}.
	Property 2 follows as the radius chosen in line \ref{line:ModChooseR} bounded by $k-1$.
	Properties 1,3,4 are straightforward from  line \ref{line:ModChooseV} of \Cref{alg:Mod-HZ}. Thus we only need to bound the running time.
	
	It will be enough to provide an efficient way to pick vertices in line \ref{line:ModChooseV}.  We will maintain $\deg(v)$ for every vertex $v$, and a set $A$ of all the vertices with degree at least $n^{\frac{1}{k}}$.
	The degrees are computed in the beginning of the algorithm, and all the relevant vertices inserted to $A$. Then, in iteration $i$, after deleting $S_i$, we go over each deleted vertex, decrease the degree of each neighboring vertex, and update $A$ accordingly.
	Using $A$, the decision in line \ref{line:ModChooseV} can be executed in constant time.
	The maintenance of $A$ and the degrees can be done in $O(m)$ time, as we refer to each edge at most constant number of times.	
\end{proof}

\bibliographystyle{alpha}
\bibliography{spanners}
\end{document}